\spnewtheorem{ourdefinition}[definition]{Definition}{\bfseries}{}
\newcounter{theoremRem}
\newcommand{\calA}{\mathcal{A}} 
\newcommand{\calB}{\mathcal{B}}
\newcommand{\calC}{\mathcal{C}}
\newcommand{\calE}{\mathcal{E}}
\newcommand{\calF}{\mathcal{F}}
\newcommand{\calG}{\mathcal{G}}
\newcommand{\calL}{\mathcal{L}}
\newcommand{\calM}{\mathcal{M}}
\newcommand{\calP}{\mathcal{P}}
\newcommand{\calS}{\mathcal{S}}
\newcommand{\calT}{\mathcal{T}}
\newcommand{\bbS}{\mathbb{S}}
\newcommand{\bbG}{\mathbb{G}}
\newcommand{\sfQ}{\mathsf{Q}}
\newcommand{\sfR}{\mathsf{R}}
\newcommand{\K}[1]{\mathsf{K}_{<#1}}
\newcommand{\cycs}[1]{\mathsf{L}_{<#1}}
\renewcommand{\|}{\! \mid \!}
\renewcommand{\phi}{\varphi}
\renewcommand{\epsilon}{\varepsilon}
\newcommand{\mybar}[1]{\overline{#1}}
\newcommand{\Nat}{\mathbb{N}}
\newcommand{\powset}{\mathcal{P}}
\newcommand{\isdef}{\mathrel{:=}}
\newcommand{\nada}{\varnothing}
\newcommand{\impl}{\rightarrow} 
\newcommand{\liff}{\leftrightarrow} 
\newcommand{\Land}{\bigwedge}
\renewcommand{\land}{\wedge}
\renewcommand{\lor}{\vee}
\newcommand{\proves}{\vdash}
   \newcommand{\cproves}{\proves^c}
\newcommand{\Prop}{\mathsf{Prop}}
\newcommand{\Act}{\mathsf{Act}}
\newcommand{\Voc}{\mathsf{Voc}}
\newcommand{\conv}[1]{\breve{#1}}
\newcommand{\sat}{\Vdash}
\newcommand{\muML}{\calL_{\mu}}
\newcommand{\FL}{\mathsf{FL}}
\newcommand{\FLN}{\FL^{\neg}}
\newcommand{\lbox}{\scalebox{0.8}{$\square$}}
\newcommand{\lm}{\mbox{\ooalign{\ld \cr \hidewidth\raise.05ex\hbox{$* \mkern3.1mu$}\cr}}} 
\newcommand{\lbm}{\ooalign{$\lbox$ \cr \hidewidth\raise.05ex\hbox{$* \mkern5.5mu$}\cr}\hspace{-0.1cm}} \newcommand{\lcm}{\ooalign{$\lbox$ \cr \hidewidth\raise.05ex\hbox{$\cdot \mkern2.9mu$}\cr}}
\newcommand{\ldiap}[1][\alpha]{\ensuremath{\langle#1\rangle}\xspace}
\newcommand{\lboxp}[1][\alpha]{\ensuremath{[#1]}\xspace}
\newcommand{\edia}[2]{\lozenge(#1,#2)}
\newcommand{\PDL}{\ensuremath{\mathsf{PDL}}\xspace}
\newcommand{\CPDL}{\ensuremath{\mathsf{CPDL}}\xspace}
\newcommand{\CPDLf}{\ensuremath{\mathsf{CPDL}_f}\xspace}
\newcommand{\SCPDLf}{\ensuremath{\mathsf{SCPDL}_f}\xspace}
\newcommand{\SCPDLInfty}{\ensuremath{\SCPDLf^\infty}\xspace}
\newcommand{\CPDLInfty}{\ensuremath{\mathsf{CPDL}_f^\infty}\xspace}
\newcommand{\edge}{\lessdot} 
\newcommand{\edgeT}{<} 
\newcommand{\edgeRT}{\leq} 
\newcommand{\cEdge}{\lhd} 
\newcommand{\cEdgeT}{\lhd^{+}} 
\newcommand{\cEdgeRT}{\lhd^{*}} 
\newcommand{\qedge}{\lessdot_{\sfQ}} 
\newcommand{\Plab}{\Lambda} 
\newcommand{\Qlab}{\Psi} 
\newcommand{\cEquiv}{\equiv_c}
\newcommand{\FRC}{\calF_{C}}
\newcommand{\FRCp}{\calF_{C^+}}
\newcommand{\Ru}{\ensuremath{\mathsf{R}}\xspace}
\newcommand{\AxLit}{\ensuremath{\mathsf{Ax1}}\xspace}
\newcommand{\AxBot}{\ensuremath{\mathsf{Ax2}}\xspace}
\newcommand{\RuOr}{\ensuremath{{\lor}}\xspace}
\newcommand{\RuAnd}{\ensuremath{{\land}}\xspace}
\newcommand{\RuDia}[1][\ensuremath{a}]{\ensuremath{\mathsf{{\ldiap[#1]}}}\xspace}
\newcommand{\RuWeak}{\ensuremath{\mathsf{weak}}\xspace}
\newcommand{\RuU}{\ensuremath{\mathsf{u}}\xspace}
\newcommand{\RuF}{\ensuremath{\mathsf{f}}\xspace}
\newcommand{\RuCut}{\ensuremath{\mathsf{cut}}\xspace}
\newcommand{\RuACut}{\ensuremath{\mathsf{acut}}\xspace}
\newcommand{\RuDiaP}{\ensuremath{\mathsf{{\ldiap[a]}}}\xspace}
\newcommand{\RuConB}{\ensuremath{\mathsf{{\lboxp[;]}}}\xspace}
\newcommand{\RuConD}{\ensuremath{\mathsf{{\ldiap[;]}}}\xspace}
\newcommand{\RuChoiceD}{\ensuremath{\mathsf{{\ldiap[\cup]}}}\xspace}
\newcommand{\RuChoiceB}{\ensuremath{\mathsf{{\lboxp[\cup]}}}\xspace}
\newcommand{\RuStarD}{\ensuremath{\mathsf{{\ldiap[*]}}}\xspace}
\newcommand{\RuStarB}{\ensuremath{\mathsf{{\lboxp[*]}}}\xspace}
\newcommand{\RuTestD}{\ensuremath{\mathsf{{\ldiap[?]}}}\xspace}
\newcommand{\RuTestB}{\ensuremath{\mathsf{{\lboxp[?]}}}\xspace}
\newcommand{\cyclicPT}{\calT_{\pi}^C}
\newcommand{\tracestep}{\to_{C}}
\newcommand{\trace}{\twoheadrightarrow_{C}}
\newcommand{\equic}{\equiv_C}
\newcommand{\anColor}[1]{\textcolor{black}{#1}}
\newcommand{\EG}{\mathcal{E}}
\newcommand{\Own}{O}
\newcommand{\WC}{W}
\newcommand{\Om}{\Omega}
\newcommand{\last}{\mathsf{last}}
\newcommand{\first}{\mathsf{first}}
\newcommand{\PM}{\mathit{PM}}
\newcommand{\eloi}{\exists}
\newcommand{\abel}{\forall}
\newcommand{\Win}{\mathit{Win}}
\newcommand{\Inf}{\mathsf{Inf}}
\newcommand{\Seq}{\mathsf{Seq}}
\newcommand{\Inst}{\mathsf{Inst}}
\newcommand{\conc}{\mathsf{conc}}
\newcommand{\tom}[1]{\overset{#1}{\to}}
\newcommand{\mng}[1]{\llbracket #1 \rrbracket}
    \newcommand{\mngS}[1]{\llbracket #1 \rrbracket^{\bbS}}
\newcommand{\gmng}[1]{\llparenthesis #1 \rrparenthesis}
    \newcommand{\gmngS}[1]{\llparenthesis #1 \rrparenthesis^{\bbS}}
\newcommand{\exop}[1]{\langle #1 \rangle}
\newcommand{\unop}[1]{[ #1 ]}
\newenvironment{tbs}{%
\begin{itemize}\tt }{\end{itemize}}
\newcommand{\btbs}{\begin{tbs}}
\newcommand{\etbs}{\end{tbs}}
\title{Interpolation for Converse PDL}
\author{Johannes Kloibhofer\inst{1}\thanks{%
   The research of this author has been made possible by a grant from the 
   Dutch Research Council NWO, project nr. 617.001.857.
   }
\and Valentina Trucco Dalmas\inst{2}{}
\and Yde Venema\inst{1}}
\institute{ILLC, University of Amsterdam, Netherlands \\
\email{{\{j.kloibhofer, y.venema\}}@uva.nl}
\and University of Groningen, Netherlands \\
\email{f.c.trucco.dalmas@rug.nl}}
\authorrunning{J. Kloibhofer et al.}
\begin{document}
\maketitle
\thispagestyle{firstpage}

\begin{abstract}
Converse \PDL is the extension of propositional dynamic logic with a converse operation on programs.
Our main result states that Converse \PDL enjoys the (local) Craig Interpolation Property, with respect to both atomic programs and propositional variables.
As a corollary we establish the Beth Definability Property for the logic.

Our interpolation proof is based on an adaptation of Maehara's proof-theoretic method.
For this purpose we introduce a sound and complete cyclic sequent system for this logic.
This calculus features an analytic cut rule and uses a focus mechanism for recognising successful cycles.

\keywords{propositional dynamic logic \and 
converse modalities \and
cyclic proof system \and
interpolation
}
\end{abstract}


\section{Introduction}

Propositional Dynamic Logic (abbreviated: \PDL) was introduced by Fischer \& Ladner~\cite{FL:PDL1979} in 1979 as a propositional formalism to reason about the behaviour of programs.
The language of \PDL features an infinite collection of modalities, the intended interpretation of $\ldiap\phi$ being that `after some execution of the program $\alpha$, the formula $\phi$ holds'.
The inductive structure of programs is reflected by the syntax of \PDL, where complex programs are constructed from atomic ones and formula tests, by means of program constructors for sequential composition, nondeterministic choice and iteration.

Converse \PDL or \CPDL, also defined in~\cite{FL:PDL1979}, extends \PDL with a converse operator on programs, which facilitates backwards reasoning about programs.
\PDL and \CPDL also have applications in for instance knowledge representation~\cite{baad:desc07}, where the program expressions represent \emph{roles} between objects, and the program constructions correspond to natural operations on such roles; in particular, the converse operator corresponds to \emph{inverse roles}.

\PDL and \CPDL both have the small-model property and an \textsc{Exptime}-complete satisfiability problem, as established by Fischer \& Ladner~\cite{FL:PDL1979} and Pratt~\cite{prat:near80}).
A natural axiomatisation was given by Segerberg~\cite{sege:comp77} and proved to be complete by Parikh~\cite{pari:comp78} and others.
Generally, \PDL and related formalisms have been recognized as important modal logics for quite some time now, see for instance Troquard \& Balbiani~\cite{troq:prop23} for a recent survey.

Here we study interpolation properties for \CPDL.
A logic has the \emph{Craig Interpolation Property} (CIP) if any pair $\phi,\psi$ of formulas such that $\phi \models \psi$ has an \emph{interpolant}, that is, a formula $\theta$ in the common vocabulary of $\phi$ and $\psi$ such that $\phi \models \theta$ and $\theta \models \psi$.
(Here $\models$ denotes the local consequence relation, that is: we have $\phi \models \psi$ iff the implication $\phi\to\psi$ is valid.)
A related property is the Beth Definability Property, which informally states that any concept which can implicitly be defined in the logic, in fact also has an explicit definition.
Interpolation and definability are generally considered to be attractive properties of a formalism; applications in computer science include the modularisation and optimisation of reasoning~\cite{mcmi:appl05,rena:inte08,cate:beth13,jung:sepa21}.
Both properties have been studied extensively in the literature on modal logic, see Gabbay \& Maksimova~\cite{gabb:inte05} for a survey.

Our main result states that \CPDL has the Craig Interpolation Property indeed, with respect to both atomic programs and propositional variables.
As a direct corollary we obtain the Beth Definability Property for \CPDL. 

We prove Craig interpolation following Maehara's proof-theoretic method, adapted to our logic. 
That is, we first introduce a sound and complete cyclic proof system for this logic.
This calculus features an analytic cut rule and uses a focus mechanism for recognising successful cycles.
To prove interpolation we consider a version of the proof system that operates on \emph{split sequents}, that is, pairs of finite sets of formulas.
Given a pair $\phi,\psi$ of formulas such that $\phi\models\psi$, we fix a derivation $\pi$ of the split sequent $\mybar{\psi}\|\phi$.
This derivation induces a system of equations, whose solution would immediately yield an interpolant of $\phi$ and $\psi$.
As in the case for standard \PDL, a straightforward approach would result in a system of equations that cannot be solved \emph{inside} \CPDL itself.
To solve this problem we employ an idea from Borzechowski~\cite{borz:tabl88}:
based on an \emph{auxiliary structure} rather than on $\pi$ itself, we set up 
an alternative system of equations, which \emph{does} have a solution in \CPDL, 
thus providing us with the desired interpolant.
\smallskip 

\noindent
\textit{Related work}
Maehara's method has been extended to cyclic proof systems by Shamkanov~\cite{Shamkanov2014}, Afshari \& Leigh~\cite{Afshari2019} and Marti \& Venema~\cite{mart:focu21}, in order to prove interpolation properties for, respectively, G\"odel-L\"ob logic, the modal $\mu$-calculus $\muML$ and its alternation-free fragment.
Closely related to our work is the paper by Borzechowski et al.~\cite{borz:prop25}, who use ideas from Borzechowski~\cite{borz:tabl88} to establish interpolation for \PDL.
Our work extends~\cite{borz:prop25} to include converse modalities: notable differences are that our proof system features a cut rule and a more involved modal rule, while on the other hand our rules for the program constructors are simpler than those in~\cite{borz:prop25}.
Kloibhofer \& Venema establish Craig interpolation for the two-way $\mu$-calculus~\cite{kloi:inte25}.
Our cyclic proof system is much simpler, since it avoids the use of trace atoms and can restrict to a simple focus annotation system; the construction of our interpolants is rather more intricate, however.


\section{Converse PDL}
\label{s:cpdl}

\subsection{Syntax}

\begin{ourdefinition}
Let $\Prop$ be an infinite set of atomic propositions and $\Act$ an infinite set of atomic programs. 
We assume an involution operation $\conv{\cdot}: \Act \to \Act$ such that for every $a \in \Act$ it holds that $a \neq \conv{a}$ and $a = \conv{\conv{a}}$. The sets of \emph{formulas} and \emph{programs} of $\CPDL$ are given by the following mutual induction:
\begin{align*}
	\phi &::= \top \| \bot \| p \| \mybar{p} \| \phi \land \phi \| \phi \lor \phi \| \ldiap \phi \|  \lboxp \phi \\
		\alpha &::= a \| \alpha;\alpha \| \alpha \cup \alpha \| \alpha^* \| \phi?
\end{align*}
where $p \in \Prop$ and $a \in \Act$\footnote{%
   In our set-up the atomic actions come in pairs $a, \conv{a}$. This has some technical advantages over the approach where the converse operation $\conv{}\,$ is an explicit syntactic symbol.
   } 
.
\end{ourdefinition}

The \emph{vocabulary of $\phi$}, written $\Voc(\phi)$, is the set of propositions and actions occurring in $\phi$, with the proviso that we include both $a$ and $\conv{a}$ in the vocabulary of any expression in which $a$ or $\conv{a}$ occurs.
We refer to formulas of the form $\ldiap[\alpha]\phi$ and $\lboxp[\alpha]\phi$ as, respectively, \emph{diamond} and \emph{box formulas}.
Formulas of the form $\ldiap[\alpha^*]\phi$ or $\lboxp[\alpha^*]\phi$ are called \emph{fixpoint formulas}.

We think of the formula $\mybar{p}$ as the negation of $p$, and inductively extend the map $p \mapsto \mybar{p}$ to a full-blown negation operation as follows:
\[\begin{array}{lllclllclllclll}
   \mybar{\top} & \isdef & \bot
  && &&
  && \mybar{\phi \land \psi} & \isdef & \mybar{\phi} \lor \mybar{\psi}
  && \mybar{\lboxp \phi} & \isdef & \ldiap \mybar{\phi} 
\\ \mybar{\bot} & \isdef & \top \qquad
  && \mybar{\mybar{p}} & \isdef & p  \qquad
  && \mybar{\phi \lor \psi} & \isdef & \mybar{\phi} \land \mybar{\psi} \qquad
  && \mybar{\ldiap \phi} & \isdef & \lboxp \mybar{\phi} 
\end{array}\]
For a set of formulas $\Gamma$ we define $\mybar{\Gamma}= \{\mybar{\phi} \| \phi \in \Gamma\}$.
Note that $\mybar{\mybar{\phi}} = \phi$ for every formula $\phi$. 

Similarly, we extend the converse operator to arbitrary programs by putting 
$(\alpha;\beta)\conv{\,} \isdef \conv{\beta};\conv{\alpha}$, $(\alpha\cup\beta)\conv{\,} \isdef \conv{\alpha}\cup\conv{\beta}$,
$(\alpha^{*})\conv{\,} \isdef \conv{\alpha}^{*}$ and 
$(\tau?)\conv{\,} \isdef \tau?$.
Thus we may think indeed of \CPDL as extending \PDL with a converse operation on programs.
   
We let $\phi[\psi/p]$ denote the result of substituting all occurrences of $p$ in the formula $\phi$ with the formula $\psi$
(and all occurrences of $\mybar{p}$ with $\mybar{\psi}$).

\begin{ourdefinition}
Let $\phi$ and $\psi$ be formulas. We write $\phi \tracestep \psi$ if 
\begin{enumerate}
\item $\phi = \chi_0 \land \chi_1$ and $\psi \in \{\chi_0,\chi_1\}$ or $\phi= \chi_0 \lor \chi_1$ and $\psi \in \{\chi_0,\chi_1\}$;
\item $\phi = \ldiap[a] \chi$ and $\psi = \chi$, or
$\phi = \lboxp[a] \chi$ and $\psi = \chi$;
\item $\phi = \ldiap[\alpha;\beta]\chi$ and $\psi = \ldiap[\alpha] \ldiap[\beta]\chi$, 
or $\phi = \lboxp[\alpha;\beta]\chi$ and $\psi = \lboxp[\alpha] \lboxp[\beta]\chi$;
\item $\phi = \ldiap[\alpha\cup\beta]\chi$ and $\psi \in \{ \ldiap[\alpha]\chi, \ldiap[\beta]\chi \}$,
or $\phi = \lboxp[\alpha\cup\beta]\chi$ and $\psi \in \{ \lboxp[\alpha]\chi, \lboxp[\beta]\chi \}$;
\item $\phi = \ldiap[\tau?]\chi$ and $\psi \in \{ \tau, \chi\}$,
or $\phi = \lboxp[\tau?]\chi$ and $\psi \in \{ \mybar{\tau}, \chi \}$;
\item $\phi = \ldiap[\alpha^*] \chi$ and $\psi \in \{\ldiap \ldiap[\alpha^*] \chi, \chi \} $,
or $\phi = \lboxp[\alpha^*] \chi$ and $\psi \in \{ \lboxp \lboxp[\alpha^*] \chi, \chi\}$.
\end{enumerate}
	
A \emph{trace} is a sequence $(\phi_n)_{0\leq n<\kappa}$ (with $\kappa \leq \omega$) such that $\phi_{n}\tracestep\phi_{n+1}$ for all $i<\kappa$.
We define the \emph{trace relation} $\trace$ on formulas as the reflexive-transitive closure of $\tracestep$.
We write $\phi \equic \psi$ if both $\phi \trace \psi$ and $\psi \trace \phi$.
We define the \emph{Fischer-Ladner closure} of a sequent $\Gamma$ as the set $\FL(\Gamma) \isdef \{ \psi \mid \phi \trace \psi \text{ for some }\phi \in \Gamma\}$ and put $\FLN(\Gamma) \isdef \FL(\Gamma) \cup \FL(\mybar{\Gamma})$. 
\end{ourdefinition}

\begin{proposition}
\label{p:inftr}
Let $t = (\phi_{n})_{n<\omega}$ be an infinite trace.
Then infinitely many $\phi_{n}$ are fixpoint formulas, and either cofinitely many $\phi_{n}$ are diamond formulas, or cofinitely many $\phi_{n}$ are box formulas.
\end{proposition}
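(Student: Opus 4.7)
My plan splits along the two claims of the proposition.

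For the first claim --- that infinitely many $\phi_n$ are fixpoint formulas --- I would introduce a weight function $w$ on formulas defined by mutual recursion with a weight $w_p$ on programs, designed so that (i) $w$ is invariant under negation, so that the diamond and box branches of clause~(5) are handled uniformly; (ii) every $\tracestep$-step of clauses~(1)--(5) and the exit branch of clause~(6) strictly decreases $w$; and (iii) the unfolding branch of clause~(6), $\ldiap[\alpha^*]\chi \tracestep \ldiap[\alpha]\ldiap[\alpha^*]\chi$ and its box analogue, strictly \emph{increases} $w$. One workable definition puts $w(\ldiap[\alpha]\chi) = w(\lboxp[\alpha]\chi) = w_p(\alpha) + w(\chi)$, $w(\chi_1 \land \chi_2) = w(\chi_1 \lor \chi_2) = w(\chi_1) + w(\chi_2) + 1$, $w_p(a) = 1$, $w_p(\alpha^*) = w_p(\alpha)$, $w_p(\tau?) = w(\tau) + 1$, and adds a $+1$ slack for sequential composition and choice in $w_p$. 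Properties~(i)--(iii) reduce to a routine case analysis. Since $\FL(\phi_0)$ is finite, $w(\phi_n)$ is bounded; if only finitely many $\phi_n$ were fixpoint formulas (hence only finitely many unfolding steps), then cofinitely many steps would strictly decrease $w$, yielding an infinite strictly descending chain in the positive integers, which is impossible.

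For the second claim --- that cofinitely many $\phi_n$ are diamonds or cofinitely many are boxes --- I would apply pigeonhole to the finitely many fixpoint formulas in $\FL(\phi_0)$ and single out some fixpoint formula $\phi^\star$ occurring infinitely often in the trace. Assume without loss of generality that $\phi^\star = \ldiap[\alpha^*]\chi$ is a diamond fixpoint; the box case is symmetric. I would then establish a ``no-reintroduction'' principle: whenever $\phi \tracestep \phi'$ uses one of the subformula-extracting rules~(1), (2), (5), or the exit branch of (6), the original $\phi$ is not $\tracestep$-reachable from $\phi'$, because $\tracestep$ only decomposes formulas or unfolds already-present star programs and never reconstructs a discarded outer modality. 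Applied to $\phi^\star$: an exit step $\phi^\star \tracestep \chi$ would render $\phi^\star$ unreachable, contradicting the hypothesis on $\phi^\star$. Hence cofinitely often every visit to $\phi^\star$ is followed by an unfolding step to $\ldiap[\alpha]\phi^\star$, which is again a diamond formula; the decomposition back to $\phi^\star$ must then stay within diamond formulas, since any detour through a non-diamond would, again by no-reintroduction, prevent the return to $\phi^\star$. Consequently, cofinitely many $\phi_n$ are diamond formulas.

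The main obstacle is the ``no-reintroduction'' principle itself, because the unfolding branch of clause~(6) does permit a kind of growth: $\ldiap[\alpha^*]\chi \tracestep \ldiap[\alpha]\ldiap[\alpha^*]\chi$ produces a syntactically larger formula which even has $\ldiap[\alpha^*]\chi$ as a subformula. The argument must carefully distinguish this benign unrolling of an already-present star program from the impossible recreation of a stripped-away outer modality. My plan is to formalise the principle by attaching to each $\psi \in \FL(\phi_0)$ an invariant --- essentially the set of original fixpoint subformulas of $\phi_0$ beneath which $\psi$ sits --- and to verify that every $\tracestep$-rule preserves this invariant, with subformula-extracting rules strictly shrinking it while unfolding leaves it unchanged.
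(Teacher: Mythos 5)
Your route to the first claim is workable: the weight function you describe strictly decreases under every step of clauses (1)--(5) and the exit branch of (6), strictly increases only under unfolding, and takes values in the natural numbers, so finitely many fixpoint positions would force an infinite descent. The genuine gap is the ``no-reintroduction'' principle, on which your entire argument for the second claim rests: it is false. Already for clause (2) we have the cycle $\ldiap[a]\ldiap[a^*]p \tracestep \ldiap[a^*]p \tracestep \ldiap[a]\ldiap[a^*]p$, so stripping an atomic modality is immediately undone by unfolding. Worse, for the exit branch of (6), which is the case you actually use: put $\phi^\star \isdef \ldiap[a^*]\ldiap[(a^*)^*]p$; the exit step $\phi^\star \tracestep \ldiap[(a^*)^*]p$ is reversed by unfolding the inner star, $\ldiap[(a^*)^*]p \tracestep \ldiap[a^*]\ldiap[(a^*)^*]p = \phi^\star$. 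Hence there is an infinite trace in which the diamond fixpoint formula $\phi^\star$ occurs infinitely often although \emph{every} visit to it is followed by an exit step --- refuting both your claim that ``an exit step would render $\phi^\star$ unreachable'' and the derived statement that cofinitely many visits are followed by unfoldings. The same cycles show that the invariant you sketch at the end (strictly shrinking under (1), (2), (5) and exit-(6), preserved by unfolding) cannot exist: around such a cycle it would have to decrease strictly and yet return to its initial value. Your second use of the principle, that the path from $\ldiap[\alpha]\phi^\star$ back to $\phi^\star$ cannot pass through a box formula, is left equally unsupported.

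What is missing is a \emph{minimality} choice rather than a global irreversibility property of $\tracestep$. The paper defines a length with key clause $| \ldiap[\alpha]\phi | = |\alpha| + |\phi|$, picks $\phi^\star$ of minimal length among the formulas occurring infinitely often on the trace, and fixes $k$ so that from stage $k$ on only such formulas occur. Minimality excludes the problematic branches outright: the exit branch $\phi^\star \tracestep \chi$ and the test branch to $\tau$ would produce formulas strictly shorter than $\phi^\star$, which cannot occur past stage $k$. A straightforward induction then shows that every $\phi_n$ with $n \geq k$ has the shape $\edia{\vec{\delta}}{\phi^\star}$ with each $\delta_i$ shorter than $\alpha$, hence is a diamond formula (and minimality also yields your first claim for free, since a non-fixpoint formula of minimal length would have only shorter derivatives). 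In the counterexample above the minimal recurring formula is $\ldiap[(a^*)^*]p$, not $\phi^\star$, and for it the exit branch, leading to $p$, really is blocked. So keep the pigeonhole step, but strengthen it to a least-length choice and replace the no-reintroduction lemma by this length-based exclusion together with the explicit shape invariant $\edia{\vec{\delta}}{\phi^\star}$.
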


\subsection{Semantics}

\CPDL-formulas are interpreted in (poly-modal) Kripke models.
It will be convenient for us to present the semantics in terms of an evaluation game
; see Appendix \ref{app.parityGames} for a definition of infinite games. 
This game-theoretic semantics is equivalent to the standard, compositional one; we refer to Appendix \ref{app.CPDL} for a definition of the compositional semantics and a proof of its equivalence. 

\begin{ourdefinition}\label{d.KripkeModels}
A \emph{Kripke model} $\bbS = (S,R, V)$ consists of a set $S$ of states; a family of binary  relations $R = \{R_a \subseteq S^2 \mid a \in \Act\}$ on $S$ such that $(s,s') \in R_a$ iff $(s',s) \in R_{\conv{a}}$; and a valuation $V: \Prop \to \powset(S)$. A \emph{pointed model} is a pair $(\bbS,s)$ where $\bbS$ is a Kripke model and $s \in S$.
\end{ourdefinition}

Let $\bbS = (S,R,V)$ be a Kripke model. The \emph{evaluation game} $\EG(\bbS)$ 
is the following infinite two-player game.
Its positions are pairs of the form $(\phi,s)$, where $\phi$ is a formula and $s\in  S$, and its ownership function and admissible moves are given in the following table.

\vspace{-10pt}
\begin{table}[htb]
	\begin{center}
		\begin{tabular}{|ll|c|l||ll|c|l|}
			\hline
			Position && Owner & Admissible moves & Position && Owner & Admissible moves \\
			\hline
			\multicolumn{2}{|l|}{$(\phi \lor \psi,s)$}   & $\eloi$   
			& $\{ (\phi,s), (\psi,s) \}$ 
			& \multicolumn{2}{|l|}{$(\phi \land \psi,s)$} & $\abel$ 
			& $\{ (\phi,s), (\psi,s) \}$
			\\ \multicolumn{2}{|l|}{$(\ldiap[a] \phi,s)$}        & $\eloi$ 
			& $\{ (\phi,t) \mid (s,t) \in R_a \}$ 
			& \multicolumn{2}{|l|}{$(\lboxp[a] \phi,s) $}       & $\abel$ 
			& $\{ (\phi,t) \mid (s,t) \in R_a\}$ 
			\\ \multicolumn{2}{|l|}{$(\ldiap[\alpha;\beta] \phi,s)$}        & -
			& $\{(\ldiap[\alpha]\ldiap[\beta] \phi,s)  \}$ 
			& \multicolumn{2}{|l|}{$(\lboxp[\alpha;\beta] \phi,s) $}       & -
			& $\{ (\lboxp[\alpha]\lboxp[\beta]\phi,s) \}$ 
			\\ \multicolumn{2}{|l|}{$(\ldiap[\alpha \cup \beta] \phi,s)$}        & $\eloi$ 
			& $\{ (\ldiap[\alpha] \phi,s), (\ldiap[\beta]\phi,s)\}$ 
			& \multicolumn{2}{|l|}{$(\lboxp[\alpha \cup \beta] \phi,s)$}        & $\abel$ 
			& $\{ (\lboxp[\alpha] \phi,s), (\lboxp[\beta]\phi,s)\}$ 	 
			\\ \multicolumn{2}{|l|}{$(\ldiap[\alpha^*] \phi,s)$}        & $\eloi$ 
			& $\{ (\ldiap[\alpha]\ldiap[\alpha^*] \phi,s), (\phi,s) \}$ 
			& \multicolumn{2}{|l|}{$(\lboxp[\alpha^*] \phi,s)$}        & $\abel$ 
			& $\{ (\lboxp[\alpha]\lboxp[\alpha^*] \phi,s), (\phi,s) \}$ 			
			\\ \multicolumn{2}{|l|}{$(\ldiap[\psi?] \phi,s)$}        & $\abel$ 
			& $\{ (\psi,s), (\phi,s) \}$ 
			& \multicolumn{2}{|l|}{$(\lboxp[\psi?] \phi,s)$}        & $\eloi$ 
			& $\{ (\mybar{\psi},s), (\phi,s) \}$ 
			\\ \hline
		\end{tabular}
	\end{center}
	\label{tb:EG}
\end{table}
 \vspace{-20pt}

Whenever a position $(\phi,s)$ is reached with $\phi = \bot, \top, p, \mybar{p}$, the game ends and is won by $\eloi$ if $\phi=\top$; $\phi = p$  and $s \in V(p)$; or $\mybar{p}$ and $s\notin V(p)$, and is won by $\abel$ else.
Note that the left projection $(\phi_{n})_{n<\kappa}$ of any (partial) match $(\phi_{n},s_{n})_{n<\kappa}$ is a trace.
An infinite match is won by $\abel$ if its left projection features infinitely many diamond fixpoint formulas (or, equivalently, cofinitely many diamond formulas) and by $\eloi$ else.
(This corresponds to the standard definition of winning conditions for evaluation games for fixpoint formulas, cf.~Proposition~\ref{p:inftr}.)
It is clear that this game can be presented as a parity game, and as such it is positionally determined.

\begin{ourdefinition}
\label{d:gsem}
Let $\bbS, s$ be a pointed model, let $f$ be a strategy for $\eloi$ in $\calE(\bbS)$ and let $\phi$ be a formula.
We write $\bbS, s \sat_f \phi$ if $f$ is winning for $\eloi$ at $(\phi,s)$, and $\bbS,s \sat \phi$ if $\bbS, s \sat_f \phi$ for some strategy $f$ for $\eloi$. 
We say that a formula $\phi$ is \emph{satisfiable} if there exists $\bbS,s$ such that $\bbS,s \sat \phi$ and \emph{unsatisfiable} otherwise. 
\end{ourdefinition}


\section{Proof system}\label{sec.proofSystem}

An \emph{annotated formula} is a pair $(\phi,o)$, usually denoted as $\phi^o$, where $o$ is either $u$ (\emph{unfocused}) or $f$ (\emph{in focus}). An \emph{annotated sequent} $\Gamma$ is a set of annotated formulas, such that at most one formula in $\Gamma$ is in focus. 
Given a set of formulas $\Delta$, we define the annotated sequent $\Delta^u \isdef \{\phi^u \| \phi \in \Delta\}$. 
For an annotated sequent $\Gamma$ we define $\Gamma^- \isdef \{\phi \| \phi^o \in \Gamma\}$ and $\Gamma^u \isdef \{\phi^u \| \phi^o \in \Gamma\}$. We read annotated sequents \emph{conjunctively} and say that $\Gamma$ is \emph{satisfiable} if $\Land \Gamma^-$ is satisfiable and call $\Gamma$ \emph{unsatisfiable} else. 
If no confusion is likely, we will call annotated sequents just sequents.

The rules of the proof system \CPDLf are given in Figure \ref{fig.rulesCPDL}. 
Note that the calculus aims to derive sequents that are \emph{unsatisfiable}.
Apart from the annotations the rules are as expected. In the rules \RuDia, \RuAnd, \RuOr, \RuConD, \RuConB, \RuChoiceD, \RuChoiceB, \RuTestD, \RuTestB, \RuStarD, \RuStarB and \RuWeak we call the single explicitly written formula in its conclusion the \emph{principal formula} of the rule. 
We only allow applications of rules $\Ru$ with a principal formula different from $\ldiap[\alpha]\phi$ if the principal formula is unfocused.
(Hence, to apply such a rule to a formula in focus, first a \RuU rule has to be applied.) The modal rule \RuDiaP is only allowed if its principal formula $\ldiap[a]\phi$ is in focus.

In applications of \RuACut we demand that $\phi \in \FLN(\Gamma)$. For a modal rule \RuDiaP with conclusion $\Theta = \ldiap[a] \phi^{\anColor{f}}, \lboxp[a] \Sigma, \Gamma$ the sequent $\ldiap[\conv{a}]\Gamma$ is defined as $\ldiap[\conv{a}]\Gamma \isdef \{\ldiap[\conv{a}]\chi^u \| \chi^u \in \Gamma \text{ and }\ldiap[\conv{a}]\chi \in \FLN(\Theta)\}$. 
 This ensures that all rules are analytic.
\begin{figure}[htb]
    \begin{mdframed}[align=center]
	\begin{center}{}
		\begin{minipage}{0.29\textwidth}
			\begin{prooftree}
				\hypo{}
				\infer[left label=\AxLit:]1{\phi^{\anColor{o_1}}, \mybar{\phi}^{\anColor{o_2}},\Gamma}
			\end{prooftree}
		\end{minipage}
		\begin{minipage}{0.23\textwidth}
			\begin{prooftree}
				\hypo{}
				\infer[left label=\AxBot:]1{\bot^{\anColor{u}},\Gamma}
			\end{prooftree}
		\end{minipage}
		\begin{minipage}{0.22\textwidth}
			\begin{prooftree}
				\hypo{\phi^{\anColor{u}},\psi^{\anColor{u}}, \Gamma}
				\infer[left label= \RuAnd:]1{\phi \land \psi^{\anColor{u}}, \Gamma}
			\end{prooftree}
		\end{minipage}
		\begin{minipage}{0.22\textwidth}
			\begin{prooftree}
				\hypo{\phi^u, \Gamma}
				\hypo{\psi^u, \Gamma}
				\infer[left label= \RuOr:]2{\phi \lor \psi^u, \Gamma }
			\end{prooftree}
		\end{minipage}
	\end{center}
	
	\begin{center}{}
		\begin{minipage}{0.35\textwidth}
			\begin{prooftree}
				\hypo{\ldiap[\alpha]\ldiap[\beta]\phi^{\anColor{o}}, \Gamma}
				\infer[left label=\RuConD:]1{\ldiap[\alpha;\beta]\phi^{\anColor{o}},\Gamma}
			\end{prooftree}
		\end{minipage}		
		\begin{minipage}{0.29\textwidth}
			\begin{prooftree}
				\hypo{\lboxp[\alpha]\lboxp[\beta]\phi^{\anColor{u}}, \Gamma}
				\infer[left label=\RuConB:]1{\lboxp[\alpha;\beta]\phi^{\anColor{u}},\Gamma}
			\end{prooftree}
		\end{minipage}
		\begin{minipage}{0.32\textwidth}
			\begin{prooftree}
				\hypo{\ldiap[\alpha]\phi^{\anColor{o}}, \Gamma}
				\hypo{\ldiap[\beta]\phi^{\anColor{o}},\Gamma}
				\infer[left label=\RuChoiceD:]2[]{\ldiap[\alpha \cup \beta]\phi^{\anColor{o}},\Gamma}
			\end{prooftree}
		\end{minipage}		
	
	\end{center}
	
	\begin{center}{}
		\begin{minipage}{0.35\textwidth}
			\begin{prooftree}
				\hypo{\ldiap[\alpha] \ldiap[\alpha^*] \phi^{\anColor{o}},\Gamma}
				\hypo{\phi^{\anColor{o}},\Gamma}
				\infer[left label= \RuStarD:]2{ \ldiap[\alpha^*] \phi^{\anColor{o}},\Gamma}
			\end{prooftree}
		\end{minipage}
		\begin{minipage}{0.29\textwidth}
			\begin{prooftree}
				\hypo{\lboxp[\alpha]\lboxp[\alpha^*] \phi^u, \phi^u, \Gamma}
				\infer[left label= \RuStarB:]1{\lboxp[\alpha^*] \phi^u, \Gamma}
			\end{prooftree}
		\end{minipage}
		\begin{minipage}{0.32\textwidth}
			\begin{prooftree}
				\hypo{\lboxp[\alpha]\phi^{\anColor{u}},\lboxp[\beta]\phi^{\anColor{u}},\Gamma}
				\infer[left label=\RuChoiceB:]1{\lboxp[\alpha \cup \beta]\phi^{\anColor{u}},\Gamma}
			\end{prooftree}
		\end{minipage}
		
	\end{center}
		
	\begin{center}{}
		\begin{minipage}{0.35\textwidth}
			\begin{prooftree}
				\hypo{\psi^{\anColor{u}}, \phi^{\anColor{o}}, \Gamma}
				\infer[left label=\RuTestD:]1{\ldiap[\psi?]\phi^{\anColor{o}},\Gamma}
			\end{prooftree}
		\end{minipage}		
		\begin{minipage}{0.29\textwidth}
			\begin{prooftree}
				\hypo{\mybar{\psi}^u, \Gamma}
				\hypo{\phi^u,\Gamma}
				\infer[left label=\RuTestB:]2{[\psi?]\phi^u, \Gamma}
			\end{prooftree}
		\end{minipage}
		\begin{minipage}{0.32\textwidth}
			\begin{prooftree}
				\hypo{\phi^{\anColor{f}}, \Sigma, \ldiap[\conv{a}] \Gamma}
				\infer[left label= \RuDiaP:]1{ \ldiap[a] \phi^{\anColor{f}}, \lboxp[a] \Sigma,\Gamma}
			\end{prooftree}
		\end{minipage}	
	\end{center}

	\begin{center}{}
		\begin{minipage}{0.29\textwidth}
			\begin{prooftree}
				\hypo{\Gamma, \phi^u}
				\hypo{\mybar{\phi}^{\anColor{u}},\Gamma}
				\infer[left label= \RuACut:]2{\Gamma}
			\end{prooftree}
		\end{minipage}
		\begin{minipage}{0.23\textwidth}
			\begin{prooftree}
				\hypo{\Gamma}
				\infer[left label=\RuWeak:]1{\phi^{\anColor{u}},\Gamma}
			\end{prooftree}
		\end{minipage}	
		\begin{minipage}{0.22\textwidth}
		\begin{prooftree}
			\hypo{\phi^{\anColor{f}},\Gamma}
			\infer[left label= \RuF:]1{ \phi^{\anColor{u}}, \Gamma}
		\end{prooftree}
		\end{minipage}
		\begin{minipage}{0.22\textwidth}
			\begin{prooftree}
				\hypo{\phi^{\anColor{u}},\Gamma}
				\infer[left label= \RuU:]1{ \phi^{\anColor{f}}, \Gamma}
			\end{prooftree}
		\end{minipage}			
	\end{center}
    \end{mdframed}
	\caption{Rules of \CPDLf}
	\label{fig.rulesCPDL}
\end{figure}
 \vspace{-10pt}

\begin{ourdefinition}[Derivation]\label{def.CPDLfDerivation}
A \emph{\CPDLf derivation} $\pi = (T,\edge,\Plab,\sfR)$ is a proof tree defined from the rules in Figure \ref{fig.rulesCPDL} such that $(T,\edge)$ is a, possibly infinite, tree with nodes $T$ and parent relation\footnote{We write $s \edge t$ if $s$ is the parent of $t$.} $\edge$;
$\Plab$ maps each node $u \in T$ to an annotated sequent $\Plab_u$; and	$\sfR$ is a function that maps every node $u \in T$ to either (i) the name of a rule in Figure \ref{fig.rulesCPDL} or (ii) an extra value $e$, such that every node labelled with $e$ is a leaf.	Leaves labelled by $e$ are called \emph{assumptions}.
We require the maps $\Plab$ and $\sfR$ to be in accordance with the formulation of the rules, as usual.
	
An assumption $l$ is called a \emph{repeat leaf}  if it has a proper ancestor $v$ of $l$ that such that $\Plab_l = \Plab_v$. In this case the nearest such ancestor of $l$ is called its \emph{companion} and we denote it by $c(l)$.
\end{ourdefinition}
	
Given a \CPDLf derivation $\pi = (T,\edge,\Plab,\sfR)$ we define the usual \emph{proof tree} $\calT_\pi = (T,\edge)$ and the \emph{proof tree with back edges} $\cyclicPT = (T,\cEdge)$, where $\cEdge = \edge \cup \{(l,c(l)) \mid l \text{ is a repeat leaf}\}$. 
Note that we think of trees as growing \emph{upwards}. We call $\pi$ \emph{regular} if it has finitely many distinct subderivations. 
We define $\edgeT$ and $\edgeRT$ to be the transitive and reflexive-transitive closures of $\edge$, respectively, and $\cEdgeT$ and $\cEdgeRT$ to be the transitive and reflexive-transitive closures of $\cEdge$, respectively.

\begin{ourdefinition}
A finite path $\tau$ in a \CPDLf derivation is \emph{successful} if 
\begin{enumerate}
\item every sequent on $\tau$ has a formula in focus, and 
\item there is a node on $\tau$ where the formula in focus is principal.
\end{enumerate}
	
Let $l$ be a  repeat leaf in a \CPDLf derivation  $\pi$ with companion  $c(l)$, and let $\tau_l$ denote the \emph{repeat path of $l$} in $\calT_{\pi}$ from $c(l)$ to $l$. 
We say that $l$ is a \emph{discharged assumption} if the path $\tau_l$ is successful.		
A leaf is called \emph{closed} if it is either a discharged assumption or labelled by an axiom, and is called \emph{open} otherwise.
\end{ourdefinition}

\begin{ourdefinition}[Cyclic proof]
\label{def.CPDLfProof}
Let $\calA$ be a set of sequents. A \CPDLf \emph{proof with assumptions} $\calA$ is a finite \CPDLf derivation $\pi$, where every leaf of $\pi$ is either closed or labeled by a sequent in $\calA$.
If the root of such a proof $\pi$ is labeled with a sequent $\Gamma$, we write $\pi:\calA \proves \Gamma$.
We say that \CPDLf \emph{proves} a sequent $\Gamma$ with assumptions $\calA$ and write $\calA \proves \Gamma$, if we have $\pi:\calA \proves \Gamma$ for some $\pi$.
If $\calA$ is empty, we write $\proves \Gamma$. 
For a set of unannotated formulas $\Delta$ we define $\proves \Delta$ as $\proves \Delta^u$. 
\end{ourdefinition}

The following theorem states the soundness and completeness of this system.
This result follows as a special case of the soundness and completeness of the split proof system that we introduce in the next section.
\begin{theorem}[Soundness \& Completeness]
A sequent $\Gamma$ is unsatisfiable iff $\proves \Gamma$.
\end{theorem}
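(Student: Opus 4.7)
The plan is to derive this result from the soundness and completeness of the split sequent calculus introduced in the next section. Any annotated sequent $\Gamma$ can be viewed as the split sequent $\nada\,\|\,\Gamma$, on which the split rules specialise to the \CPDLf rules of Figure \ref{fig.rulesCPDL}; hence both directions of the theorem reduce to the corresponding statements for the split system once those are established.

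The soundness half ($\proves \Gamma$ implies $\Gamma$ unsatisfiable) then follows the standard focus-style argument for cyclic proofs. Assume $\pi\colon{\proves}\,\Gamma$ and suppose, for contradiction, that $\bbS,s \sat \Land \Gamma^{-}$. Using \Eloise's positional winning strategy in $\EG(\bbS)$, one selects a path through $\pi$ along which every sequent is satisfied at some state of $\bbS$: at a leaf one either derives an immediate contradiction from an axiom, or, at a discharged assumption, follows the back-edge in $\cyclicPT$ to its companion. This produces an infinite traversal of $\cyclicPT$ crossing at least one successful repeat path infinitely often, from which one extracts an infinite trace $(\phi_n)_{n<\omega}$ of satisfiable formulas whose focus is never dropped and is principal infinitely often. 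By Proposition \ref{p:inftr} such a trace contains cofinitely many diamond formulas; but then it is lost by \Eloise in $\EG(\bbS)$, contradicting the satisfiability of $\phi_0$.

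The completeness half ($\Gamma$ unsatisfiable implies $\proves \Gamma$) is obtained by a fair proof-search procedure that saturates the non-modal rules, fires $\RuDiaP$ as soon as a focused diamond becomes principal, and uses $\RuU$ and $\RuF$ to cycle the focus through the available diamond formulas. Analyticity of the calculus, and in particular the restriction of $\RuACut$ to $\FLN(\Gamma)$ and the $\FLN$-bounded definition of $\ldiap[\conv{a}]\Gamma$ in $\RuDiaP$, confines the search to finitely many distinct sequents, so that any repetition may be turned into an assumption and the resulting derivation is regular. If this derivation fails to be a proof, then some non-discharged repeat leaf or some open non-axiomatic leaf witnesses a winning \Abelard-strategy in $\EG(\bbS')$ for a suitable pointed model $(\bbS',s')$ satisfying $\Gamma$, contradicting unsatisfiability.

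The principal obstacle, and the main reason for carrying out the argument in the more general split setting, is completeness under the converse operator. The modal rule $\RuDiaP$ propagates sequent formulas backwards through $\ldiap[\conv{a}]$, and one must verify that these converse companions remain inside $\FLN$ of the root sequent so that the proof-search tree is finite up to repetition. Correspondingly, the countermodel extracted from a failed search has to realise both the forward requirements of diamonds and the backward requirements imposed by inverse roles in a coherent way, which is substantially more delicate than in the converse-free setting of plain \PDL.
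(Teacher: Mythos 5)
Your proposal takes essentially the same route as the paper: there the theorem is obtained precisely as the special case (the split sequent $\nada \| \Gamma$, with all rules applied to one component) of the soundness and completeness of the split system, and your sketches of those two halves coincide in substance with the paper's game-theoretic arguments in the appendix --- tracing a satisfying path through a proof to extract an eventually-all-diamond focused trace, and building a countermodel of local paths from a failed (Builder-won) proof search. The only blemish is a wording slip at the end: a failed leaf yields a pointed model in which \emph{Eloise} (not Abelard) wins $\EG(\bbS')$ at the positions coming from $\Gamma$, i.e.\ a witness of satisfiability.
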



\section{Split proof system}

One of the core ideas underlying Maehara's proof-theoretic approach towards Craig interpolation is to work with a version of the derivation system that operates on so-called \emph{split sequents}.
Here a \emph{split sequent} $(\Gamma, \Delta)$, usually written as $\Gamma \| \Delta$, is a pair of annotated sequents, of which at most one can be focused.
Note that we do not require $\Gamma$ and $\Delta$ to be disjoint.
Given a split sequent $\Sigma = \Gamma \| \Delta$, we will write $\Sigma^l$ for its \emph{left component} $\Gamma$ and $\Sigma^r$ for its \emph{right component} $\Delta$. We will use $d$ as a variable ranging over the set $\{l,r\}$. 
If $\Gamma$ (respectively $\Delta$) contains a formula in focus, we call $\Gamma$ (respectively $\Delta$) 
the \emph{focused component} of $\Gamma \| \Delta$. 

The rules of the split proof system $\SCPDLf$ are obtained from the rules of \CPDLf by applying the rules to one of the components. 
Formally, let \newline $\begin{prooftree}
	\hypo{\Gamma_1,\Delta_1}
	\hypo{\cdots}
	\hypo{\Gamma_n,\Delta_n}
	\infer[left label= \Ru]3[]{\Gamma, \Delta}
\end{prooftree}$ be a rule of Figure \ref{fig.rulesCPDL}. 
Then $\begin{prooftree}
	\hypo{\Gamma_1\|\Delta_1}
	\hypo{\cdots}
	\hypo{\Gamma_n \| \Delta_n}
	\infer[left label= $\Ru^l$]3[]{\Gamma\| \Delta}
\end{prooftree}$ is a \emph{left rule}, if the following conditions are satisfied:
\begin{enumerate}
	\item if \Ru is not an axiom and $\Ru \neq \RuDia$, then $\Delta_i = \Delta$ for $1 \leq i \leq n$,
	\item if \Ru = \RuACut, then $\phi \in \FLN(\Gamma)$,
	\item if \Ru = \RuDia, then $\RuDia^l$ is of the form 
	$\begin{prooftree}
		\hypo{\phi^{\anColor{f}}, \Sigma, \ldiap[\conv{a}] \Lambda \mid \Pi, \ldiap[\conv{a}]\Theta}
		\infer[left label= $\RuDiaP^{l}:$]1{ \ldiap[a] \phi^{\anColor{f}}, \lboxp[a] \Sigma, \Lambda \mid \lboxp[a] \Pi, \Theta}
	\end{prooftree}$
	where $\ldiap[\conv{a}]\Lambda \isdef \{\ldiap[\conv{a}]\chi^u \mid \chi^u \in \Lambda \text{ and } \ldiap[\conv{a}]\chi \in \FLN(\Gamma)\}$ and 
	$\ldiap[\conv{a}]\Theta \isdef \{\ldiap[\conv{a}]\chi^u \mid \chi^u \in \Theta \text{ and } \ldiap[\conv{a}]\chi \in \FLN(\Delta)\}$.\footnote{Note that here $\Gamma = \ldiap[a] \phi^{\anColor{f}}, \lboxp[a] \Sigma, \Lambda$ and $\Delta = \lboxp[a] \Pi, \Theta$.}
\end{enumerate}
\emph{Right rules} are defined analogously. \emph{Split rules} are either left or right rules.  Note that the only split rules with interactions between the components are axioms and modal rules.
Notions defined in \Cref{sec.proofSystem} translate straightforwardly to \SCPDLf derivations.

\begin{ourdefinition}\label{def.SCPDLfDerivation}
An \emph{\SCPDLf derivation} is defined analogously to a \CPDLf derivation, where each node is now labelled by a split sequent and a split rule. 
A finite path $\tau$ in an \SCPDLf derivation is \emph{successful} if every node on $\tau$ features a formula in focus and there is a node on $\tau$ where the formula in focus is principal.
An \emph{\SCPDLf proof with assumptions $\calA$} is a finite \SCPDLf derivation, where every leaf is either closed or labelled by a sequent in $\calA$. 
The relation $\proves$ is defined as for \CPDLf proofs.
\end{ourdefinition}


\section{Soundness and completeness of split proofs}\label{sec.splitCompletness}

We prove the soundness and completeness of \SCPDLf by \emph{game-theoretic} means.
Given a split sequent $\Sigma$, we define a proof-search game $\calG(\Sigma)$ in which one player (Prover) aims to find a proof of $\Sigma$, while the other player (Builder) aims to construct a model where $\Sigma$ is satisfied. 
Winning strategies for Prover and Builders then correspond to, respectively, proofs and models for $\Sigma$.
To tighten the correspondence between winning strategies for Prover and proofs we will work with infinite \SCPDLInfty proofs. These are \SCPDLf derivations, where instead of cycles we allow infinite branches satisfying a corresponding success condition.

\subsection{Infinite $\SCPDLInfty$ proofs}

\begin{ourdefinition}
	An infinite path $\tau$ in an \SCPDLf derivation is \emph{successful} if
	\begin{enumerate}
		\item on cofinitely many sequents on $\tau$ there is a formula in focus, and
		\item there are infinitely many nodes on $\tau$ where the formula in focus is principal.
	\end{enumerate}
\end{ourdefinition}

\begin{ourdefinition}[Infinitary proof]
	An \SCPDLInfty proof is an \SCPDLf derivation, where every leaf is labelled by an axiom and every infinite path is successful.
\end{ourdefinition}

The correspondence between regular \SCPDLInfty proofs and \SCPDLf proofs is standard in cyclic proof theory.
\begin{lemma}\label{lem.CPDLfIffCPDLInfty}
There is a regular \SCPDLInfty proof of $\Gamma\| \Delta$ iff $\SCPDLf \proves \Gamma \| \Delta$.
\end{lemma}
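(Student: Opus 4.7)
The proof proceeds by two standard constructions of cyclic proof theory: \emph{unfolding} for the $(\Leftarrow)$ direction, and \emph{folding} for the $(\Rightarrow)$ direction.

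For $(\Leftarrow)$, given an \SCPDLf proof $\pi$, I would construct a regular \SCPDLInfty proof $\pi^\infty$ by iteratively replacing every repeat leaf $l$ of $\pi$ with a fresh copy of the subderivation rooted at its companion $c(l)$, continuing ad infinitum. The tree $\pi^\infty$ is regular because, up to isomorphism, each of its subderivations is one of the finitely many subderivations of $\pi$. Every leaf of $\pi^\infty$ corresponds to a leaf of $\pi$ that is not a repeat leaf, and since in a \SCPDLf proof all such leaves are closed by axioms, the leaf condition for \SCPDLInfty is met. Any infinite branch $\beta$ of $\pi^\infty$ projects down to an infinite walk in $\cyclicPT$ that, by finiteness of the underlying tree, must traverse back edges infinitely often. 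Each traversal corresponds to a successful repeat path $\tau_l$: every sequent on $\tau_l$ is focused and at least one node is focus-principal. Concatenating these segments shows that $\beta$ has only finitely many unfocused sequents and infinitely many focus-principal nodes, verifying the \SCPDLInfty success condition.

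For $(\Rightarrow)$, given a regular \SCPDLInfty proof $\pi$, I would fold it using regularity. Traverse $\pi$ downwards and, at each node $u$, test whether the subderivation $\pi_u$ rooted at $u$ is isomorphic to the subderivation $\pi_v$ rooted at some proper ancestor $v$. If so, mark $u$ as a repeat leaf with companion $v$ and truncate; otherwise continue. Since $\pi$ has only finitely many distinct subderivations up to isomorphism, König's lemma ensures that this procedure terminates, yielding a finite \SCPDLf derivation $\pi^c$ whose non-repeat leaves are exactly leaves of $\pi$, hence axioms.

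The delicate step, and main obstacle, is verifying that every generated repeat path $\tau_l$ in $\pi^c$ is successful in the \SCPDLf sense. Suppose for contradiction that some $\tau_l$ is unsuccessful --- either it contains an unfocused sequent, or no node on it is focus-principal. By the isomorphism between $\pi_{c(l)}$ and $\pi_l$ inside the original $\pi$, one can iterate $\tau_l$ inside $\pi$: the path from $c(l)$ to $l$ can be prolonged, via the isomorphism, to a copy of itself starting at $l$, ending at a node $l'$ whose subderivation again matches $\pi_{c(l)}$, and so on indefinitely. The resulting infinite branch $\beta$ of $\pi$ then inherits either infinitely many unfocused sequents or cofinitely few focus-principal nodes, contradicting the \SCPDLInfty success condition on $\pi$. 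Hence every repeat path of $\pi^c$ is successful, so $\pi^c$ is a genuine \SCPDLf proof.
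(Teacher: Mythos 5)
Your folding direction (from a regular \SCPDLInfty proof to an \SCPDLf proof) is correct but organised differently from the paper. The paper picks, on each infinite path $\tau$, the minimal indices $j<k$ such that the subderivations at $\tau(j)$ and $\tau(k)$ coincide \emph{and} the segment $\tau(j)\cdots\tau(k)$ is already successful, so success of the created repeat paths is built into the choice of truncation point. You instead truncate at the \emph{first} repetition of a subderivation and verify success a posteriori: if the repeat path were unsuccessful, iterating it through the isomorphism $\pi_{c(l)}\cong\pi_l$ would produce an infinite branch of the original proof with either infinitely many unfocused sequents or only finitely many focus-principal nodes. That argument is sound, and it buys the (mildly stronger) fact that the first subderivation-repeat already yields a successful cycle. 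One caveat, which your proof shares with the paper's: Definition~\ref{def.CPDLfDerivation} fixes the companion of a repeat leaf as the \emph{nearest} ancestor carrying the same sequent, so designating the node $v$ you found as companion requires a word on why the (possibly nearer) official companion also yields a successful repeat path.

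The genuine gap is in your unfolding direction --- the one the paper omits as standard. The sentence ``concatenating these segments'' does not describe what actually happens: an infinite branch of $\pi^\infty$ projects to a walk in $\cyclicPT$ whose segments between consecutive back edges run from $c(l_i)$ to the next repeat leaf $l_{i+1}$, and such a segment is in general \emph{not} the repeat path $\tau_{l_{i+1}}$, which runs from $c(l_{i+1})$ to $l_{i+1}$. If $c(l_{i+1})$ lies strictly below $c(l_i)$ the segment is a proper suffix of $\tau_{l_{i+1}}$ and may miss its focus-principal node; if $c(l_{i+1})$ lies strictly above $c(l_i)$ the segment contains nodes about which the success of $\tau_{l_{i+1}}$ says nothing, so focusedness of every sequent on the branch does not follow either. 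The conclusion is still true, but it needs an argument such as: take a node $v_0$ of minimal depth among those visited infinitely often; any infinitely visited unfocused node $u$ would force the walk to exit the subtree above $u$ via a back edge whose (successful, hence everywhere-focused) repeat path passes through $u$, a contradiction, which gives condition (1); and each excursion from $v_0$ back to $v_0$ ends in a back edge $(l,v_0)$ and, since a subtree can only be entered through its root, must visit every node on the tree path from $v_0$ to $l$, in particular the focus-principal node of $\tau_l$, which gives condition (2). Without some such repair the success of the unfolded branches does not follow from what you wrote.
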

\begin{proof}
	Let $\rho$ be a regular $\SCPDLInfty$ proof. 
	For a node $v$ in $\rho$ let $\rho_v$ be the subderivation of $\rho$ rooted at $v$. 
	For every infinite path $\tau = (\tau(i))_{i \in \omega}$ define minimal 
	indices $j < k$ such that
	\begin{enumerate}
		\item $\rho_{\tau(j)} = \rho_{\tau(k)}$ and
		\item the path $\tau(j)\cdots \tau(k)$ is successful.
	\end{enumerate}
	Because $\rho$ is regular and every infinite path is successful such indices always exist. For each such infinite path let  $\tau(k)$ be a leaf discharged with companion $\tau(j)$. Using König's Lemma we can show that this procedure results in a finite \SCPDLf proof $\pi$ of $\Gamma$.
\end{proof}

\subsection{Proof search game}

A \emph{rule instance} is a triple $(\Sigma,\Ru, \langle\Sigma_1,...,\Sigma_n\rangle)$ such that
$\begin{prooftree}
	\hypo{\Sigma_1}
	\hypo{\cdots}
	\hypo{\Sigma_n}
	\infer[left label = \Ru]3[]{\Sigma}
\end{prooftree}$
is a valid rule application in \SCPDLf. We let $\conc$ be the function mapping rule instances to their conclusions.
A rule instance is \emph{cumulative} if all premises are componentwise supersets of the conclusion and \emph{productive} if each premise is distinct from the conclusion.
We call a rule instance of \RuU \emph{conceding}, if its principal formula is of the form $\ldiap[\alpha]\phi$ and \emph{unblocking} otherwise.

Let $\Phi$ be a split sequent. We define the proof search game $\calG(\Phi)$, with players Prover and Builder. 
Its positions are given by $\Seq_{\Phi}\cup \Inst_{\Phi}$, where $\Seq_{\Phi}$ is the set of split sequents and $\Inst_{\Phi}$ the set of rule instances containing formulas in $\FLN(\Phi)$. The ownership function and admissible moves are given in the table below. An infinite match is won by Prover iff the resulting infinite path is successful.
\vspace{-15pt}
\begin{table}[H]
	\begin{center}
		\begin{tabular}{|c|c|c|}
			\hline
			Position & Owner & Admissible moves \\
			\hline
			$\Sigma$ & Prover & $\{i \in \Inst_{\Phi} \| \conc(i) = \Sigma\}$ \\
			$(\Sigma,\sfR, \langle\Sigma_1,\cdots,\Sigma_n\rangle)$ & ~Builder~ & $\{\Sigma_i \| i = 1,\cdots,n\}$ \\
			\hline
		\end{tabular}
	\end{center}
\end{table}
\vspace{-25pt}


\noindent
Interestingly, we may assume that winning strategies in $\calG(\Phi)@\Phi$ are positional, that is, only depend on the current position of the game, and not on the history of the play leading up to this position.
The key observations here is that $\calG(\Phi)$ can be formulated as a parity game, since parity games are well known to enjoy positional determinacy~\cite{Emerson99,Mostowski1991}.
To see that $\calG(\Phi)$ is a parity game, we may assign the following priorities to its positions: $\Omega(\Sigma) \isdef 0$ for any sequent position, and for the other positions we put
\[\Omega(\Sigma,\sfR, \langle\Sigma_1,\cdots,\Sigma_n\rangle) \isdef 
   \left\{\begin{array}{ll}
      3 & \text{ if $\Sigma$ has no formula in focus}
   \\ 2 & \text{ if the principal formula of $\sfR$ is in focus}
   \\ 1 & \text{ otherwise}.
   \end{array}\right.
\]
An \SCPDLInfty proof of a split sequent $\Phi$ may be identified with a winning strategy of Prover in $\calG(\Phi)@\Phi$, and thus, as a consequence of positional determinacy we may assume that \CPDLInfty proofs are regular.

\subsection{Soundness}

Given a \CPDLInfty proof of a split sequent $\Sigma = \Gamma \| \Delta$ we want to show that $\Gamma \cup \Delta$ is unsatisfiable. Equivalently, given a pointed model $\bbS,s$ that satisfies $\Gamma\cup \Delta$ we provide a winning strategy for Builder in $\calG(\Sigma)@(\Sigma)$. The proof is standard and can be found in Appendix \ref{app.splitCompleteness}.

\begin{theorem}[Soundness]\label{thm.soundnessSplit}
	If $\SCPDLf \proves \Gamma \| \Delta$, then $\Gamma, \Delta$ is unsatisfiable.
\end{theorem}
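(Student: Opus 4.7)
The plan is to establish the contrapositive: from any pointed model $\bbS, s_0$ satisfying $\Gamma \cup \Delta$ we construct a winning strategy for Builder in $\calG(\Sigma)@\Sigma$, where $\Sigma = \Gamma \| \Delta$. Since any $\SCPDLf$ proof of $\Sigma$ translates (via Lemma \ref{lem.CPDLfIffCPDLInfty} and the identification of $\SCPDLInfty$-proofs with winning Prover strategies) to a winning Prover strategy in $\calG(\Sigma)@\Sigma$, the positional determinacy of this parity game yields a contradiction. So suppose, for contradiction, that $\SCPDLf \proves \Gamma \| \Delta$ and $\bbS, s_0 \sat \chi$ for every $\chi \in \Gamma \cup \Delta$; fix a positional winning strategy $f$ for Eloise in $\EG(\bbS)$ that is winning at every $(\chi, s)$ with $\bbS, s \sat \chi$.

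Builder's strategy $\beta$ maintains the invariant that each sequent position $\Sigma'$ visited is paired with a state $s' \in \bbS$ such that $\bbS, s' \sat_f \chi$ for every annotated formula $\chi^o$ in $\Sigma'^l \cup \Sigma'^r$. When Prover proposes a rule instance, Builder picks a premise preserving this invariant: the focus-management rules $\RuU, \RuF$ and the weakening rule $\RuWeak$ keep $s'$ fixed; the single-premise propositional and modal rules ($\RuAnd, \RuConD, \RuConB, \RuChoiceB, \RuStarB, \RuTestD$) are verified by inspection; branching rules ($\RuOr, \RuChoiceD, \RuStarD, \RuTestB, \RuACut$) are resolved by copying the choice prescribed by $f$ at the corresponding Eloise-position of $\EG(\bbS)$; and for the modal rule $\RuDiaP$ applied to the focused principal $\ldiap[a]\phi^f$, Builder moves to the $a$-successor $s''$ of $s'$ selected by $f$ at $(\ldiap[a]\phi, s')$. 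A routine check (using that each $\lboxp[a]\chi$ at $s'$ yields $\chi$ at $s''$, and that $s'$ is a $\conv{a}$-successor of $s''$, so the surviving formulas $\chi$ at $s'$ give $\bbS, s'' \sat_f \ldiap[\conv{a}]\chi$) confirms the invariant. Since axiom leaves are unsatisfiable at every state, the invariant prevents $\beta$-plays from ending on an axiom, so all finite $\beta$-plays are won by Builder.

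The principal obstacle is to rule out infinite $\beta$-consistent plays that are successful for Prover. Suppose such a play $\pi$ exists: after some stage every sequent carries a focused formula, and this focused formula is principal infinitely often. Tracking the focused formula across the principal events from that stage onwards yields an infinite trace $(\phi_n)_{n<\omega}$. Because only formulas of the form $\ldiap[\alpha]\phi$ may be simultaneously focused and principal, a focused box would have to be defocused by $\RuF$ before being reduced — which cannot happen on a successful path with persistent focus. Combined with Proposition \ref{p:inftr}, this rules out the ``cofinitely many boxes'' alternative, forcing $(\phi_n)_{n<\omega}$ to contain cofinitely many diamond formulas. Simultaneously, $\beta$'s diamond-witness choices induce an infinite play $\rho$ in $\EG(\bbS)$ that is consistent with $f$ and whose left projection agrees with $(\phi_n)_{n<\omega}$; since $\rho$ has cofinitely many diamond formulas, the evaluation-game winning condition declares $\rho$ a win for Abelard, contradicting the fact that $f$ is winning for Eloise at the initial position of $\rho$. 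The hard part is precisely this synchronisation: carefully extracting the correct evaluation-game play from $\beta$'s choices and invoking Proposition \ref{p:inftr} together with the focus constraint to conclude that ``successful for Prover'' forces the semantic play into Abelard's winning region.
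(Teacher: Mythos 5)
Your proposal is correct and follows essentially the same route as the paper: contraposition via a Builder strategy in the proof-search game that carries along a satisfying state as an invariant, with finite plays handled by local soundness of the rules and infinite plays handled by extracting an $f$-guided evaluation-game match whose trace, by the focus discipline (only diamonds may be focused and principal) together with Proposition~\ref{p:inftr}, has cofinitely many diamond formulas and is therefore won by Abelard. The only difference is organisational: the paper first collapses the split proof to a $\CPDLf$ proof of $\Gamma,\Delta$ and runs this argument in the unsplit proof-search game, whereas you argue directly in $\calG(\Gamma\|\Delta)$ --- an inessential reshuffling.
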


\subsection{Completeness}

For any unsatisfiable sequent $\Gamma, \Delta$ we have to find a \SCPDLInfty proof of $\Sigma = \Gamma \| \Delta$, in other words a winning strategy of Prover in $\calG(\Sigma)@\Sigma$. We will show an even stronger statement: By restricting the strategy of Prover we show that for every unsatisfiable split sequent we obtain an \SCPDLInfty proof in a certain normal form. 
These \emph{uniform} proofs will be instrumental in our interpolation proof.

\begin{ourdefinition}[Uniform Split Derivation]\label{def.uniformSplitDerivation}
	A set of formulas $\Gamma$ is called \emph{saturated}, if no axiom or cumulative and productive \CPDLf rule may be applied to $\Gamma^u$. This definition is equivalent to $\Gamma^u$ being a saturated set in the usual sense.
	A split derivation $\pi$ is \emph{uniform} if it satisfies the following conditions:
	\begin{itemize}
		\item[U0.] If possible an axiom is applied.
		\item[U1.] Else if possible a cumulative and productive rule is applied to a formula in an unfocused component. 
		\item[U2.] 
		Let $t_1$ and $t_2$ be nodes labelled with split sequents $\Gamma_1 \| \Delta$ and $\Gamma_2 \| \Delta$ in $\pi$, respectively.
Assume that their common component $\Delta$ is focused, while their unfocused components $\Gamma_1$ and $\Gamma_2$ are both saturated.
Then at both $t_1$ and $t_2$ the same rule with the same principal formula in $\Delta$ is applied. If possible, this rule is cumulative and productive with an unfocused principal formula, and else it is productive with its principal formula in focus.
\item[U3.] The analogous condition to (U2) for split sequents $\Gamma \| \Delta_1$ and $\Gamma \| \Delta_2$.
\end{itemize}
\end{ourdefinition}

We thus aim to show the following completeness theorem. We present a proof sketch; a full proof can be found in Appendix \ref{app.splitCompleteness}.
\begin{theorem}[Completeness]\label{thm.completenessSplit}
	If $\Gamma,\Delta$ is unsatisfiable then there is a uniform \SCPDLf proof of $\Gamma \| \Delta$.
\end{theorem}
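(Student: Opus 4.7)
The plan is to prove the theorem via the positional determinacy of the parity game $\calG(\Sigma)$, where $\Sigma = \Gamma\|\Delta$, followed by a refinement of Prover's winning strategy to meet the uniformity requirements U0--U3. First I would establish the non-uniform statement: unsatisfiability of $\Gamma \cup \Delta$ implies the existence of a positional winning strategy for Prover in $\calG(\Sigma)@\Sigma$. By positional determinacy this reduces to ruling out a positional winning strategy for Builder, and such a strategy would in turn yield a countermodel in the usual cyclic proof-theoretic style: take as states those positions at which a modal rule is played, define the accessibility relations from Builder's moves at those rule instances, and read off the valuation from literals appearing in the associated saturated split sequents. \Cref{p:inftr} then ensures that the focus condition on infinite plays transfers correctly into the satisfaction of fixpoint formulas.

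To turn the resulting positional winning strategy $\sigma$ into a uniform one I would refine it by priorities. Conditions U0 and U1 are locally enforceable: if an axiom closes the current sequent, picking it is harmless since the branch ends immediately; and any cumulative productive rule applied in an unfocused component only expands that component within $\FLN(\Sigma)$, without affecting the trace of the focused formula, so eagerly applying such rules preserves winningness. For U2 and U3 the idea is, for each annotated focused sequent $\Delta_0$ at which neither U0 nor U1 applies, to fix once and for all a canonical choice of rule and principal formula in $\Delta_0$, and then force the refined strategy to use that choice at every position whose focused component is $\Delta_0$, thereby overriding the possibly context-dependent original selection made by $\sigma$.

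The main obstacle will be to verify that this canonicalisation still yields a winning strategy. The crux is a selection argument: among all rule instances that Prover can win from at \emph{some} split sequent with focused component $\Delta_0$ and saturated unfocused companion, one chooses a single instance that works uniformly across all such positions. The key observation enabling this is that the evolution of the focused formula's trace depends only on the focused component and the rule applied to it, while the unfocused side merely accumulates new subformulas drawn from $\FLN(\Sigma)$; consequently Prover's winning region, projected to the focused component of saturated split sequents, is stable under replacement of the unfocused companion. Once uniformity and winningness are both secured, the resulting regular $\SCPDLInfty$ proof translates via the analogue of \Cref{lem.CPDLfIffCPDLInfty} for split proofs into a uniform \SCPDLf proof of $\Gamma\|\Delta$.
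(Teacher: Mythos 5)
Your first step --- contraposition: a positional winning strategy for Builder yields a countermodel built from the saturated segments lying between modal rule instances --- matches the paper's construction in outline. The genuine gap is in your second step, the post-hoc canonicalisation of an arbitrary Prover winning strategy to meet U2/U3. The ``key observation'' you rely on, namely that Prover's winning region, projected to the focused component of saturated split sequents, is stable under replacement of the unfocused companion, is false: axioms and the modal rule are precisely the split rules that interact \emph{across} components (for instance $\AxLit$ may close $\Gamma_1 \| \Delta$ because $\Gamma_1$ contains the negation of a formula of $\Delta$, while $\Gamma_2 \| \Delta$ remains open; likewise $\RuDiaP^{r}$ feeds $\ldiap[\conv{a}]$-formulas into the left component), so both whether Prover can win and which continuation wins genuinely depend on the unfocused side. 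Even the weaker statement you actually need --- that among all positions with focused component $\Delta_0$ from which Prover wins there is a single rule instance on $\Delta_0$ that preserves winningness at every one of them --- is asserted rather than argued, and it is not obviously obtainable from a selection argument alone, since a winning strategy may need to interleave work on the two components in a context-dependent order.

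The paper avoids this difficulty by never canonicalising an arbitrary winning strategy: it builds the uniformity requirements (refined into an explicit six-level priority list on rule instances, plus a positionality condition on the type-4 choices) directly into Prover's moves in the game tree $\calT$ from which the countermodel is extracted. The contrapositive then reads: if Builder can beat the \emph{priority-restricted} Prover, the sequent is satisfiable; hence for an unsatisfiable sequent the restricted Prover already wins, and the resulting regular $\SCPDLInfty$ proof is uniform by construction and remains uniform under the translation of \Cref{lem.CPDLfIffCPDLInfty}. If you wish to keep your two-stage plan you must actually prove the selection claim, which is essentially as hard as redoing the model construction against the restricted Prover; restructuring along the paper's lines is the more economical route.
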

\begin{proof}[Sketch]
Let $\Sigma = \Gamma \| \Delta$. By contraposition, given a winning strategy for 
	Builder in $\calG(\Sigma)@\Sigma$ we have to show that $\Gamma,\Delta$ is satisfiable. 
	Let $f$ be a \emph{positional} winning strategy for Builder in $\calG(\Sigma)@\Sigma$, we construct a pointed model $\bbS^f,s$ and a strategy $\underline{f}$ for $\eloi$ in $\calE(\bbS^f)$ such that $\bbS^f,s \sat_{\underline{f}} \Sigma$. 
	
	Let $\calT$ be the subtree of the game-tree of $\calG(\Sigma)@\Sigma$, where Builder plays the strategy $f$ and Prover picks rule instances such that the uniformity conditions are satisfied. We want to define a model $\bbS^f$ from $\calT$.
	A maximal path $\rho$ in $\calT$ not containing rule instances of \RuDia, \RuF and conceding instances of \RuU is called a \emph{local path}. It follows that every local path is finite: Only finitely many cumulative and productive rule instances may be applied, and in the only non-cumulative rule instances that we apply the principal formula is in focus, hence there can also only be finitely many as $f$ is a winning strategy for Builder. 
	
	For local paths $\rho, \tau$ we define $\rho \overset{a}{\to} \tau$ if $\tau$ is above $\rho$ in $\calT$ only separated by an instance of \RuDia[a] and (possibly) instances of \RuF and conceding instances of \RuU. We let $\Plab(\rho) \isdef \bigcup\{\Sigma^l \cup \Sigma^r \| \Sigma \text{ occurs in } \rho\}$.
	
	We now define the model $\bbS^f = (S^f,R^f,V^f)$. We let $S^f$ be the set of local paths in $\calT$, define $V^f(p) \isdef \{\rho \in S^f \| p \in \Plab(\rho)^-\}$ and $R^f = \{R_a^f\}_{a \in \Act}$, where
	\[\rho R_a \tau \quad :\Leftrightarrow \quad \rho \overset{a}{\to} \tau \text{ or } \tau \overset{\conv{a}}{\to} \rho\]
	For the definition of the strategy $\underline{f}$ for $\eloi$ in $\calE(\bbS^f)$ we use the fact that $\Plab(\rho)$ is a saturated set for every local path $\rho$. For instance, at position $(\phi \lor \psi, \rho)$ the formula $\phi$ or $\psi$ is in $\Plab(\rho)^-$ and $\underline{f}$ picks one that is. At position $(\ldiap[a]\phi, \rho)$ the strategy $\underline{f}$ picks some $\tau$ such that $\rho \tom{a} \tau$. 
	
	Now let $\psi_0 \in \Sigma$ and let $\rho_0$ be a local path containing $\Sigma$. Let $\calM$ be an $\underline{f}$-guided $\calE(\bbS^f)$-match starting at $(\psi_0,\rho_0)$. Then we can prove that for every position $(\psi,\rho)$ in $\calM$ it holds that $\psi \in \Plab(\rho)$ and consequently that $\eloi$ wins $\calM$. This shows that $\bbS^f, \rho_0 \sat_f \Sigma$.
\end{proof}

\section{Interpolation}
\label{s:itp}

As mentioned in the introduction, the following theorem is the main contribution of this paper.

\begin{theorem}[Craig Interpolation]
\label{t:itp}
Let $\phi$ and $\psi$ be \CPDL-formulas such that $\phi \models \psi$.
Then there is a \CPDL-formula $\theta$ such that $\Voc(\theta) \subseteq \Voc(\phi) \cap \Voc(\psi)$ and $\phi \models \theta$ and $\theta \models \psi$.
\end{theorem}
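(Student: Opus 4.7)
The plan is to follow Maehara's proof-theoretic method, leveraging the uniform split proofs provided by the calculus $\SCPDLf$. By completeness (Theorem~\ref{thm.completenessSplit}), the unsatisfiability of $\phi \land \mybar{\psi}$ yields a uniform $\SCPDLf$ proof $\pi$ of the split sequent $\mybar{\psi}^u \| \phi^u$. I would associate to each node $v$ of $\pi$ with split sequent $\Sigma_v = \Gamma_v \| \Delta_v$ an \emph{interpolant} $I_v$: a $\CPDL$-formula whose vocabulary lies in $\Voc(\Gamma_v) \cap \Voc(\Delta_v)$ and for which both $\Gamma_v \cup \{I_v\}$ and $\Delta_v \cup \{\mybar{I_v}\}$ are unsatisfiable. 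Evaluating this at the root then yields the desired formula $\theta$.

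For the purely local part of the construction, interpolants are computed bottom-up through $\pi$ via the usual Maehara recipes. At axioms the interpolant is read off by inspection: one of $\top$, $\bot$, $p$, or $\mybar{p}$ depending on which side holds which literal. Propositional and program-constructor rules acting on a single component pass the interpolant through, introducing $\land$ or $\lor$ at branching rules depending on the side involved. The modal rule $\RuDiaP^{l}$ produces an interpolant of the form $\lboxp[a] I_{v'}$ from its unique premise, and symmetrically $\RuDiaP^{r}$ yields $\ldiap[\conv{a}] I_{v'}$; the analytic side-formulas $\ldiap[\conv{a}]\Gamma$ generated in the split modal rule are precisely what keeps such interpolants within the common vocabulary, and they are the reason the converse operator is indispensable here. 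The analytic cut $\RuACut$ is handled by combining the two premise interpolants with the cut-formula, which stays within the closure of the conclusion by the analyticity restriction that $\phi \in \FLN(\Gamma)$.

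The main obstacle lies in the cyclic structure of $\pi$: repeat leaves and their companions impose a system of equations on interpolant variables $X_v$ assigned at companion nodes, and a direct fixpoint solution would generally require $\mu$-calculus expressivity and hence escape $\CPDL$. To stay inside the logic, I would adapt Borzechowski's idea from~\cite{borz:tabl88,borz:prop25}: rather than reading the system off directly from $\pi$, I would first extract from $\pi$ an \emph{auxiliary structure} that, using the focus annotations and the trace-theoretic content of the cyclic branches, encodes only the essential recurrence patterns. The equations induced by this auxiliary structure are designed to admit explicit solutions in $\CPDL$ of the form $\ldiap[\alpha^*]\chi$ or $\lboxp[\alpha^*]\chi$, where $\alpha$ is a program built from common atomic programs and their converses; the choice of $\ldiap$ versus $\lboxp$ is dictated by which side is focused along the cycle, matching the dichotomy of Proposition~\ref{p:inftr}. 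A well-founded induction on this auxiliary structure then verifies the interpolation property at every node, and in particular at the root.

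The hard part will be precisely the design and correctness of this auxiliary structure. In the plain $\PDL$ case treated in~\cite{borz:prop25}, the cycles can be resolved by considering how boxed fixpoint formulas migrate through the proof on a single side. In our setting, the interplay between the converse modalities introduced by $\RuDiaP^{l}$ and $\RuDiaP^{r}$ and the cross-side side-formulas $\ldiap[\conv{a}]\Gamma$ produced by the split modal rule means that a cycle may intertwine forward and backward movements in the intended model. Orchestrating the auxiliary structure so that each induced equation falls into a pattern expressible by a single $\CPDL$-program $\alpha$ — and verifying semantically that the resulting $\ldiap[\alpha^*]\chi$ or $\lboxp[\alpha^*]\chi$ actually satisfies the two unsatisfiability conditions — is where the bulk of the technical work will lie.
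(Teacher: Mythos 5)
Your overall strategy coincides with the paper's: reduce to an interpolant for the split sequent $\mybar{\psi}\|\phi$, obtain a uniform cyclic \SCPDLf proof by completeness, handle the acyclic rule applications by the usual Maehara recipes, and resolve the cycles via a Borzechowski-style auxiliary structure whose induced equations are solvable by star programs inside \CPDL. But the proposal stops exactly where the content of the theorem begins: the auxiliary structure is named but never constructed, and without it there is no system of equations, no candidate solution, and nothing on which to run the verification. Concretely, the paper works cluster by cluster (the $\cEquiv$-classes of the proof tree with back edges): singleton clusters are handled by plain Maehara (Lemma~\ref{l:itp1}), and for a proper cluster $C$ with, say, the right component focused, one builds a finite \emph{quasi-proof} $\sfQ$ that records only the evolution of the right components in $\FRCp$, with each sequent expanded into a triple of nodes of types $1,2,3$. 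The pre-interpolant of a node is $\iota_x = \psi_x \lor \bigvee_{y}\ldiap[\alpha_{x,y}]q_y$, where the $q_y$ are internal variables attached to companion nodes and the $\alpha_{x,y}$ are programs accumulated along $\sfQ$: tests $\theta_{\Qlab_x}?$ at type-$2$ nodes, whose test formulas are the interpolants of the exit nodes supplied by the outer induction; atomic steps $a$ at modal type-$3$ nodes; unions at branching type-$3$ nodes. A companion $x$ with child $z$ is resolved by $\ldiap[\alpha_{z,x}^{*}]\psi_z$, and the correctness of this star rests on a combinatorial property of $\sfQ$ (every repeat path features distinct formulas in focus, Lemma~\ref{l.KandLProps}) together with a dedicated star-introduction lemma (Lemma~\ref{l:ASaq}) that discharges the assumptions $q_x$ across iterations. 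None of this is derivable from the proposal as written; ``encode only the essential recurrence patterns'' restates the problem rather than solving it.

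There is also a local error in the one place where you are explicit: for the right modal rule $\RuDiaP^{r}$ the interpolant of the conclusion must be $\ldiap[a]I_{v'}$, with $a$ the leading program of the principal focused diamond, not $\ldiap[\conv{a}]I_{v'}$. The formula $\ldiap[\conv{a}]I_{v'}$ asserts the existence of an $a$-\emph{predecessor} satisfying $I_{v'}$, about which the premise gives no information, so the condition that the left component together with the interpolant is unsatisfiable fails. The converse modalities enter the interpolant only indirectly, through the side formulas $\ldiap[\conv{a}]\Lambda$ that the split modal rule injects into the unfocused component (these are what make the vocabulary condition for $a$ provable), not as the outermost operator in the modal case.
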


As an immediate consequence of this we obtain Beth definability.
Where $\phi(p)$ is a \CPDL-formula, we use $\phi(p_i)$ as an abbreviation of $\phi[p_i/p]$.

\begin{corollary}[Beth Definability]
Let $\phi(p)$ be a \CPDL-formula, and let $p_0,p_1$ be fresh variables.
If $\phi(p_0),\phi(p_1) \models p_0 \liff p_1$, then there is a \CPDL-formula $\chi$ with $\Voc(\chi) \subseteq \Voc(\phi)\setminus\{p\}$ and $\phi(p) \models p \liff \chi$.
\end{corollary}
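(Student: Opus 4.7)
My plan is to derive Beth Definability from the Craig Interpolation Property (Theorem~\ref{t:itp}) by the classical propositional reduction; the corollary is almost a formality once interpolation is in hand, so the argument consists mostly of rearranging the hypothesis and invoking the theorem once.

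First, I would unpack the implicit-definability hypothesis. Reading $\phi(p_0),\phi(p_1) \models p_0 \liff p_1$ as $\phi(p_0) \land \phi(p_1) \models p_0 \liff p_1$, and recalling that the local consequence relation of \CPDL is just validity of the corresponding material implication, this rearranges to the single entailment
\[
\phi(p_0) \land p_0 \,\models\, \phi(p_1) \to p_1,
\]
using the fact that $\alpha \land \beta \models \gamma$ iff $\alpha \models \beta \to \gamma$ (both just say $\alpha \land \beta \to \gamma$ is valid).

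Next, I would apply Theorem~\ref{t:itp} to this entailment. Because $p_0$ and $p_1$ are fresh, and in particular distinct from $p$ and from each other, the common vocabulary of the two sides is precisely $\Voc(\phi) \setminus \{p\}$. Craig interpolation therefore supplies a \CPDL-formula $\chi$ with $\Voc(\chi) \subseteq \Voc(\phi) \setminus \{p\}$ such that $\phi(p_0) \land p_0 \models \chi$ and $\chi \models \phi(p_1) \to p_1$.

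Finally, I would eliminate the auxiliary variables by uniform substitution. Since $p_0, p_1 \notin \Voc(\chi)$, the substitution $[p/p_0]$ leaves $\chi$ unchanged and turns the first entailment into $\phi(p) \models p \to \chi$, while $[p/p_1]$ turns the second into $\phi(p) \models \chi \to p$. Combining these gives $\phi(p) \models p \liff \chi$, as required. There is no real obstacle in this corollary: Theorem~\ref{t:itp} does all the work, and the only ancillary fact needed is the routine observation that validity in \CPDL is preserved under uniform substitution of one propositional variable for another, which is immediate from the Kripke semantics of Definition~\ref{d.KripkeModels} by reinterpreting the valuation accordingly.
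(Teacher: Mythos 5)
Your proposal is correct and follows exactly the paper's own (one-line) proof: the paper simply says to apply Craig interpolation to $\phi(p_0), p_0 \models \phi(p_1) \impl p_1$, which is precisely the entailment you derive, and your remaining steps (computing the common vocabulary and substituting $p$ back for the fresh variables) are the standard details the paper leaves implicit.
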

\begin{proof}
Apply Craig interpolation to $\phi(p_0), p_0 \models \phi(p_1) \impl p_1$.
\end{proof}

The remainder of the paper is devoted to the proof of Theorem~\ref{t:itp}.
In this section we will use the split sequent system to find Craig interpolants for $\CPDL$.
We first transfer the concept of interpolation from formulas to split sequents,
calling a formula $\theta$ an \emph{interpolant} for an unsatisfiable split sequent $\Gamma \| \Xi$ if $\Voc(\theta) \subseteq \Voc(\Gamma) \cap \Voc(\Xi)$, and both sequents $\Gamma \| \theta$ and $\mybar{\theta} \| \Xi$ are unsatisfiable.
Since we have $\phi \models \psi$ iff $\mybar{\psi} \| \phi$ is unsatisfiable, it is easy to see that a formula $\theta$ is an interpolant for the formulas $\phi$ and $\psi$ iff it is an interpolant for the split sequent $\mybar{\psi} \| \phi$.
By the completeness of the split sequent system it therefore suffices to prove the following result.

\begin{theorem}
\label{t:main}
If $\vdash \Gamma \| \Xi$ then the split sequent $\Gamma \| \Xi$ has an interpolant.
\end{theorem}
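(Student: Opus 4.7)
The plan is to follow Maehara's proof-theoretic method, adapted to the cyclic split system and the presence of converse modalities. Starting from the given \SCPDLf proof of $\Gamma \| \Xi$, one may assume without loss of generality, using the soundness and completeness results (\Cref{thm.soundnessSplit} and \Cref{thm.completenessSplit}), that the proof $\pi$ at hand is \emph{uniform} in the sense of \Cref{def.uniformSplitDerivation}. Since $\pi$ is a finite cyclic derivation, only finitely many distinct sequent-labels occur at companion nodes, so that the interpolant-construction will ultimately reduce to solving a \emph{finite} system of equations indexed by those companions.

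The next step is to attach to every node $v$ of $\pi$ a candidate interpolant $\theta_v$ for the split sequent at $v$, by induction on $\pi$. The propositional cases follow the standard Maehara clauses: \AxLit and \AxBot yield a literal, $\top$, or $\bot$; and \RuAnd, \RuOr, \RuACut, \RuWeak together with the focus rules \RuF and \RuU combine premise interpolants via conjunction or disjunction, the choice being dictated by which component contains the principal formula. The genuinely local obstacle is the modal rule \RuDiaP: its premise mixes formulas of the form $\ldiap[a]\phi$ and $\lboxp[a]\Sigma$ in one component with $\lboxp[a]\Pi$ and converse-guarded formulas $\ldiap[\conv{a}]\Theta$ in the other, so constructing $\theta_v$ from the premise interpolant requires guarding by modalities in both $a$ and $\conv{a}$. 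It is precisely here that the converse operator of \CPDL becomes indispensable and where the construction genuinely departs from the pure \PDL treatment of~\cite{borz:prop25}.

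The main obstacle is the cyclic structure of $\pi$: whenever a repeat leaf $l$ closes against its companion $c(l)$, the interpolant at $l$ is forced to equal the one assigned to $c(l)$, so the local clauses do not \emph{define} the $\theta_v$ outright but only describe a finite system of equations over companion nodes. As flagged in the introduction, this system in general has no solution inside \CPDL, since a direct fixpoint solution would leave the language. To circumvent this I would follow the strategy of Borzechowski~\cite{borz:tabl88}: instead of solving the equations read off directly from $\pi$, build an \emph{auxiliary structure} --- a finite labelled graph whose vertices correspond to companions of $\pi$ and whose edges record the modal and propositional material collected along successful loops --- and use this structure to re-express the equations in an equivalent form whose solutions \emph{do} live inside \CPDL, by compiling the iteration implicit in the cycles into the \CPDL program constructors $*$, $;$, $\cup$ and $?$ applied to atomic actions and their converses.

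The hardest step will be arranging that this auxiliary-structure construction interacts coherently with the converse modalities: a successful cycle may traverse \RuDiaP-steps contributing a mix of $a$- and $\conv{a}$-programs on either side of the split, and the resulting interpolant must absorb this bidirectional traversal without leaking private vocabulary across the split. Once $\theta$ has been produced, verification reduces to two checks: a syntactic bookkeeping argument showing that $\Voc(\theta) \subseteq \Voc(\Gamma) \cap \Voc(\Xi)$, and the transformation of the candidate-interpolant labelling of $\pi$ into two \SCPDLf derivations rooted at $\Gamma \| \theta$ and $\mybar{\theta} \| \Xi$ respectively; \Cref{thm.soundnessSplit} then yields the unsatisfiability of both split sequents, completing the proof of \Cref{t:main}.
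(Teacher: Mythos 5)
Your outline does follow the same route as the paper: start from a \emph{uniform} cyclic split proof, apply Maehara-style local clauses, observe that repeat leaves turn the construction into a finite system of equations indexed by companions that cannot be solved inside \CPDL directly, and resolve this via a Borzechowski-style auxiliary structure that compiles the cycles into the program constructors $*$, $;$, $\cup$ and $?$. The paper organizes the induction by \emph{clusters} (strongly connected components of the proof graph with back edges): singleton clusters are handled by plain Maehara (\Cref{l:itp1}), and for a proper cluster the auxiliary structure is a finite ``quasi-proof'' tree extracted from the focused component, on which ``pre-interpolants'' are defined leaf-to-root using internal variables for companions (\Cref{l.clusterInterpolation}). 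So the approach is right, but your proposal stops exactly where the content of the theorem begins: the definition of the auxiliary structure and of the formulas and programs attached to its nodes \emph{is} the proof, and ``follow the strategy of Borzechowski'' does not yet say what these objects are in the presence of converse modalities and the focus mechanism, nor why the resulting formula is well-defined (the paper needs, e.g., that every repeat path changes the formula in focus, \Cref{l.KandLProps}).

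One concrete step of your verification plan would also fail as stated. You propose to certify both $\Gamma \| \theta$ and $\mybar{\theta} \| \Xi$ by \SCPDLf derivations and then invoke \Cref{thm.soundnessSplit}. For $\Gamma\|\theta$ this works, but for $\mybar{\theta}\|\Xi$ it does not: \SCPDLf only has the analytic cut rule \RuACut, whose cut formula must lie in $\FLN$ of the conclusion, whereas the derivation of $\mybar{\theta}\|\Xi$ has to cut against equivalences between the freshly constructed pre-interpolants (such as unfolding a companion's interpolant into its child's), and these formulas are in general not in the closure of $\Xi$. The paper must pass to an extended calculus $\cproves$ with unrestricted cut for precisely this condition, and records as an open problem whether that can be avoided. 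Since unrestricted cut is semantically sound, unsatisfiability of $\mybar{\theta}\|\Xi$ still follows, but your plan needs this amendment to go through.
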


The proof of Theorem~\ref{t:main} easily follows from the Lemmas~\ref{l:itp1} and~\ref{l.clusterInterpolation} below.
The key notion in the proof is that of a \emph{cluster}; to introduce clusters, let $s, t$ be nodes in a uniform $\SCPDLf$ proof $\pi$.
We define the relation $\equiv_c$ by putting:
\[
s \cEquiv t     \text{ iff }  s \cEdgeRT t \text{ and } t \cEdgeRT s. 
\]
It is easily verified that the relation $\cEquiv$ is an equivalence relation, whose equivalence classes, called \emph{clusters}, are the maximal strongly connected components of $\cEdge$.
Non-singleton clusters are called \emph{proper}.
Note that for any pair $s,t$ of nodes of a proper cluster there are 
$\cEdge$-paths from $s$ to $t$ and vice versa.
Every proper cluster of $\pi$ is in fact a \emph{subtree} of the underlying tree of $\pi$, in the sense that the structure $(C,\edge{\upharpoonright_C})$ is a tree itself (here $\edge{\upharpoonright_C}$ denotes the parent-child relation of $\pi$, restricted to $C$).
In particular, every proper cluster has a root.
We refer to the children of $C$-nodes which lie outside of $C$ as the \emph{exit nodes} of $C$ --- in the case of a singleton cluster these are just the children of the cluster's unique member.

We prove Theorem~\ref{t:main} by induction on the size of the derivation of $\Gamma \| \Xi$.
Lemma \ref{l:itp1} takes care of leaves and of the induction step in the case where the root of the derivation forms a singleton cluster.
We omit its proof, which is a straightforward adaptation of Maehara's method for well-founded proofs.

\begin{lemma}\label{l:itp1}
Let $\pi$ be an $\SCPDLf$ proof of $\Gamma \| \Xi$.
Assume that the root $r$ of $\pi$ forms a singleton cluster, and that for every child $t$ of $r$ we have an interpolant $\theta_{t}$ for the 
split sequent $\Plab^{l}_{t} \| \Plab^{r}_{t}$.
Then $\Gamma \| \Xi$ has an interpolant.
\end{lemma}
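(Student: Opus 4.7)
The plan is to proceed by case analysis on the rule $\sfR(r)$ applied at the root $r$ of $\pi$. When $r$ is a leaf the hypothesis on child interpolants is vacuous and I must give $\theta$ directly; since the singleton-cluster assumption precludes back-edges into $r$, the only leaves to consider are the axioms $\AxLit$ and $\AxBot$. For $\AxLit$ with principal pair $\phi, \mybar{\phi}$ I take $\theta := \bot$ when both sit in $\Gamma$, $\theta := \top$ when both sit in $\Xi$, $\theta := \mybar{\phi}$ when $\phi \in \Gamma$ and $\mybar{\phi} \in \Xi$, and $\theta := \phi$ in the reverse configuration; for $\AxBot$ I set $\theta := \bot$ or $\theta := \top$ according to the side of $\bot$.

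When $r$ is an inner node I combine the child interpolants $\theta_t$ via the standard Maehara dictionary. Administrative rules ($\RuU^d$, $\RuF^d$, $\RuWeak^d$) and single-premise logical rules ($\RuAnd^d$, $\RuConD^d$, $\RuConB^d$, $\RuStarB^d$, $\RuChoiceB^d$, $\RuTestD^d$, for $d \in \{l,r\}$) propagate the unique child's interpolant unchanged. For a branching rule I take $\theta := \bigwedge_t \theta_t$ when the rule acts on the left component and $\theta := \bigvee_t \theta_t$ when it acts on the right; this uniformly dispatches $\RuOr^d$, $\RuChoiceD^d$, $\RuStarD^d$, $\RuTestB^d$, and the analytic cut $\RuACut^d$. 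The semantic verifications for these cases amount to routine propositional reasoning about the interpolant property.

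The only non-routine case is the modal rule $\RuDiaP$. For the left-split instance with conclusion $\ldiap[a]\phi^{f}, \lboxp[a]\Sigma, \Lambda \| \lboxp[a]\Pi, \Theta$ and premise interpolant $\theta'$, I take $\theta := \lboxp[a]\theta'$; symmetrically, for the right-split instance I take $\theta := \ldiap[a]\theta'$. The interpolant property follows from the back-and-forth converse condition $s R_a t \Liff t R_{\conv{a}} s$: an $a$-successor $t$ of a state $s$ witnessing the conclusion's left component also witnesses the premise's left component, because $\Lambda$ at $s$ forces $\ldiap[\conv{a}]\Lambda$ at $t$, and symmetrically for the right component; so the premise's unsatisfiability lifts through the modal wrapping.

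The main obstacle is the vocabulary bookkeeping in this modal case: wrapping with $\lboxp[a]$ puts $a$ (and hence $\conv{a}$) into $\Voc(\theta)$, which must sit inside $\Voc(\Gamma) \cap \Voc(\Xi)$. The principal formula $\ldiap[a]\phi$ always places $a$ in $\Voc(\Gamma)$; and $a$ sits in $\Voc(\Xi)$ whenever $\Pi$ is non-empty or some $\chi \in \Theta$ contributes $\ldiap[\conv{a}]\chi$ to $\FLN(\Xi)$. In the corner case where both fail, the premise's right component is empty, which forces $\theta' \equiv \top$; I then simply take $\theta := \top$, whose vocabulary condition is trivial and whose interpolant property follows since the premise's left-component unsatisfiability directly entails that of the conclusion via the converse argument above.
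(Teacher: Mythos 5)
Your overall strategy is exactly the ``straightforward adaptation of Maehara's method'' that the paper has in mind (the paper omits the proof of this lemma altogether), and the substantial parts check out: passing interpolants through single-premise and administrative rules, $\bigwedge$ for left-branching and $\bigvee$ for right-branching rules (including \RuACut), and in particular the modal case, where wrapping with $\lboxp[a]$ resp.\ $\ldiap[a]$ together with the converse condition $s R_a t \Liff t R_{\conv{a}} s$ gives the two unsatisfiability conditions, and where you correctly isolate the only delicate point --- the vocabulary of $a$ --- and resolve it via the observation that $\ldiap[\conv{a}]\chi \in \FLN(\Xi)$ forces $a \in \Voc(\Xi)$, with the corner case (empty premise right component, hence $\theta' \equiv \top$) handled by taking $\top$.

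There is, however, a concrete error in your base case: the $\top$/$\bot$ assignments for \AxLit are swapped. With the paper's definition (an interpolant $\theta$ for $\Gamma \| \Xi$ must make \emph{both} $\Gamma \| \theta$ and $\mybar{\theta} \| \Xi$ unsatisfiable), if $\phi$ and $\mybar{\phi}$ both lie in the left component $\Gamma$, then $\Gamma \| \theta$ is unsatisfiable for free, but $\mybar{\theta} \| \Xi$ must be unsatisfiable even though $\Xi$ may well be satisfiable; so one needs $\mybar{\theta}$ itself unsatisfiable, i.e.\ $\theta := \top$, not $\bot$ as you write. Dually, $\theta := \bot$ is the right choice when both formulas sit in $\Xi$, and likewise \AxBot should yield $\top$ when $\bot$ is on the left and $\bot$ when it is on the right (your phrasing leaves this ambiguous). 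Note that your own modal corner case uses the correct orientation --- there the left component carries the contradiction and you take $\theta := \top$ --- so this is an internal inconsistency rather than a conceptual misunderstanding; still, as stated the axiom cases do not satisfy the definition and need to be flipped. With that correction the proposal is a sound proof of the lemma along the intended lines.
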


The key task is to obtain interpolants for the roots of proper clusters.

\begin{lemma}\label{l.clusterInterpolation}
Let $\pi$ be a uniform $\SCPDLf$ proof of $\Gamma \| \Xi$, assume that the root $r$ of $\pi$ belongs to a proper cluster $C$, and that for every exit node $t$ of $C$ we have an interpolant $\theta_{t}$ for the split sequent $\Plab^{l}_{t} \| \Plab^{r}_{t}$.
Then $\Gamma \| \Xi$ has an interpolant.
\end{lemma}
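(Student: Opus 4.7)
The plan is to extend the Maehara-style equation-solving approach that handled the singleton-cluster case of Lemma~\ref{l:itp1}, using an auxiliary structure in the style of~\cite{borz:prop25} to cope with the circular dependencies that a proper cluster introduces. For each node $s \in C$ I would introduce a variable $X_s$ standing for the (yet-to-be-constructed) interpolant at $s$, and read off a local equation for $X_s$ from the split rule applied at $s$. Propositional rules yield Boolean combinations (e.g.\ a left $\RuOr$ gives $X_s = X_{s_1} \lor X_{s_2}$, a right $\RuAnd$ gives $X_s = X_{s_1} \land X_{s_2}$); a left $\RuACut$ on a formula $\phi$ contributes a clause of the form $X_s = (\phi \land X_{s_0}) \lor (\mybar{\phi} \land X_{s_1})$, with a dual shape on the right; and a modal step $\RuDiaP$ produces an equation involving a diamond $\ldiap[a]$ on the focused side together with the converse diamonds $\ldiap[\conv{a}]$ that already appear in the definition of the rule on the unfocused side. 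For each exit node $t$ of $C$ we substitute $X_t \isdef \theta_t$.

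Were we working in the modal $\mu$-calculus, the fixpoints of this system would directly yield the desired interpolants; the core task is to show that, as in~\cite{borz:prop25}, the uniformity of $\pi$ together with Proposition~\ref{p:inftr} lets us solve the system \emph{inside} \CPDL. By Proposition~\ref{p:inftr}, every infinite trace through the cluster is cofinitely of a single type, either diamond or box, and uniformity forces this type to be the same across $C$; this choice determines whether the interpolant at the root will have the shape $\ldiap[\beta^*]\chi$ (diamond case) or $\lboxp[\beta^*]\chi$ (box case), for a suitable \CPDL program $\beta$ and a Boolean combination $\chi$ of the exit interpolants $\theta_t$.

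To build $\beta$ I would associate with $C$ an auxiliary labelled graph $G_C$ whose vertices record the relevant saturated portions of the split sequents visited inside $C$ together with the focus location, and whose edges are labelled by atomic programs $a$ (coming from $\RuDiaP$) or by tests $\psi?$ (coming from the propositional branchings that remain after uniformity is accounted for). Reading off the simple cycles of $G_C$ returning to a fixed root node yields a finite family of \CPDL programs whose union forms $\beta$, while the edges of $G_C$ leaving the cluster assemble $\chi$ from the $\theta_t$. The converse operator is indispensable at precisely those edges coming from a $\RuDiaP$ applied on the side opposite the focused component: the premise of such an application already contains the converse-diamond formulas $\ldiap[\conv{a}]\Theta$, and $G_C$ must carry this information back into $\beta$ so that the resulting star-formula faithfully tracks the bidirectional semantic flow across the split.

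The main obstacle will be verifying that the assembled $\theta$ really interpolates $\Gamma \| \Xi$, i.e.\ that both $\Gamma \| \theta$ and $\mybar{\theta} \| \Xi$ are unsatisfiable and that $\Voc(\theta) \subseteq \Voc(\Gamma) \cap \Voc(\Xi)$. For unsatisfiability I would argue semantically, turning a putative model of $\Gamma$ together with $\mybar{\theta}$ (respectively of $\theta$ together with $\Xi$) into a guided walk through $G_C$ that must eventually exit the cluster at some node $t$ where $\theta_t$ fails, contradicting the induction hypothesis. The vocabulary condition reduces to observing that every action $a$ labelling an edge of $G_C$ is principal in some $\RuDiaP$ inside $C$ and hence occurs on both sides of the cluster, while the tests in $\beta$ only carry formulas from $\FLN(\Gamma) \cup \FLN(\Xi)$ whose propositional symbols are either shared or introduced by an analytic cut. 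The truly delicate point is engineering $G_C$ so that its cycles unfold to a genuine \CPDL program (rather than a $\mu$-term), and so that a single star-based interpolant simultaneously solves the whole family of cluster equations in the expected way.
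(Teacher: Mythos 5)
Your high-level plan — an auxiliary structure in the style of Borzechowski, with tests carrying the exit interpolants, atomic programs coming from \RuDiaP, converse diamonds feeding information back across the split, and a star closing the cycles — is indeed the strategy the paper follows. But as written the proposal stops exactly at the point where the real work begins, and the two places where you defer the difficulty are the two places where the paper's construction is essential and non-obvious.

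First, your equation system has one unknown $X_s$ per node $s \in C$, with local equations read off from \emph{every} rule applied in the cluster, including the left rules. This does not lead to a solvable system: the cluster contains many nodes sharing the same focused (right) component but differing in their left components, and the left-rule applications among them contribute equations that cannot be absorbed into a single \CPDL program. The paper's key move is to project the cluster onto its focused side only: it builds a finite \emph{quasi-proof} $\sfQ$ whose nodes are labelled by the right components $\Delta \in \FRCp$ and whose regions $R_x$ collect all cluster nodes with that right label; the entire left half of the cluster enters the interpolant solely through the tests $\theta_{\Qlab_x}?$ on the conjunctions $\theta_\Delta$ of exit interpolants. Uniformity (\Cref{l.uniformChildren}) is what licenses this collapse — it guarantees that all nodes in a region $C^r_\Delta$ apply the same right rule with the same premises — and your proposal never uses uniformity for this purpose. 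Second, taking $\beta$ to be ``the union of the simple cycles of $G_C$ returning to a fixed root'' does not produce a correct program when the cluster has nested repeats: distinct companions at different depths require distinct, nested stars. The paper handles this with the indexed family $\alpha_{x,y}$ ($y$ ranging over the companions in $\K{x}$), defined by a leaf-to-root induction in which a companion node $x$ introduces $\ldiap[\alpha_{z,x}^*]\psi_z$ and rewrites the remaining programs as $\alpha_{z,x}^*;\alpha_{z,y}$; this sequential composition of a freshly starred program with the programs aimed at outer companions is precisely what a single union of simple cycles cannot express. You flag this as ``the truly delicate point'' but do not resolve it, and it is the heart of the lemma. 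Two smaller issues: the diamond/box dichotomy you derive from \Cref{p:inftr} is in the paper just the left/right symmetry of which component carries the focus (the focused formula is always a diamond trace, so only the diamond shape ever arises once one fixes the focused side); and the vocabulary claim that every modal action ``occurs on both sides'' is not automatic — the paper proves it by contradiction, using \Cref{l:vocIsPreserved} and the non-emptiness of left components (\Cref{l:itp4}) to show that $\conv{a} \notin \Voc(\Plab^l_t)$ would force an empty premise of $\RuDiaP^r$. Finally, note that the paper verifies both unsatisfiability conditions proof-theoretically (via \Cref{l:ASaq} for the star, and via unrestricted cut for $\mybar{\theta_r} \| \Xi$) rather than semantically; a semantic argument might be made to work, but the walk-through-$G_C$ sketch would still need the correctly nested program $\beta$ to even state what is being unfolded.
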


\begin{center}
Fix $\pi, C, r, \Gamma, \Xi$ for the remainder of the paper.
\end{center}
For reasons of symmetry we may confine our attention to the case where $\Xi$ is focused.
Furthermore, we will assume that $\Gamma$ is non-empty, since if $\Gamma = \emptyset$ we may simply define the interpolant to be the formula $\bot$.

\subsubsection{Proper clusters}
We first discuss proper clusters in some more detail.
Let
\[
C^{+} \isdef C \cup \{ s \in \pi \mid t \edge s, \text{ for some } t \in C\}
\]
be the set of nodes that either belong to $C$ or are the child of a node in $C$.
Then $C^+ \setminus C$ is the set of exit nodes of $C$.
The following lemma will be used implicitly.

\begin{lemma}\label{l:itp4}
For all $t \in C$, the following hold: (1) $\Plab^{r}_{t}$ and $\Plab^{l}_{t}$ are both non-empty; (2) $\Plab^{r}_{t}$ is focused; (3) all children of $t$ are in $C^{+}$, and at least one is in $C$; (4) if a right rule is applied at $t$ then $\Plab^{r}_{t} \neq \Plab^{r}_{u}$, for every child $u$ of $t$.
\end{lemma}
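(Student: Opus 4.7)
My plan is to derive all four claims from a single structural fact about clusters: for every $t \in C$ there is a repeat leaf $l \in C$ with $c(l) \edgeRT t \edgeRT l$, so that $t$ lies on the successful tree-path from $c(l)$ down to $l$. The proof of this fact rests on two observations. First, every leaf of the subtree $(C,\edge{\upharpoonright_C})$ must be a repeat leaf of $\pi$ whose companion also lies in $C$: otherwise the leaf has no $\cEdge$-successor in $C$ and cannot return to $r$. Second, for $t \in C$ with $t \neq r$, not all back edges whose source is in $t$'s subtree can land strictly below $t$; otherwise the subtree rooted at $t$ would be closed under $\cEdge$-successors, so $t$ could not reach the strictly higher root $r$. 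For $t = r$ the root has no proper ancestors, so any cycle through $r$ must close via a back edge with companion $r$. In either case a repeat leaf $l \in C$ with $c(l) \edgeRT t \edgeRT l$ exists; since $l$ is discharged, the discharge condition guarantees that every node on the tree path from $c(l)$ to $l$ carries a formula in focus, and in particular so does $t$.

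Claim~(2) then follows immediately: since every $t \in C$ has a focused formula, the rules $\RuF$ and $\RuU$ can never be applied inside $C$, because each produces (respectively requires) a sequent without any focused formula. All remaining rules leave the side on which the focus sits unchanged, and back edges preserve the focused side since $\Plab_l = \Plab_{c(l)}$. Hence the focused side is constant on $C$, and since $\Plab^r_r = \Xi$ is focused, so is $\Plab^r_t$ for every $t \in C$.

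For~(1), the non-emptiness of the right component is immediate from~(2). For the left component, the property ``left is empty'' is closed under $\cEdge$-successors: at a left-empty node no left rule is applicable (there is no principal formula on the left), so only right rules apply; non-modal right rules leave the left intact, while the right variant of $\RuDiaP$ applied to a conclusion with empty left $\lboxp[a]\Pi, \Lambda$ forces $\Pi = \Lambda = \emptyset$ and hence produces an empty-left premise; back edges preserve the sequent. If some $t \in C$ had $\Plab^l_t = \emptyset$, then the $\cEdge$-path from $t$ to $r$ would propagate emptiness all the way to $r$, contradicting $\Gamma \neq \emptyset$.

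For~(3), all children of $t$ are in $C^+$ by the very definition of $C^+$. Assuming $t$ has children (else the statement is vacuous), $t$ is not a repeat leaf, so its $\cEdge$-successors coincide with its tree children; a $\cEdge$-cycle through $t$ starts by stepping to some tree child of $t$, which then lies on the cycle and hence in $C$. For~(4), the uniformity conditions (U0)--(U3) guarantee that every rule application in $\pi$ is productive, i.e.\ the premise differs from the conclusion; for a right rule the right component of the premise always differs from that of the conclusion (non-modal right rules leave the left fixed, so productivity directly forces a new right; the right variant of $\RuDiaP$ visibly decomposes the focused $\ldiap[a]\phi^f$ on the right), giving $\Plab^r_t \neq \Plab^r_u$. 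The most delicate ingredient of this plan is the structural fact of the first paragraph, where the tree-like shape of $(C,\edge{\upharpoonright_C})$, the positioning of back edges, and the requirement of reaching $r$ via $\cEdge$ must be carefully combined.
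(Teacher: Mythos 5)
Your proof is essentially correct, and in fact the paper states Lemma~\ref{l:itp4} without proof (``will be used implicitly''), so there is no official argument to compare against; your route is the natural one. The structural fact is the right engine: every $t\in C$ lies on the tree path from $c(l)$ to $l$ for some repeat leaf $l$ with $c(l)\edgeRT t\edgeRT l$ (your closure argument on the subtree of $t$, plus the separate treatment of $t=r$, is sound, although note that you silently switch to the root-at-top convention, whereas the paper draws trees growing upwards --- harmless, but the phrase ``strictly higher root'' reads oddly against the paper's convention). Since $\pi$ is assumption-free, such an $l$ is discharged, so every node of $C$ carries a focus; ruling out \RuF and \RuU inside $C$ and propagating the focused side and left-nonemptiness along $\cEdge$-edges within $C$ (using $\Gamma\neq\emptyset$ and $\Xi$ focused at $r$) then gives (1) and (2) exactly as you say. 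Two points deserve tightening. First, for (3): a proper cluster always contains repeat leaves, and these have no tree children, so the conjunct ``at least one child is in $C$'' is not vacuous there --- it simply fails under the literal reading; the intended reading is for nodes at which a rule is applied (equivalently, one reads ``children'' in $\cyclicPT$, where the repeat leaf's successor is its companion). Your argument for non-leaf $t\in C$ is correct; just don't call the leaf case vacuous. Second, for (4): the blanket claim that uniformity makes \emph{every} rule application in $\pi$ productive is too strong (e.g.\ at a node with no focused component and both components saturated, none of U0--U3 forces productivity). What you need, and what does hold, is local: if a non-axiom right rule is applied at $t\in C$, then since $\Plab^r_t$ is focused, condition U1 could not have been applicable (it would have forced a left rule on the unfocused left component), so the left component is saturated and U2 applies, and U2 only sanctions productive rules; combined with the fact that right rules other than \RuDiaP leave the left component untouched, and that $\RuDiaP^r$ changes the focused formula from $\ldiap[a]\phi$ to $\phi$, this yields $\Plab^r_t\neq\Plab^r_u$ for every child $u$.
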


\begin{ourdefinition}
We let $\FRC$ denote the sets of sequents occurring as a right component in $C$, namely $\FRC \isdef \{ \Plab^{r}_{t} \mid t \in C\}$, and likewise for $\FRCp$.
Given a sequent $\Delta \in \FRCp$, we define $
   C_{\Delta} \isdef \{ t \in C \mid \Plab^{r}_{t} = \Delta \}$, 
 $C_{\Delta}^{+} \isdef \{ t \in C^{+} \mid \Plab^{r}_{t} = \Delta \}$
and we let $C_{\Delta}^{l}$ ($C_{\Delta}^{r}$, respectively), denote the set of nodes in $C_{\Delta}$ where a left rule (a right rule, respectively) is applied.
\end{ourdefinition}


\begin{lemma}
For all sequents $\Delta$, the following hold:
\begin{enumerate}
\item $C_{\Delta} = C_{\Delta}^{l} \uplus C_{\Delta}^{r}$.
\item If $t \in C_{\Delta}^{l}$ then the rule applied at $t$ is not the modal rule.
\item If $t \in C_{\Delta}^{l}$ then all of its children belong to $C_{\Delta}^+$, and at least one to $C_{\Delta}$.
\item If $C_\Delta$ is not empty then $C^{r}_\Delta$ is not empty. 
\end{enumerate}
\end{lemma}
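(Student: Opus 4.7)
The four clauses split into three bookkeeping items and one substantive part, with (4) being the real obstacle. Clauses (1)--(3) should reduce to the shape of split rules together with the fact that the right component $\Xi$ (and hence, by Lemma~\ref{l:itp4}(2), $\Plab^r_t$ for every $t \in C$) is focused. For (1) every split rule is by construction classified as either a left rule or a right rule, and at each cluster node exactly one rule is applied, so $C_\Delta^l$ and $C_\Delta^r$ are disjoint and exhaustive. For (2), if the modal rule $\RuDiaP^l$ were applied at $t \in C_\Delta^l$, its principal formula $\ldiap[a]\phi^f$ would be focused in the left component, contradicting that the focus lies on the right. For (3), every left rule other than the modal rule preserves the right component (by clause (1) in the definition of a left rule), so all children of $t \in C_\Delta^l$ have right component $\Delta$; combining this with Lemma~\ref{l:itp4}(3) shows they all lie in $C_\Delta^+$ and at least one lies in $C_\Delta$.

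The substance of the lemma is (4), which I would prove by contradiction, exploiting the success condition on repeat paths. Suppose $C_\Delta \neq \nada$ but $C_\Delta^r = \nada$. Starting from any $t_0 \in C_\Delta$ and iteratively invoking (3) to pick a child in $C_\Delta$, I produce a strictly descending $\edge$-path $t_0, t_1, t_2, \ldots$ inside $C_\Delta$. Because $\pi$ is finite, this path terminates at some leaf $l$. By Lemma~\ref{l:itp4}(3), no internal node of the cluster subtree $(C, \edge{\upharpoonright_C})$ has all its $\pi$-children outside $C$, so the leaves of this subtree are precisely the repeat leaves of $\pi$; in particular $l$ is a repeat leaf whose companion $c(l)$ carries the same label, giving $c(l) \in C_\Delta$.

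I then claim that the entire repeat path $\tau_l$ from $c(l)$ to $l$ lies inside $C_\Delta^l$. By the usual cluster argument, every intermediate node $v$ lies in $C$: it is an $\edge$-descendant of $c(l)$ and, via the back edge $l \cEdge c(l)$, an $\cEdge$-ancestor of $c(l)$. A straightforward induction along $\tau_l$ using $C_\Delta^r = \nada$ then shows that every such $v$ is in $C_\Delta^l$, since each visited node applies a left non-modal rule, which preserves the right component $\Delta$. But this means that along $\tau_l$ only left rules are applied, so the principal formula is always in the (unfocused) left component and the focused formula on the right is never principal. This directly contradicts the success condition, which $\tau_l$ must satisfy because $l$ is a discharged assumption.

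The main obstacle is purely administrative: tracking the distinction between tree edges $\edge$ and their back-edge enrichment $\cEdge$, between cluster membership and $\Delta$-indexed sub-cluster membership, and handling repeat leaves correctly. The core idea in (4) is the familiar ``focus cannot be successfully discharged on the wrong side'' principle for cyclic proof systems with focus annotations, and no further machinery beyond Lemma~\ref{l:itp4} is needed.
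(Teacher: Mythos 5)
The paper states this lemma without proof, so there is no official argument to compare against; judged on its own merits your reconstruction is correct, and for item (4) it is surely the intended argument: descend inside $C_\Delta$ via (3) until you reach a repeat leaf $l$, note that its companion $c(l)$ also lies in $C_\Delta$, show by induction along the repeat path (using $C^{r}_\Delta = \nada$, item (2), and the fact that non-modal, non-axiom left rules leave the right component untouched) that only left rules are applied on it, and conclude that the formula in focus --- which sits in the right component by Lemma~\ref{l:itp4} --- is never principal there, contradicting the success condition that discharges $l$. Items (1)--(3) are indeed the bookkeeping you describe, resting on the shape of split rules and on Lemma~\ref{l:itp4}.

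One caveat worth flagging: your justification of (1), ``at each cluster node exactly one rule is applied'', is in tension with your own proof of (4), which relies on the descent terminating at a repeat leaf --- a node at which no rule is applied at all. Repeat leaves do belong to proper clusters (they are precisely what supply the back edges), so read literally they lie in $C_\Delta$ but in neither $C^{l}_\Delta$ nor $C^{r}_\Delta$. This imprecision is inherited from the paper itself (Lemma~\ref{l:itp4}(3) likewise fails for repeat leaves unless ``children'' is read along $\cEdge$), and the standard repair is to identify each repeat leaf with its companion, or to read (1) and (3) as restricted to the non-leaf nodes of $C$. Under either convention your argument for (4) goes through unchanged, since the leaf you reach and its companion carry the same label and the companion is an internal node of $C_\Delta$; it would just be worth stating the convention explicitly rather than asserting that every cluster node carries a rule.
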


By uniformity of $\pi$, for each $\Delta \in \FRC$ there is a unique right rule $R_{\Delta}$ which is applied at each $t \in C_{\Delta}^{r}$ (provided $C_{\Delta}^{r} \neq \emptyset$). 
If $R_{\Delta}$ is the modal rule we call $\Delta$ a modal sequent.
If $\Delta \in \FRCp \setminus \FRC$, we call $\Delta$ an \emph{exit sequent}.

\begin{lemma}\label{l.uniformChildren}
If $\Delta$ is neither a modal nor an exit sequent then there are sequents $\Pi_{1},\ldots,\Pi_{n}$ such that $\bigwedge \Delta \equiv \bigvee_{i} \bigwedge \Pi_{i}$, and, for all $t \in C_{\Delta}^{r}$, the children of $t$ can be listed as $u_{1},\ldots, u_{n}$ such that $\Plab_{u_{i}} = \Plab^{l}_{t}\|\Pi_{i}$, for all $i=1,...,n$.
\end{lemma}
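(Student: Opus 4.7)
The plan is to use the uniformity condition U2 to pin down a canonical rule $R_\Delta$ with a canonical principal formula applied at every node of $C_\Delta^r$, and then to read off both the shape of the children and the required equivalence directly from that rule.

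First I would observe that since $\Delta$ is not an exit sequent, the preceding lemma gives $C_\Delta^r \neq \emptyset$. At any $t \in C_\Delta^r$ the left component $\Plab^l_t$ must in fact be saturated --- otherwise U1 would force a left rule and $t$ would not lie in $C_\Delta^r$. Applying U2 across $C_\Delta^r$, whose focused right component is the fixed sequent $\Delta$ and whose left components are all saturated, yields a unique rule $R_\Delta$ together with a unique principal formula in $\Delta$, applied at every such $t$. Since $\Delta$ is not modal we have $R_\Delta \neq \RuDiaP$; combined with the restriction that only diamond formulas may be principal while in focus, this leaves the propositional and program-constructor decomposition rules together with the unfocus rule \RuU as the only candidates.

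Second, since $R_\Delta$ is not the modal rule it does not touch the left component, so the children of any $t \in C_\Delta^r$ are necessarily of the form $\Plab^l_t \| \Pi_i$. The right-component premises $\Pi_1, \ldots, \Pi_n$ are determined entirely by $R_\Delta$, its principal formula, and $\Delta$, all of which are fixed across $C_\Delta^r$ by uniformity, so they are the same at every such $t$ --- exactly the enumeration the lemma demands.

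The final obligation is the equivalence $\bigwedge \Delta \equiv \bigvee_i \bigwedge \Pi_i$, which I would settle by a short case analysis on $R_\Delta$. For each decomposition rule the equivalence is immediate from the standard \CPDL tautology underlying that rule --- for example $\ldiap[\alpha \cup \beta]\phi \equiv \ldiap[\alpha]\phi \lor \ldiap[\beta]\phi$, its box dual $\lboxp[\alpha;\beta]\phi \equiv \lboxp[\alpha]\lboxp[\beta]\phi$, and the fixpoint unfolding $\ldiap[\alpha^*]\phi \equiv \phi \lor \ldiap[\alpha]\ldiap[\alpha^*]\phi$ --- together with the fact that the cumulative form of the rule keeps the principal formula inside $\Pi_i \supseteq \Delta$; the propositional rule \RuOr splits $\Delta$ via $(\phi \lor \psi) \land \bigwedge(\Delta \setminus \{\phi \lor \psi\}) \equiv \bigwedge \Delta$, and for \RuU the equivalence is trivial since premise and conclusion agree up to annotation. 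I expect the main obstacle to be the preparatory narrowing of the candidate rules: one must confirm that non-cumulative structural rules such as \RuWeak are genuinely excluded by U2's two options (cumulative-and-productive with unfocused principal, versus productive with focused principal), and handle \RuACut if it survives the pruning --- noting that $(\phi \land \bigwedge \Delta) \lor (\mybar{\phi} \land \bigwedge \Delta) \equiv \bigwedge \Delta$ makes the equivalence go through in that case as well. Once the candidate set is pinned down, the case analysis itself is essentially bookkeeping.
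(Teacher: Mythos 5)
Your proof is correct and follows the intended argument: the paper states this lemma without an explicit proof, treating it exactly as you do — a direct consequence of uniformity (U0/U1 forcing saturated left components at nodes of $C^{r}_{\Delta}$, U2 fixing a unique right rule and principal formula across them) together with the local \CPDL equivalences underlying each non-modal rule, including the \RuACut case you note. One small simplification: the \RuU case is vacuous, since a right instance of \RuU would unfocus the right component, so its unique child could not lie in $C$, contradicting Lemma~\ref{l:itp4}(2)--(3); axioms are likewise excluded simply because nodes of a proper cluster are not leaves.
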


\subsubsection{Quasi-proofs}
We can now introduce the pivotal structure in our interpolation proof: the \emph{quasi-proof} $\sfQ = (Q, \qedge, k, \Qlab)$ associated with the cluster $C$.
Roughly, $\sfQ$ is a finite labeled tree that represents the focused part of $C$.
In particular, its labeling is a map $\Qlab: Q \to \FRCp$ that respects the labeling of $C$ as suggested by Lemma~\ref{l.uniformChildren}; also, any node labeled with an exit sequent is a leaf of $\sfQ$.
To ensure that $\sfQ$ is based on a \emph{finite} tree, we make sure that every repeat node is a leaf.

To explain the role of the typing map $k$ in $\sfQ$, note that the purpose of $\sfQ$ is to help find an interpolant for the root $r$ of $C$.
We will do this by inductively associating with each node in $\sfQ$ an auxiliary formula that we will call a \emph{pre-interpolant}.
To facilitate this definition we construct $\sfQ$ in such a way that its internal nodes come in \emph{triples}.
The subsequent nodes of such a triple are all labeled with the same sequent in $\FRCp$, but they have a different \emph{type} (respectively, 1, 2 and 3).
This typing will play a role in the actual definition of the pre-interpolants.

\begin{ourdefinition}
Given the cluster $C$ we construct a structure $\sfQ = (Q, \qedge, k, \Qlab)$, called a \emph{quasi-proof}, step by step.
Here $(Q, \qedge)$ will be a finite tree, $k: Q \to \{ 1, 2, 3 \}$ types the nodes of  $\sfQ$ and $\Qlab: Q \to \FRCp$ is a labeling.

To start the construction, we put a root node $r_{\sfQ}$ in $\sfQ$ and let
$r_{\sfQ}$ have type 1 and label $\Xi$.
Inductively, given a node $x \in Q$, define the children of $x$ as follows:
\begin{description}
\item[Case $k(x) = 1$.]
If $x$ is a \emph{repeat} in $\sfQ$ (i.e., there exists $y \in Q$ such that $y$ is an ancestor of $x$ and $\Qlab_x=\Qlab_y$) or an exit (i.e., $\Qlab_{x}$ is an \emph{exit} sequent), then $x$ is a leaf.
Otherwise, $x$ has a unique child with type 2 and label $\Qlab_{x}$.

\item[Case $k(x) = 2$.]
Then $x$ has a unique child with type 3 and label $\Qlab_{x}$.

\item[Case $k(x) = 3$.] 
In this case\footnote{%
    Every type-3 node is the grandchild of a type-1 node with the same label,
	but any type-1 node $z$ such that $C^{r}_{\Qlab_{z}} = \emptyset$ is a leaf.
	}	
$C^{r}_{\Qlab_{x}} \neq \emptyset$.
If $\Qlab_{x}$ is modal, say, it is of the form $\ldiap[a]\phi^f, 
\lboxp[a]\Sigma, \Pi$ then $x$ has a unique child $y$ with type 1 and label 
$\phi^f, \Sigma, \ldiap[\conv{a}]\Pi$.

If $\Qlab_{x}$ is not modal then by \Cref{l.uniformChildren} there exist 
$\Pi_{1}, \dots, \Pi_{n}$ such that $\bigwedge \Qlab_x \equiv \bigvee_{i} 
\bigwedge \Pi_{i}$ and for all $t \in C_{\Qlab_x}^{r}$, the children of
$t$ can be listed as $u_{1}, \ldots, u_{n}$ with $\Plab_{u_{i}} = 
\Plab^{l}_{t}\|\Pi_{i}$, for all $i$.
We define the children of $x$ in $\sfQ$ as $y_{1}, \ldots, y_{n}$, where each
$y_{i}$ has type 1 and label $\Pi_{i}$.
\end{description}
Given the repeat condition it is fairly easy to check that $\sfQ$ is a finite tree.
\end{ourdefinition}

\begin{ourdefinition}
We let $<_{\sfQ}$ and $\leq_{\sfQ}$ denote, respectively, the
transitive and the reflexive-transitive closure of $\qedge$.
For a repeat leaf $z \in Q$, we let $c(z)$ be the \emph{companion} of $z$, defined as the unique node $x$ such that $x <_{\sfQ} z$, $k(x) = k(z) = 1$ and $\Qlab_{x} = \Qlab_{z}$.
We let $L_{\sfQ}$ and $K_{\sfQ}$ denote, respectively, the sets of repeats and companions of $\sfQ$, in particular $K_{\sfQ} \isdef \{ c(z) \mid z \in L_{\sfQ}\}$.
Given $x \in Q$ we set
\[
  \cycs{x} \isdef \{ z \in L_{\sfQ} \mid 
  c(z) <_{\sfQ} x \leq_{\sfQ} z \}
~\text{ and }~
  \K{x} \isdef \{ c(z) \mid z \in \cycs{x} \}.
\]
A \emph{repeat path} in $\sfQ$ is a sequence of the form 
$(x_{k})_{0\leq k \leq n}$ such that for some leaf $z$ we have $x_{0} = c(z)$,
$x_{n} = z$ and $x_{i} \qedge x_{i+1}$ for all $i<n$.
\end{ourdefinition}

\begin{lemma}\label{l.KandLProps}
Let $\sfQ$ be a quasi-proof and let $x \in Q$. Then the following hold:
\begin{enumerate}
\item $\Qlab_x$ is focused.
\item If $x$ is a leaf or a companion node in $\sfQ$, then $k(x) = 1$.
\item $\cycs{r_{\sfQ}} = \K{r_{\sfQ}} = \emptyset$.
\item $x \notin \K{x}$.
\item
If $x$ is not a companion then $\cycs{x} = \bigcup_{x \qedge y} \cycs{y}$ and $\K{x} = \bigcup_{x \qedge y} \K{y}$.
\item 
If $x$ only has one child, say, $y$, then $\cycs{x} \subseteq \cycs{y}$ and $\K{x} \subseteq \K{y}$.
\item \label{eqit:6}
Every repeat path features nodes with distinct formulas in focus.
\end{enumerate}
\end{lemma}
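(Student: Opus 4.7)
The plan is to establish items 1--7 in order. Items 2--6 are essentially combinatorial statements about the finite tree $(Q, \qedge)$ together with the definitions of $\cycs{\cdot}$ and $\K{\cdot}$, and should fall out by unpacking definitions. Items 1 and 7 require more: they draw on properties of the cluster $C$ and the uniformity of the underlying proof $\pi$.

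For item 1 I would argue by induction on the distance from $r_\sfQ$. The root is labeled with $\Xi$, which is focused by our standing assumption. The transitions of types 1 $\to$ 2 and 2 $\to$ 3 copy the label verbatim, preserving focus. In the modal type-3 case the child's label explicitly contains $\phi^f$. In the non-modal type-3 case, each child label $\Pi_i$ is the right component of a child (in $\pi$) of some $t \in C^{r}_{\Qlab_{x}}$; those children lying in $C$ have focused right components by Lemma~\ref{l:itp4}(2), and any exit children inherit focus from the rule $R_\Delta$, since within a proper cluster $R_\Delta$ cannot be \RuU (this would yield an unfocused exit child, contradicting the existence of successful cycles through the cluster).

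Items 2--4 follow by unpacking definitions: type-2 and type-3 nodes have children by construction, so any leaf is type 1, and companions $c(z)$ are type 1 directly by the definition of $c(\cdot)$; item 3 holds because no node lies strictly below $r_\sfQ$; and item 4 follows because $z \in \cycs{x}$ forces $c(z) <_\sfQ x$ strictly, so $x \neq c(z)$. For item 5, I use that the ancestors of any node form a linear chain in $\sfQ$: given $z \in \cycs{y}$ for a child $y$ of $x$, we have $c(z) <_\sfQ y$, and since $x$ and $c(z)$ are both ancestors of $y$, either $c(z) <_\sfQ x$ (yielding $z \in \cycs{x}$) or $x \leq_\sfQ c(z)$; the latter combined with $x \qedge y$ forces $x = c(z)$, which is ruled out because $x$ is not a companion. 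The reverse inclusion and the $\K{\cdot}$ statements are equally direct, and item 6 specialises item 5 to unique children.

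The main obstacle is item 7. A repeat path from a type-1 companion $c(z)$ to its repeat leaf $z$ must traverse at least one triple (types 1, 2, 3), since only type-3 nodes produce fresh type-1 labels. At a modal type-3 step the focused formula strictly changes (from $\ldiap[a]\phi$ to $\phi$); at a non-modal type-3 step the focus is preserved unless the principal formula of $R_\Delta$ is itself in focus, in which case it moves to one of its subformulas. The decisive point is that the repeat path corresponds, by construction of $\sfQ$, to a successful cycle in $\pi$, and hence must contain a node where the focused formula is principal. At that step the focused formula necessarily changes along the path, giving two nodes on the repeat path with distinct formulas in focus. The main subtlety is identifying precisely how success of the cycle in $\pi$ transfers to the required progress in $\sfQ$; this is why item 7 is the delicate one.
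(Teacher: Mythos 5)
Your handling of items 1--6 is fine (the paper treats them as immediate from the definitions; for item 1 the cleaner way to rule out a conceding $\RuU^r$ inside $C$ is Lemma~\ref{l:itp4}(2)--(3) rather than an appeal to successful cycles). The genuine gap is in item 7, exactly where you flag the ``subtlety''. You obtain a node on the repeat path where the focused formula is principal from the claim that the repeat path in $\sfQ$ ``corresponds to a successful cycle in $\pi$''. No such correspondence is available: a repeat in $\sfQ$ is a repetition of the \emph{right component only}, each node $x$ of $\sfQ$ stands for a whole region $R_x$ of cluster nodes whose left components vary, and the success condition of $\pi$ is a condition on $\pi$'s own repeat paths (companion to repeat leaf, with equal full split sequents), not on arbitrary trajectories through the strongly connected component that your $\sfQ$-path would trace. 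So the transfer you leave open is not a loose end; it is the heart of the proof, and it is doubtful it can be carried out in the form you propose.

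The paper's proof avoids any appeal to the success condition. Write the repeat path as $x_0y_0z_0x_1\cdots z_{n-1}x_n$ (types $1,2,3$), with labels $\Qlab_i$ and focused formulas $\xi_i$, so $\Qlab_0 = \Qlab_n$. The key claim is: if $\xi_i = \xi_{i+1}$ then $\Qlab_i \subsetneq \Qlab_{i+1}$. Indeed, some $t \in C^{r}_{\Qlab_i}$ has a child $u$ with $\Plab^{r}_u = \Qlab_{i+1}$, and Lemma~\ref{l:itp4}(4) already gives $\Qlab_i \neq \Qlab_{i+1}$; if moreover the formula in focus does not change, the principal formula at $t$ must be unfocused, so by uniformity (condition U2 of Definition~\ref{def.uniformSplitDerivation}) the rule applied there is cumulative and productive, which yields the strict inclusion. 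Hence, if the focused formula never changed along the repeat path, the sequence $\Qlab_0 \subsetneq \cdots \subsetneq \Qlab_n$ would be strictly increasing, contradicting $\Qlab_0 = \Qlab_n$. This is the progress argument your proposal is missing: it rests on uniformity and Lemma~\ref{l:itp4}(4), not on successful cycles, and it also makes your intermediate step ``when the focused formula is principal it must change'' unnecessary.
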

\begin{proof}
The items 1 -- 6 follow immediately from the definitions.
For item~\ref{eqit:6} let $\rho$ be a repeat path. We write $\rho = x_{0}y_{0}z_{0}x_{1}y_{1}\cdots z_{n-1}x_{n}$, where the $x, y$ and $z$-nodes are, respectively, of type $1, 2$ and $3$.
Note that $\Qlab_{x_{i}} = \Qlab_{y_{i}} = \Qlab_{z_{i}}$ for all $i<n$, and that $\Qlab_{x_{0}} = \Qlab_{x_{n}}$.  
We will simply write $\Qlab_{i}$ for $\Qlab_{x_{i}}$ and let $\xi_{i}$ denote the formula in focus in $\Qlab_{i}$.
The key claim in the proof is the following:
\begin{equation}
\label{eq:pr1}
\text{if $\xi_{i} = \xi_{i+1}$ then $\Qlab_{i} \subset \Qlab_{i+1}$.}
\end{equation}
To see this, first note that by definition of $\sfQ$ there must be some $t \in C^{r}_{\Qlab_{i}}$ which has a successor $u \in C^{r}_{\Qlab_{i+1}}$.
Hence by Lemma~\ref{l:itp4}(4) the sets $\Qlab_{i} = \Plab^{r}_{t}$ and $\Qlab_{i+1} = \Plab^{r}_{u}$ must be distinct.
Now assume $\xi_{i} = \xi_{i+1}$; it follows that the principal formula at $t$ must be unfocused.
But by uniformity the rule applied to this formula is cumulative and productive, which implies that $\Qlab_{i}$ is a proper subset of $\Qlab_{i+1}$.
This proves \eqref{eq:pr1}.

Now assume for contradiction that $\xi_{i} = \xi_{i+1}$ for all $i<n$.
Then $(\Qlab_{i})_{0\leq i \leq n}$ is a strictly increasing sequence of sets, which clearly contradicts the assumption that $\Qlab_{0} = \Qlab_{n}$.

\end{proof}


As mentioned before, each node $x$ in $\sfQ$ represents a certain (not necessarily connected) subset $R_x$ of $C^+$, which we call its \emph{region}:
\[
R_{x} \isdef \left\{\begin{array}{ll}
   C_{\Qlab_x}^{+} & \text{if } k(x) = 1,2
\\[1mm] C_{\Qlab_x}^{r} & \text{if } k(x) = 3.
\end{array}\right.
\]

\begin{lemma}\label{l.uniformChildrenSimulation}
Let $x \in \sfQ$ with $k(x) = 3$.
If we list the children of $x \in \sfQ$ as $z_1, \dots, z_n$ then for all
$t \in R_x$, the children of $t$ may be listed as $s_1, \dots, s_n$ so that 
$s_i \in R_{z_i}$ for all $1 \leq i \leq n$.
\end{lemma}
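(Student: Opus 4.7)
My plan is to prove the lemma by case analysis on whether $\Qlab_x$ is modal, since the children of a type-3 node $x$ in $\sfQ$ are defined by bifurcating on exactly this condition. The overall strategy is to exhibit a node-for-node correspondence between the children of $x$ in $\sfQ$ and the children in $\pi$ of any $t \in R_x$.

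First I would record two preliminary observations. Since $k(x) = 3$, the construction of $\sfQ$ ensures that $R_x = C^{r}_{\Qlab_x} \neq \emptyset$. Moreover, because $\Qlab_x$ is focused (by Lemma~\ref{l.KandLProps}) and every node of $R_x$ shares this focused right component, the uniformity condition (U3) guarantees that one and the same right rule $R_{\Qlab_x}$ with the same principal formula is applied at every $t \in R_x$.

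In the modal case, where $\Qlab_x = \ldiap[a]\phi^f, \lboxp[a]\Sigma, \Pi$, the node $x$ has a unique child $z_1$ of type $1$ with $\Qlab_{z_1} = \phi^f, \Sigma, \ldiap[\conv{a}]\Pi$, and each $t \in R_x$ has a unique child $s_1$ obtained by firing $\RuDiaP^r$, whose right component is again of the form $\phi^f, \Sigma, \ldiap[\conv{a}]\Pi$. Here I would verify that the set $\ldiap[\conv{a}]\Pi$ appearing in the premise of the split modal rule coincides with the one used in the definition of $\sfQ$; both are computed relative to $\FLN$ of the same right component $\Qlab_x = \Plab^{r}_t$, so they agree. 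Consequently $\Plab^{r}_{s_1} = \Qlab_{z_1}$, and hence $s_1 \in C^{+}_{\Qlab_{z_1}} = R_{z_1}$.

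In the non-modal case, I would invoke Lemma~\ref{l.uniformChildren} directly: it supplies sequents $\Pi_1, \dots, \Pi_n$ such that for every $t \in R_x$ the children of $t$ enumerate as $u_1, \dots, u_n$ with $\Plab_{u_i} = \Plab^{l}_t \| \Pi_i$, in the same order as the children $z_1, \dots, z_n$ of $x$ in $\sfQ$ (which have $\Qlab_{z_i} = \Pi_i$). Setting $s_i \isdef u_i$ therefore gives $\Plab^{r}_{s_i} = \Pi_i = \Qlab_{z_i}$, so $s_i \in C^{+}_{\Qlab_{z_i}} = R_{z_i}$. The only step requiring genuine attention is the coincidence of the two instances of $\ldiap[\conv{a}]\Pi$ in the modal case; everything else is a direct unpacking of how the type-3 children in $\sfQ$ have been engineered to mirror right-rule applications at the nodes of $C$.
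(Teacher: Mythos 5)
Your proof is correct and matches the intended argument: the paper states this lemma without proof, treating it as a direct consequence of Lemma~\ref{l.uniformChildren} (non-modal case) and of the shape of the split rule $\RuDiaP^{r}$ together with the definition of the type-3 children of $\sfQ$ (modal case), which is exactly what you unpack. Your explicit check that the two occurrences of $\ldiap[\conv{a}]\Pi$ are computed relative to the same set $\FLN(\Qlab_x) = \FLN(\Plab^{r}_t)$ is the right point to verify and is handled correctly.
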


\subsubsection{Pre-interpolants and the interpolant}
We are now ready to define the interpolant $\theta_{r}$ for the root $r$ of $C$.
The key idea underlying this definition is to first associate with each node $x$ in the quasi-proof $\sfQ$ a so-called \emph{pre-interpolant} $\iota_{x}$.
These pre-interpolants are auxiliary formulas that will be defined by a leaf-to-root induction on the tree $(Q,\qedge)$; once we have arrived at the root $r_{\sfQ}$ of $\sfQ$ we simply define the interpolant $\theta_{r}$ as $\theta_r \isdef \iota_{r_{\sfQ}}$.
For the definition of these pre-interpolants we extend the language with a set
$
\{ q_{x} \mid x \in K_{\sfQ} \}
$
of internal variables, and with every set $\Delta \in \FRCp$  we associate an `exit interpolant':
\[
\theta_{\Delta} \isdef \bigwedge \{ \theta_{t} \mid t \in C^{+}_{\Delta}\setminus C_{\Delta}\}.
\]


\begin{ourdefinition}\label{d.interpolant}
By a leaf-to-root induction, we define for all nodes $x, y \in Q$ a formula $\psi_{x}$ and a family of programs $\{\alpha_{x,y} \mid y \in \K{x}\}$. In all applicable cases, $z$ denotes the unique successor of $x$, and in the case where $x$ is a modal node of type 3, $a$ denotes the leading atomic program of the formula in focus in $\Qlab_{x}$.  Note that for exit nodes $x$, we have $\K{x} = \emptyset$, so no definition of $\alpha_{x, y}$ is required.
\[
\begin{tabular}{|l|c|c|}
\hline
\textbf{Case} & \hspace{3pt} $\psi_x$ \hspace{3pt} & \hspace{3pt} $\alpha_{x, y}$ \hspace{3pt} \\
\hline
\text{x is a repeat} & \hspace{3pt} $\bot$ \hspace{3pt} & \hspace{3pt} $\top?$ \hspace{3pt} \\
\text{x is an exit} & \hspace{3pt} $\theta_{\Qlab_x}$ \hspace{3pt} & \hspace{3pt} ${-}$ \hspace{3pt} \\
\text{x is a companion} & \hspace{3pt} $\ldiap[\alpha_{z, x}^{*}]\psi_z$ \hspace{3pt} & \hspace{3pt} $\alpha_{z, x}^*; \alpha_{z, y}$ \hspace{3pt} \\
\text{x is otherwise of type $1$} & \hspace{3pt} $\psi_{z}$ \hspace{3pt} & \hspace{3pt} $\alpha_{z, y}$ \hspace{3pt} \\
\text{x is of type $2$} & \hspace{3pt} $\ldiap[\theta_{\Qlab_{x}}?]\psi_{z}$ \hspace{3pt} & \hspace{3pt} $\theta_{\Qlab_{x}}?;\alpha_{z, y}$ \hspace{3pt} \\
\text{x is of type $3$, not modal} & \hspace{3pt} $\bigvee \{ \psi_{z} \mid x \qedge z \}$ \hspace{3pt} & \hspace{3pt} $\bigcup \{ \alpha_{z, y} \mid x \qedge z, y \in \K{z} \}$ \hspace{3pt} \\
\text{x is of type $3$, modal} & \hspace{3pt} $\ldiap[a] \psi_{z}$ \hspace{3pt} & \hspace{3pt} $a;\alpha_{z, y}$ \hspace{3pt} \\
\hline
\end{tabular}
\]

\noindent
Based on these expressions the \emph{pre-interpolant} $\iota_x$ of a node $x \in Q$ is defined as:
\[
\iota_x \;\isdef\; \psi_x \lor \bigvee_{y \in \K{x}} \ldiap[\alpha_{x, y}] q_y.
\]
\end{ourdefinition}

\noindent
Note that the programs $\alpha_{x,y}$ and formulas $\psi_x$ do not contain internal variables.

\begin{ourdefinition}\label{d:itp}
We define the interpolant $\theta_r$ of the root $r$ of the cluster $C$ as
\[
\theta_{r} \isdef \iota_{r_{\sfQ}}.
\]
\end{ourdefinition}


\section{Correctness of the interpolant}\label{sec.correctness}

To prove Lemma~\ref{l.clusterInterpolation} and thereby establish the Craig interpolation property for \CPDL, we verify that the formula $\theta_r$ from Definition~\ref{d:itp} satisfies the three conditions of an interpolant for $\Gamma \| \Xi$:
first, the vocabulary condition $\Voc(\theta_r) \subseteq \Voc(\Gamma) \cap \Voc(\Xi)$ in \Cref{l.app.itpvoc}; 
second, that $\Gamma \| \theta_r$ is unsatisfiable in \Cref{l.app.GammaInterpolant}; 
and third, that $\mybar{\theta_r} \| \Xi$ is unsatisfiable in \Cref{l.InterpolantDelta}.

To establish the base cases of \Cref{l.app.itpvoc,l.app.GammaInterpolant,l.InterpolantDelta}, we first state the following auxiliary lemma that follows from the assumptions of \Cref{l.clusterInterpolation}.

\begin{lemma}\label{l.app.thetaProps}
For any sequent $\Delta \in \FRCp$ the following hold:
\begin{enumerate}
    \item $\proves \mybar{\theta_{\Delta}} \| \Delta$.
    \item $\proves \Plab^{l}_{t} \| \theta_{\Delta}$, for all $t \in C^{+}_{\Delta} \setminus C_{\Delta}$.
    \item $\Voc(\theta_\Delta) \subseteq \Voc(\Gamma) \cap \Voc(\Xi)$. 
\end{enumerate}
\end{lemma}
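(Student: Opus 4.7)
The plan is to reduce each of the three claims to the corresponding property of the individual exit-node interpolants $\theta_t$, for $t \in C^+_\Delta \setminus C_\Delta$, which are given by the assumptions of Lemma~\ref{l.clusterInterpolation}, and then combine these through the conjunctive structure of $\theta_\Delta = \bigwedge_{t \in C^+_\Delta \setminus C_\Delta} \theta_t$. The degenerate case $C^+_\Delta \setminus C_\Delta = \emptyset$ is handled directly: here $\theta_\Delta = \top$, so item~1 reduces to $\proves \bot \| \Delta$, provable by $\AxBot$, while items~2 and~3 are respectively vacuous and trivial.

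For item~1, I would prove $\proves \mybar{\theta_\Delta} \| \Delta$ by observing that $\mybar{\theta_\Delta} = \bigvee_t \mybar{\theta_t}$ and applying $\RuOr^l$ repeatedly to decompose the disjunction in the left component; the resulting open branches each have a left component of the form $\{\mybar{\theta_t}\}$ for some exit node $t$, and are discharged by the second interpolant condition $\proves \mybar{\theta_t} \| \Delta$. For item~2, fixing $t \in C^+_\Delta \setminus C_\Delta$, I would start from the interpolant proof of $\Plab^l_t \| \theta_t$, apply $\RuWeak^r$ to add each $\theta_s^u$ (for $s \neq t$) to the right component, and finally apply $\RuAnd^r$ repeatedly to fuse these right-hand-side formulas into the single conjunction $\theta_\Delta$.

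For item~3, the plan hinges on the component-wise analyticity invariant that $\Voc(\Plab^l_u) \subseteq \Voc(\Gamma)$ and $\Voc(\Plab^r_u) \subseteq \Voc(\Xi)$ hold at every node $u$ of $\pi$. This is established by a routine induction on $\pi$: the propositional decomposition rules strictly simplify formulas; $\RuACut$ introduces cut formulas from the component's own Fischer--Ladner closure and hence cannot extend its vocabulary; and the modal rule $\RuDiaP$ introduces $\ldiap[\conv{a}]$-prefixed formulas in both components, which is benign thanks to the paper's convention that $a$ and $\conv{a}$ are treated as a single vocabulary item. Granted this invariant, for each exit node $t$ we have $\Voc(\theta_t) \subseteq \Voc(\Plab^l_t) \cap \Voc(\Plab^r_t) \subseteq \Voc(\Gamma) \cap \Voc(\Xi)$, so the conjunctive definition of $\theta_\Delta$ yields the required vocabulary containment.

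The most delicate issue I expect is the careful bookkeeping of focus annotations when piecing together the constituent derivations. Since at most one component may be focused at a time, and since the interpolant proofs furnished by the assumptions of Lemma~\ref{l.clusterInterpolation} may come with specific focus configurations (in particular, $\Delta$ may or may not be focused depending on whether $\Delta \in \FRC$ or $\Delta \in \FRCp \setminus \FRC$), one must intersperse $\RuU$ and $\RuF$ at the appropriate junctions to keep focus consistent, particularly at the $\RuOr^l$ splits in item~1 and at the $\RuAnd^r$ fusions in item~2. This is routine but requires explicit verification.
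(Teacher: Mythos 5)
Your proposal is correct and coincides with the paper's (implicit) justification: the paper gives no proof beyond remarking that the lemma follows from the assumptions of Lemma~\ref{l.clusterInterpolation}, and your reduction to the exit interpolants $\theta_t$ through the conjunctive definition of $\theta_\Delta$ (disjunction split for item~1, weakening plus $\RuAnd^r$ for item~2, analyticity of the rules for item~3, with the empty-conjunction case handled separately) is exactly that argument. The only point worth making explicit is that the interpolant conditions for the $\theta_t$ are stated semantically (as unsatisfiability of $\Plab^{l}_{t} \| \theta_t$ and $\mybar{\theta_t} \| \Plab^{r}_{t}$), so completeness of \SCPDLf must be invoked to obtain the provability statements $\proves \mybar{\theta_t} \| \Delta$ and $\proves \Plab^{l}_{t} \| \theta_t$ that your derivations plug in; with that noted, the focus bookkeeping you flag is indeed routine.
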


\subsection{Proof of vocabulary condition}

The following lemma is used to show the vocabulary condition.

\begin{lemma}\label{l:vocIsPreserved}
    For all $t \in C$ we have that:
    \[
    \Voc(\Plab^{l}_t) \subseteq \Voc(\Gamma)
    \quad \text{and} \quad
    \Voc(\Plab^{r}_t) \subseteq \Voc(\Xi).
    \]
\end{lemma}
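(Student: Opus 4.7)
The plan is to prove the lemma by induction on the tree-depth of $t$ in $\pi$, starting from the root $r$. Since the cluster $C$ is a sub-tree of $\pi$ rooted at $r$, every node $t \in C$ is reachable from $r$ via a finite sequence of $\edge$-steps, so this induction is well-founded. The base case $t = r$ holds trivially, since $\Plab_{r} = \Gamma \| \Xi$. For the inductive step, fix $s \edge t$ with $t \in C$, assume the inductive hypothesis $\Voc(\Plab^{l}_{s}) \subseteq \Voc(\Gamma)$ and $\Voc(\Plab^{r}_{s}) \subseteq \Voc(\Xi)$, and proceed by case analysis on the rule applied at $s$.

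For every propositional and regular-program rule the premise, in each of its two components, consists of immediate trace-successors of the principal formula — that is, formulas $\psi$ with $\phi \tracestep \psi$ for the conclusion-side principal $\phi$ — so the vocabulary of each component can only shrink or remain the same. The weakening rule $\RuWeak$ merely drops a formula in the premise, and the annotation rules $\RuU$, $\RuF$ leave the labels unchanged. Hence in all these cases the inductive hypothesis transfers directly.

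The rule $\RuACut$ applied on the left introduces a cut-formula $\phi$ subject to the analyticity restriction $\phi \in \FLN(\Plab^{l}_{s})$. Since both the Fischer-Ladner closure operator and the negation map $\mybar{\cdot}$ preserve vocabulary, we obtain $\Voc(\phi) \subseteq \Voc(\FLN(\Plab^{l}_{s})) = \Voc(\Plab^{l}_{s}) \subseteq \Voc(\Gamma)$, while the right component is unchanged. The right-sided case is symmetric.

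The main case, and the one I expect to carry the real subtlety, is the split modal rule $\RuDiaP^{l}$, since this is the only rule in which formulas from one component affect the other. On the left the principal formula $\ldiap[a]\phi$ lies in $\Plab^{l}_{s}$, so $a$ (and hence also $\conv{a}$, by the paper's convention that $a$ and $\conv{a}$ always occur together in vocabularies) belongs to $\Voc(\Gamma)$, which covers the newly introduced prefix $\ldiap[\conv{a}]$ in $\ldiap[\conv{a}]\Lambda$ on the left. The delicate step is on the right, where $\ldiap[\conv{a}]\Theta$ appears in the premise even though $a$ need not a priori occur in $\Plab^{r}_{s}$. Here the split rule's built-in restriction --- namely, $\ldiap[\conv{a}]\chi \in \FLN(\Plab^{r}_{s})$ for every $\chi$ admitted into $\Theta$ --- guarantees that whenever $\ldiap[\conv{a}]\Theta$ is non-empty we have $\conv{a} \in \Voc(\FLN(\Plab^{r}_{s})) = \Voc(\Plab^{r}_{s}) \subseteq \Voc(\Xi)$. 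This is precisely the reason the split modal rule was formulated with that restriction. The right-sided modal rule is symmetric, which completes the induction.
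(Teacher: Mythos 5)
Your proposal is correct and follows essentially the same route as the paper, which proves the lemma by a root-to-leaf induction on $C$ appealing to the analyticity of the rules; you have simply made explicit the case analysis (in particular the $\FLN$-restrictions in \RuACut and in the split modal rule, and the convention that $a$ and $\conv{a}$ share a vocabulary) that the paper leaves implicit.
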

\begin{proof}
    By root-to-leaf induction on $C$ using the fact that all our proof rules are analytic.
\end{proof}

We now state and show the vocabulary condition.

\begin{lemma}\label{l.app.itpvoc}
    For all nodes $x \in Q$, we have
    \[
    \Voc(\iota_{x}) \subseteq \big( \Voc(\Gamma) \cap \Voc(\Xi) \big)
    \cup \{ q_{y} \mid y \in \K{x} \}.
    \]
    As an immediate corollary, we have:
    $\Voc(\theta_{r}) \subseteq \Voc(\Gamma) \cap \Voc(\Xi)$.
\end{lemma}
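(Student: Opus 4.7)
The plan is to prove, by a leaf-to-root induction on the tree $(Q,\qedge)$, the following strengthening: for every $x \in Q$, both $\Voc(\psi_{x}) \subseteq \Voc(\Gamma) \cap \Voc(\Xi)$ and $\Voc(\alpha_{x,y}) \subseteq \Voc(\Gamma) \cap \Voc(\Xi)$ for each $y \in \K{x}$. Given the definition $\iota_{x} \isdef \psi_{x} \lor \bigvee_{y \in \K{x}} \ldiap[\alpha_{x,y}] q_{y}$, this strengthening immediately yields the stated bound on $\Voc(\iota_{x})$; specialising to $x = r_{\sfQ}$ and using $\K{r_{\sfQ}} = \emptyset$ (\Cref{l.KandLProps}(3)) then gives the corollary $\Voc(\theta_{r}) \subseteq \Voc(\Gamma) \cap \Voc(\Xi)$.

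Most cases are routine. The base cases are immediate: for a repeat both $\psi_{x} = \bot$ and $\alpha_{x,c(x)} = \top?$ are vocabulary-free, while for an exit $\K{x} = \emptyset$ and \Cref{l.app.thetaProps}(3) takes care of $\psi_{x} = \theta_{\Qlab_{x}}$. For an inductive step at a type-1 (non-companion), type-2, or type-3 (non-modal) node, the induction hypothesis on the children combined with \Cref{l.KandLProps}(5) (giving $\K{x} = \bigcup_{x \qedge z}\K{z}$) handles the claim directly; the type-2 case additionally invokes \Cref{l.app.thetaProps}(3) on the test formula $\theta_{\Qlab_{x}}$. For the companion case, the unique type-2 child $z$ of $x$ satisfies $x \in \K{z}$ (since any repeat leaf $\ell$ with $c(\ell)=x$ lies weakly below $z$) and $\K{x} \subseteq \K{z}$ by \Cref{l.KandLProps}(6), so the induction hypothesis on $z$ simultaneously controls $\alpha_{z,x}$ and each $\alpha_{z,y}$ with $y \in \K{x}$.

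The main obstacle is the modal type-3 case, where $\psi_{x} = \ldiap[a]\psi_{z}$ and $\alpha_{x,y} = a;\alpha_{z,y}$: beyond the inductive hypothesis on the unique child $z$, we must establish $a \in \Voc(\Gamma) \cap \Voc(\Xi)$. The containment $a \in \Voc(\Xi)$ is direct, since for any $t \in C^{r}_{\Qlab_{x}}$ the principal formula $\ldiap[a]\phi^{f}$ sits in $\Plab^{r}_{t}$ and \Cref{l:vocIsPreserved} gives $\Voc(\Plab^{r}_{t}) \subseteq \Voc(\Xi)$. For $a \in \Voc(\Gamma)$, fix such a $t$ and write its left component as $\Plab^{l}_{t} = \lboxp[a]\Pi, \Theta$. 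If $\Pi \neq \emptyset$ then $a \in \Voc(\Plab^{l}_{t})$ directly. Otherwise, the unique child $u$ of $t$ produced by the modal rule must lie in $C$ by \Cref{l:itp4}(3), and hence by \Cref{l:itp4}(1) its left component $\Plab^{l}_{u} = \ldiap[\conv{a}]\Theta$ is non-empty; this forces some $\chi \in \Theta$ to satisfy $\ldiap[\conv{a}]\chi \in \FLN(\Plab^{l}_{t})$, and since $\FLN$-closure does not introduce fresh atomic programs, we conclude $\conv{a} \in \Voc(\Plab^{l}_{t})$ and therefore $a \in \Voc(\Plab^{l}_{t})$ by the symmetric vocabulary convention. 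In either subcase, \Cref{l:vocIsPreserved} yields $a \in \Voc(\Gamma)$, completing the induction.
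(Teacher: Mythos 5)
Your proposal is correct and takes essentially the same approach as the paper: the same leaf-to-root induction on $\sfQ$ with the same ingredients (\Cref{l.app.thetaProps}, \Cref{l.KandLProps}, \Cref{l:vocIsPreserved}), and in the crucial modal type-3 case the same argument that if $a \notin \Voc(\Gamma)$ the left component of the modal child would be empty, contradicting \Cref{l:itp4}. Your reformulation via $\psi_x$ and $\alpha_{x,y}$ instead of $\iota_x$ is only a cosmetic repackaging (these expressions contain no internal variables), and your explicit case split on whether the boxed part of $\Plab^l_t$ is empty just spells out a step the paper leaves implicit.
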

\begin{proof}
The statement of the lemma is proved by a leaf-to-root induction on the 
structure of the quasi-proof $\sfQ$.

\begin{description}
\item[Case $k(x) = 1$, $x$ is a repeat.]
We have $\iota_{x} = q_{c(x)}$, so the claim follows from the fact that $c(x) \in \K{x}$.

\item[Case $k(x) = 1$, $x$ is an exit.]
We have $\iota_{x} = \theta_{\Qlab_x}$, so the claim holds by 
\Cref{l.app.thetaProps}.

\item[Case $k(x) = 1$, $x$ is a companion.]
In this case $x$ has a unique child $z$, and
\[
\iota_{x} = \bigvee_{y \in \K{x}} \ldiap[\alpha_{z, x}^*;\alpha_{z, y}]q_y \lor \ldiap[\alpha_{z, x}^*]\psi_z.
\]
and since $\K{z} = \K{x} \cup \{x\}$ we have that:
\[
\iota_{z} = \ldiap[\alpha_{z, x}] q_x \lor \bigvee_{y \in \K{x}} \ldiap[\alpha_{z, y}] q_y \lor \psi_z.
\]

By the inductive hypothesis and the definition of $\iota_{x}$, it is straightforward to calculate that:
    \[
    \Voc(\iota_x) \subseteq
    \big( \Voc(\Gamma) \cap \Voc(\Xi) \big)
    \cup 
    \big( \{ q_{y} \mid y \in \K{z} \} \setminus \{ q_x \}\big).
    \]

\item[Case $k(x) = 1$, $x$ is neither a leaf nor a companion.]
    In this case, $x$ has a unique child $z$, and $\iota_{x} = \iota_{z}$.
    By \Cref{l.KandLProps}, we have $\K{x} = \K{z}$.
    By the induction hypothesis for $z$ the claim follows.

\item[Case $k(x) = 2$.]
    In this case $x$ has a unique child $z$, and
    \[
    \iota_{x} = \ldiap[\theta_{\Qlab_x}?]\iota_{z}.
    \]
    By the induction hypothesis,
    \[
    \Voc(\iota_{z}) \subseteq \big( \Voc(\Gamma) \cap \Voc(\Xi) \big) \cup \{ q_{y} \mid y \in \K{z} \}.
    \]
Notice that $x$ is not a companion by \Cref{l.KandLProps}, and thus by the same lemma we get that $\K{z} = \K{x}$.
Also, by \Cref{l.app.thetaProps}, we find $\Voc(\theta_{\Qlab_x}) \subseteq \Voc(\Gamma) \cap \Voc(\Xi)$.
The claim follows from this.

\item[Case $k(x) = 3$, $\Qlab_{x}$ not modal.]
Here we have:
\[\iota_{x} = 
\bigvee_{x \qedge z} \psi_z \lor \bigvee_{y \in \K{x}} \ldiap[\bigcup\{\alpha_{z, y} \| x \qedge z, y \in \K{z}\}]q_y,
\]
where $\bigcup\{\alpha_{z, y} \mid x \qedge z, y \in \K{z}\}$ represents the choice program formed by combining all the programs $\alpha_{z, y}$ such that $x \qedge z$ and $y \in \K{z}$.
By the induction hypothesis, $\Voc(\iota_{z})$ is a subset of 
\[
\big( \Voc(\Gamma) \cap \Voc(\Xi) \big) \cup \{ q_{y} \mid y \in \K{z} \}.
\]
Since $x$ is not a companion by \Cref{l.KandLProps}, by the same lemma we find $\K{x} = \bigcup_{x \qedge z} \K{z}$.
    The claim follows from this.

\item[Case $k(x) = 3$, $\Qlab_{x}$ modal.]
In this case, $x$ has a unique child $z$, and $\iota_{x} = \ldiap[a]\iota_{z}$ where $a$ is the leading atomic program of the formula in focus in $\Qlab_{x}$. 
By the induction hypothesis,
    \[
    \Voc(\iota_{z}) \subseteq \big( \Voc(\Gamma) \cap \Voc(\Xi) \big) \cup \{ q_{y} \mid y \in \K{z} \}.
    \]
Since $x$ is not a companion by \Cref{l.KandLProps}, the same lemma gives $\K{z} =  \K{x}$.
Clearly $a \in \Voc(\Qlab_x) \subseteq \Voc(\Xi)$.
It remains to show that also $a \in \Voc(\Gamma)$. Recall that an action $a$ is in the vocabulary of $\Gamma$ if $a$ or $\conv{a}$ occur in $\Gamma$.
Arguing towards a contradiction assume that $a \notin \Voc(\Gamma)$.
Then by \Cref{l:vocIsPreserved} we get that $a \notin \Voc(\Plab^l(t))$ for any $t \in C$.
At any node $t \in R_x$ the rule $\RuDiaP^{r}$ is applied. Let $t \in R_x$ with child $s$. Then $\Plab^l_s = \ldiap[\conv{a}] \Plab^l_t$. But $\ldiap[\conv{a}] \Plab^l_t$ only contains formulas $\ldiap[\conv{a}]\chi$, such that $\ldiap[\conv{a}]\chi \in \FLN(\Plab^l_t)$ and if $\conv{a} \notin \Voc(\Plab^l_t)$ this implies that $\Plab^l_s$ is empty. This contradicts \Cref{l:itp4}.
\end{description}

Finally, the corollary that $\Voc(\theta_{r}) \subseteq \Voc(\Gamma) \cap \Voc(\Xi)$ is immediate since $\theta_r = \iota_{r_{\sfQ}}$ and $\K{r_{\sfQ}} = \emptyset$.
\end{proof}

\subsection{Proof of second condition: $\Gamma \| \theta_r$ is unsatisfiable}

In the proof of $\proves \Gamma \| \theta_r$ we will need the following definition and lemma.

\begin{ourdefinition}
Let $\pi$ be some split proof, possibly with assumptions.
We call $\pi$ right-focused if the path from the root of $\pi$ to any of its assumptions is right-focused (that is, the right component of each node on such a path is focused).
\end{ourdefinition}

For the proof of the following lemma, note that assumption-free proofs are automatically right-focused, and that $\rho_{\Sigma}: \calA \proves \Sigma \| \chi^{u}$ can only be right-focused if $\calA = \emptyset$.

\begin{lemma}
\label{l:ASaq}
Let $\calS$ and $\calA$ be finite sets of respectively (unfocused) sequents and split sequents, let $\alpha$ be some program, $q$ a proposition not occurring in either $\calS, \calA$ or $\alpha$, and $\chi$ some formula.
Assume that for every $\Sigma \in \calS$ there are right-focused proofs
$\pi_{\Sigma}: \{ \Pi \| q^{f} : \Pi \in \calS \} \proves \Sigma \| \ldiap[\alpha] q^{f}$ and $\rho_{\Sigma}: \calA \proves \Sigma \| \chi^{o}$.
Then we may construct a right-focused proof witnessing that $\calA \proves \Sigma \| \ldiap[\alpha^{*}] \chi^{o}$.
\end{lemma}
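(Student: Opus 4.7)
My plan is to construct a cyclic \SCPDLf derivation built around applications of $\RuStarD^r$: substituting the fresh variable $q$ in each $\pi_\Pi$ by the formula $\ldiap[\alpha^*]\chi$ (and correspondingly $\mybar{q}$ by $\lboxp[\alpha^*]\mybar{\chi}$) turns its assumptions $\Pi' \| q^f$ into sequents $\Pi' \| \ldiap[\alpha^*]\chi^f$, which I will then discharge as repeats back to matching companion nodes placed lower in the derivation.

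Focusing first on the case $o = f$, I define a recursive procedure $\mathrm{Proc}(\calC, \Pi)$ parametrised by a set $\calC \subseteq \calS$ of labels already declared as companions on the current branch, producing a subderivation concluding in $\Pi \| \ldiap[\alpha^*]\chi^f$. If $\Pi \in \calC$ the procedure discharges the current node as a repeat leaf pointing back to the companion for $\Pi$; otherwise it marks the current node as a new companion for $\Pi$, applies $\RuStarD^r$, splices $\rho_\Pi$ into the right premise $\Pi \| \chi^f$, splices the substituted $\pi_\Pi$ into the left premise $\Pi \| \ldiap[\alpha]\ldiap[\alpha^*]\chi^f$, and recursively invokes $\mathrm{Proc}(\calC \cup \{\Pi\}, \Pi')$ on each assumption $\Pi' \| \ldiap[\alpha^*]\chi^f$ of the latter. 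Initiating with $\mathrm{Proc}(\emptyset, \Sigma)$ yields a finite tree, since $|\calC|$ grows monotonically along each branch and is bounded by $|\calS|$.

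For the case $o = u$, the remark after the statement forces $\calA = \emptyset$. I simply prefix the construction above with a $\RuF$ step at the root (converting the unfocused root $\Sigma \| \ldiap[\alpha^*]\chi^u$ into $\Sigma \| \ldiap[\alpha^*]\chi^f$), and I insert a $\RuU$ step above each right premise $\Pi \| \chi^f$ so that it matches the unfocused root $\Pi \| \chi^u$ of $\rho_\Pi$. The resulting derivation has no assumptions and is therefore vacuously right-focused.

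Three correctness obligations then remain. First, since $q$ is fresh, substitution of $q$ by $\ldiap[\alpha^*]\chi$ preserves every axiom, every rule application, and every focus annotation in $\pi_\Pi$, so any internal cycle of the substituted $\pi_\Pi$ (and of $\rho_\Pi$) remains successful. Second, each freshly introduced cycle, from a repeat leaf $\Pi \| \ldiap[\alpha^*]\chi^f$ back to the matching companion, is successful because along the cycle path every sequent carries a focused formula on the right --- either $\ldiap[\alpha^*]\chi$, $\ldiap[\alpha]\ldiap[\alpha^*]\chi$, or an intermediate right-side formula of the right-focused spliced $\pi_\Pi$ --- and at the companion node itself the focused formula $\ldiap[\alpha^*]\chi$ is the principal formula of the $\RuStarD^r$ application located immediately above it. Third, in case $o = f$ every assumption of the output lies inside some spliced $\rho_\Pi$; any path from the root to it traverses only nodes with focused right components, by right-focusedness of the $\pi_\Pi$'s and $\rho_\Pi$'s and the fact that $\RuStarD^r$ preserves focus on the right. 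I expect the main subtlety to be locating the principal-in-focus witness on each loop: this is always the $\RuStarD^r$ step situated just above the companion, and that observation is what makes the whole construction succeed.
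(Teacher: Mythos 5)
Your proposal is correct and follows essentially the same route as the paper: substitute $q$ by $\ldiap[\alpha^{*}]\chi$ in the proofs $\pi_{\Pi}$, glue them and the $\rho_{\Pi}$ around $\RuStarD^{r}$ applications, discharge reoccurring $\calS$-sequents as repeats (your companion-set recursion is just a reformulation of the paper's staged construction of $\pi_{1},\ldots,\pi_{N}$ with the pigeonhole bound $|\calS|$), and justify success of the new cycles by right-focusedness plus the fact that the focused formula is principal at the $\RuStarD^{r}$ companion node. The unfocused case, handled by forcing $\calA=\emptyset$ and adding one \RuF at the root and \RuU steps above the $\Pi \| \chi$ premises, also matches the paper's argument.
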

\begin{proof}
We first consider the case where $o = f$.
Abbreviate $\calS' \isdef \{ \Pi \| \ldiap[\alpha^{*}]\chi^{f} : \Pi \in \calS 
\}$, then for any $\Sigma \| \ldiap[\alpha^{*}]\chi^{f}$ in $\calS'$, we may 
consider the following derivation $\pi'_{\Sigma}: \calA \cup \calS' \proves 
\Sigma \| \ldiap[\alpha^{*}]\chi^{f}$:
\begin{center}
$\pi'_{\Sigma} \isdef $ \quad\quad
\begin{prooftree}
	\hypo{\calA
	}
    \ellipsis{$\rho_{\Sigma}$}{\Sigma \| \chi^{f}
    }
	\hypo{\calS'
	}
	\ellipsis{$\pi_{\Sigma}[\ldiap[\alpha^{*}]\chi^{f}]$}{
	    \Sigma \| \ldiap \ldiap[\alpha^*]\chi^{f}
    }        
	\infer[left label=$\RuStarD^{r}$]2{\Sigma \| \ldiap[\alpha^*]\chi^{f}
            }
\end{prooftree}
\end{center}
Here $\pi_{\Sigma}[\ldiap[\alpha^{*}]\chi^{f}]$ is the derivation we obtain from $\pi_{\Sigma}$ by substituting every occurrence of the proposition $q$ with the formula $\ldiap[\alpha^{*}]\chi^{f}$.
It is straightforward to check that $\pi_{\Sigma}[\ldiap[\alpha^{*}]\chi^{f}]$ is right-focused.
Note that $\ldiap[\alpha^*]\chi^{f}$, the formula in focus at the root of $\pi_{\Sigma}$ is actually principal there (\dag).

The main claim in the proof is that for all $n \geq 1$ and for every $\Sigma \| \ldiap[\alpha^{*}]\chi^{f}$ in $\calS'$ there is a right-focused proof $\pi_{n}: \calA \cup \calS' \proves \Sigma \| \ldiap[\alpha^*]\chi^{f}$, such that every open leaf $\ell$ which is labelled with some $\calS'$ assumption is a repeat, or else there are at least $n$ distinct split sequents from $\calS'$ on the path from the root of $\pi_n$ to $\ell$.

We prove this claim by induction on $n$.
In the base step, where $n = 1$, we can simply take the one-node proof of the split sequent $\Sigma \| \ldiap[\alpha^{*}]\chi^{f}$.
In the inductive step we assume a proof $\pi_{k}$ satisfying the above  constraints for $n=k$.
Now consider an arbitrary non-repeat leaf $\ell$ of $\pi_{n}$ which is labelled  with some split sequent in $\calS'$, say, with $\Sigma_{\ell} \| \ldiap[\alpha^{*}]\chi^{f}$.
If we replace each such $\ell$ with the derivation $\pi'_{\Sigma_{\ell}}$, it is easily verified that the resulting derivation $\pi_{k+1}$ satisfies the constraints for $n = k+1$.

Finally then we consider the derivation $\pi_{N}$, with $N{-}1$ being the number of all sequents in $\calS'$.
Obviously then, every open leaf in $\pi_{N}$ that is labeled with a split sequent in $\calS'$ must be a repeat.
It is then easy to see that $\pi_{N}$ is a proof: the success condition for the newly created repeat paths easily follows from the fact that $\pi_{N}$ is a right-focused proof, together with the observation (\dag).
This finishes the proof for the case where $o = f$.

For the case where $o = u$ we first note that this implies $\calA = \emptyset$ by definition of a right-focused proof.
Because $\Sigma$ is unfocused and $\pi_{\Sigma} \proves \Sigma \| \chi^{u}$ we easily obtain proofs $\rho_{\Sigma}$ of $\Sigma \| \chi^f$. 
Each $\rho_{\Sigma}$ is without assumptions and hence right-focused. 
We can thus apply the focused-case of the Lemma (with $\calA = \emptyset$) and obtain proofs of $\Sigma \| \ldiap[\alpha^{*}] \chi^{f}$.
Applying one \RuF rule then concludes the proof.
\end{proof}

With this definition and lemma, we are now ready to establish the second condition of the interpolant $\theta_r$ that states that $\Gamma \| \theta_r$ is unsatisfiable. In particular, we establish this by providing a $\SCPDLf$-proof of $\Gamma \| \theta_r$.

\begin{lemma}\label{l.app.GammaInterpolant}
    $\proves \Gamma \| \theta_r$.
\end{lemma}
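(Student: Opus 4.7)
The plan is to establish the lemma by a leaf-to-root induction on the quasi-proof $\sfQ$, proving the following strengthening: treating each internal variable $q_y$ (for $y \in K_{\sfQ}$) as a fresh propositional atom, for every $x \in Q$ and every $t \in R_x$ there is a right-focused derivation
\[
\calA_x \proves \Plab^{l}_{t} \| \iota_{x}^{f},
\qquad
\calA_x \isdef \{\Plab^{l}_{s} \| q_{y}^{f} : y \in \K{x},\, s \in R_{y}\}.
\]
Since $\K{r_{\sfQ}} = \emptyset$ by \Cref{l.KandLProps}, instantiating the claim at $x = r_{\sfQ}$ and $t = r \in R_{r_\sfQ}$ yields $\proves \Gamma \| \iota_{r_{\sfQ}}^{f}$, i.e.\ $\proves \Gamma \| \theta_{r}^{f}$; a concluding application of $\RuF^{r}$ delivers the lemma.

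The base cases are direct. At a repeat leaf $x$ with companion $c(x)$, the pre-interpolant reduces to $\bot \lor \ldiap[\top?] q_{c(x)}$ and the assumption $\Plab^{l}_{t} \| q_{c(x)}^{f} \in \calA_{x}$ closes the proof after some bureaucratic rule applications to dispatch the $\bot$-disjunct (via $\AxBot$) and the trivial $\top?$-guard. At an exit leaf $x$, $\iota_{x} = \theta_{\Qlab_{x}}$ and each $t \in R_{x}$ lies in $C^{+}_{\Qlab_{x}} \setminus C_{\Qlab_{x}}$, so \Cref{l.app.thetaProps}(2) supplies the proof up to a $\RuU^{r}$ step that adjusts the focus annotation.

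For the inductive step when $x$ is not a companion, the plan is to apply the right-split rule matching the top-level shape of $\iota_{x}$ and reduce to the induction hypothesis at the children of $x$, using \Cref{l.uniformChildrenSimulation} to pair children of $x$ in $\sfQ$ with children of $t \in R_x$ in $\pi$. Concretely: type-1 non-companion nodes transfer directly from the unique child; type-2 nodes apply $\RuTestD^{r}$ on each disjunct of $\iota_{x}$, with the spurious $\theta_{\Qlab_{x}}^{u}$ conjuncts discharged via an analytic cut and \Cref{l.app.thetaProps}(1); type-3 non-modal nodes apply $\RuOr^{r}$ and $\RuChoiceD^{r}$ to simultaneously split the disjunction and the choice programs in lockstep with the children of $x$; type-3 modal nodes apply $\RuDiaP^{r}$ to pass through the modality to the unique modal successor of $t$ in $R_z$.

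The chief obstacle is the companion case, where $x$ is of type 1 with a unique type-2 child $z$ sharing its label, and $\K{z} = \K{x} \cup \{x\}$. The induction hypothesis at $z$ provides, for each $t \in R_{x} = R_{z}$, a right-focused proof of $\Plab^{l}_{t} \| \iota_{z}^{f}$ from the enlarged assumption set $\calA_{x} \cup \{\Plab^{l}_{s} \| q_{x}^{f} : s \in R_{x}\}$. Noting that $\iota_{x}$ is propositionally equivalent to $\ldiap[\alpha_{z,x}^{*}] \chi$, where $\chi$ is $\iota_{z}$ with the $\ldiap[\alpha_{z,x}] q_{x}$ disjunct deleted, the plan is to invoke \Cref{l:ASaq} with $\alpha \isdef \alpha_{z,x}$, $q \isdef q_{x}$, $\calS \isdef \{\Plab^{l}_{t} : t \in R_{x}\}$, and $\chi$ as above. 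The two families of input proofs required by this lemma, namely the step proofs of $\Plab^{l}_{t} \| \ldiap[\alpha_{z,x}] q_{x}^{f}$ from the $q_{x}$-assumptions and the absolute proofs of $\Plab^{l}_{t} \| \chi^{f}$ from $\calA_{x}$, are extracted from the IH proof by concentrating on the appropriate disjunct of $\iota_{z}$; the right-focused nature of these proofs is precisely what allows \Cref{l:ASaq} to glue them into a proof of $\Plab^{l}_{t} \| \ldiap[\alpha_{z,x}^{*}] \chi^{f}$, which rewrites to $\iota_{x}^{f}$ via the distribution of $\ldiap[\alpha^{*}]$ over disjunction.
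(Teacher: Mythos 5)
Your overall architecture (leaf-to-root induction on $\sfQ$, base cases via Lemma~\ref{l.app.thetaProps}, Lemma~\ref{l:ASaq} at companions, Lemma~\ref{l.uniformChildrenSimulation} at type-3 nodes, instantiation at $r_{\sfQ}$ at the end) matches the paper, but your strengthened induction statement has a genuine defect that the paper deliberately avoids. You ask for a \emph{right-focused} proof of $\calA_x \proves \Plab^{l}_t \| \iota_x^f$ with the whole pre-interpolant in focus. However $\iota_x$ is a disjunction, and in \SCPDLf only diamond formulas may be principal while in focus; every other rule, in particular \RuOr, requires an unfocused principal formula. So to split $\iota_x$ into its disjuncts you must first apply \RuU, and from that point on the right component of the sequents on the branch leading to any assumption $\Plab^{l}_s \| q_y^f$ carries no formula in focus. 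Hence the derivations produced by your own plan are not right-focused in the sense of Definition preceding Lemma~\ref{l:ASaq}, and the statement you induct on is in general unrealizable. This is not mere bookkeeping: right-focusedness is exactly the hypothesis of Lemma~\ref{l:ASaq} (it is what makes the repeat paths created there successful), and it is what guarantees that discharging the $q_x$-assumptions at a companion yields successful cycles; without it the glued derivation is not a legitimate \SCPDLf proof.

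Relatedly, in the companion case you propose to ``extract'' from the bundled induction-hypothesis proof of $\Plab^{l}_t \| \iota_z^f$ the two families of inputs that Lemma~\ref{l:ASaq} needs. This extraction is precisely the nontrivial part: nothing in your set-up guarantees that the subproof handling the disjunct $\ldiap[\alpha_{z,y}]q_y$ uses only the $q_y$-assumptions, nor that the $\psi$-part admits an assumption-free proof --- which is needed, since a right-focused proof whose root has an unfocused right component can have no assumptions at all (this is used in the $o=u$ case of Lemma~\ref{l:ASaq}). The paper's Claim is split exactly along these lines: (1) $\proves \Plab^{l}_t \| \psi_x^u$ assumption-free, and (2) for each $y \in \K{x}$ a right-focused proof of $\Plab^{l}_t \| \ldiap[\alpha_{x,y}]q_y^f$ from the assumptions $\calA_y$ only. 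Since $\ldiap[\alpha_{x,y}]q_y$ is a single diamond formula, its decomposition uses only diamond rules, which may be applied with the principal formula in focus, so right-focusedness can actually be maintained; keeping these ``separate statements on the constituting parts of the pre-interpolants'' is the missing idea in your proposal.
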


\begin{proof}
The intuition underlying the proof is to show, by means of a leaf-to-root induction, that for each node $x$ of $\sfQ$ and for each $t \in R_{x}$ we can find a right-focused proof $\pi: \calC_x \proves \Plab_t^{l} \| \iota_x^f$, where $\calC_{x}$ is some suitable set of assumptions.
In the case where $x$ is a companion, the idea is to \emph{discharge} some of the assumptions, so that, when we arrive at the root $r_{\sfQ}$ of $\sfQ$ we obtain an assumption-free proof of the split sequent $\Lambda_t^{l} \| \iota_{r_{\sfQ}}^f = \Gamma \| \theta_r^f$.
For a proper proof-theoretic execution of this elimination procedure we need to prove a somewhat stronger claim, which involves separate statements on the constituting parts of the pre-interpolants.

\begin{claim}
For all $x \in Q$, for all $y \in \K{x}$ and all $t \in R_x$ we have that
(1) $\proves \Plab^{l}_t \| \psi_x^u$ and (2) there is a right-focused proof of $\Plab^{l}_t \| \ldiap[\alpha_{x, y}]q_y^f$ with assumptions $\calA_y \isdef \{(\Plab^{l}_s \| q_y^f) \mid s \in R_y\}$.
\end{claim}
\end{proof}

\begin{proof}[of Claim]
\begin{description}

We prove the Claim by a leaf-to-root induction on $x$.

\item[Case $k(x) = 1$, $x$ a repeat.]~%
    In this case, we have $\K{x} = \{ c(x) \}$, which implies $y = c(x)$ and 
    consequently $R_x = R_y$. 
    From this it follows that $t \in R_y$ and so we find that 
    $\Plab^{l}_t \| q_y^f \in \calA_y$.
    Note as well that in this case we have $\ldiap[\alpha_{x, y}]q_y = 
    \ldiap[\top?]q_y$ and $\psi_x = \bot$.
    It is then easy to show that $\proves \Plab^{l}_t \| \bot^u$, and to find a
    right-focused proof witnessing $\calA_y \proves \Plab^{l}_t \| 
    \ldiap[\top?]q_y^f$.

\item[Case $k(x) = 1$, $x$ an exit.]
    Since $\K{x} = \emptyset$ we only need to show that $\proves \Plab^{l}_t \| 
    \psi_x^u$.
    Observe that by definition we have $\psi_x = \theta_{\Qlab_x}$; but then by 
    Lemma~\ref{l.app.thetaProps} we obtain $\proves \Plab^{l}_t \| \theta_{\Qlab_x}^u$ for any $t \in R_{x}$ as required.

\item[Case $k(x) = 1$, $x$ a companion.]~%
    First we show (2).
    Notice than in this case $x$ has a unique child $z$ and $\K{z} = \K{x} \cup \{ x\}$ and $\alpha_{x, y} = \alpha_{z, x}^*;\alpha_{z, y}$.
    Let $y$ and $t$ be as in the claim, and note that $y \neq x$ since $x \notin \K{x}$ (Lemma~\ref{l.KandLProps}).
    By the induction hypothesis on $z$ we have right-focused proofs witnessing
    $
    \{ \Plab_{s}^{l}\| q_{x}^{f} : s \in R_{x} \} \proves 
    \Plab_{t}^{l}\| \ldiap[\alpha_{z,x}] q_{x}^{f}
    $
    and
    $
    \calA_{y} \proves \Plab^{l}\| \ldiap[\alpha_{z,y}] q_{y}^{f}.
    $
    Then we may apply Lemma~\ref{l:ASaq}, with $\calA = \calA_{y}$, $\calS = \{ \Plab_{s}^{l} : s \in R_{x} \}$, $\alpha = \alpha_{z,x}$, $q = q_{x}$ and $\chi = \ldiap[\alpha_{z,y}] q_{y}$.
    This yields a right-focused proof witnessing
    $
    \calA_{y} \proves 
    \Plab^{l}\| \ldiap[\alpha_{z,x}^{*}] \ldiap[\alpha_{z,y}] q_{y}^{f},
    $
    so that with one right application of $\RuConD$ we find that $\calA_{y} \proves \Plab^{l}\| \ldiap[\alpha_{z,x}^{*};\alpha_{z,y}] q_{y}^{f}$.
    This suffices, since we have $\alpha_{x, y} = \alpha_{z,x}^*;\alpha_{z, y}$.

    We now show (1). Recall that $\psi_x = \ldiap[\alpha_{z, x}^*]\psi_z$. Notice that in this case $R_x = R_z$. Write $R \isdef R_x$.

    Note that by the inductive hypothesis on $z$ there exists, for every $t \in R$, a proof of $\proves \Plab^{l}_t \| \psi_z^u$ and a right-focused proof of $\Plab_t^l\| \ldiap[\alpha_{z,x}]q_x^f$ with assumptions from the set $\{ (\Plab_s^l\| \ldiap[\alpha_{z,x}]q_x^f) \mid s \in R \}$.
    Then by Lemma~\ref{l:ASaq} with $\calS = \{ \Plab_{s}^{l} : s \in R_{z} \}$, $\calA = \emptyset$, $\alpha = \alpha_{z,x}$, $q = q_{x}$ and $\chi = \psi_z$ we get that $\proves \Plab^{l}_t \| \ldiap[\alpha_{z, x}]\psi_z^u$, as required.

\item[Case $k(x) = 1$, $x$ is neither a leaf nor a companion.]~
    Then $x$ has a unique child $z$ and we have that $\alpha_{x, y} \isdef \alpha_{z, y}$ and $\psi_{x} \isdef \psi_{z}$. Since $\calA_x = \calA_z$ and $R_x = R_z$, the claim is immediate by the inductive hypothesis.

\item[Case $k(x) = 2$.]~

    Then $x$ has a unique child $z$ and we have that $\alpha_{x, y} = \theta_{\Qlab_{z}}?;\alpha_{z, y}$ and $\psi_x = \ldiap[\theta_{\Qlab_{z}}?]\psi_{z}$.
    Since $\Qlab_x = \Qlab_z$, write $\Qlab \isdef \Qlab_z = \Qlab_x$, and since $\calA_x = \calA_z$, write $\calA \isdef \calA_x$.
    We will prove the case by establishing the following claim through a leaf-to-root inner induction on $t \in R_x$.

    \emph{Claim.} For all $t \in R_x$, we have that $\proves \Plab^{l}_t \| \ldiap[\theta_{\Qlab}?]\psi_{z}^u$ and there exists a right-focused proof of  $\calA\proves \Plab^{l}_t \| \ldiap[\theta_{\Qlab}?;\alpha_{z, y}]q_y^f$.

    We distinguish the following three subcases:
    \begin{description}
    \item[\it Subcase 1] 
        If $t \notin C$, then $t$ is an exit, meaning that $t \in C^{+} \setminus C$, and thus we have that $\proves \Plab^{l}_{t} \| \theta_{\Qlab}^u$ by \Cref{l.app.thetaProps}. From this, we can obtain the required proofs:

        \[
        \begin{prooftree}
            \hypo{\Plab^{l}_t \| \theta_{\Qlab}^u}
            \infer[left label=$\RuWeak^{r}$]1{\Plab^{l}_t \| \theta_{\Qlab}^u, \psi_z^u}
            \infer[left label=$\RuTestD^{r}$]1{\Plab^{l}_t \| \ldiap[\theta_{\Qlab}?]\psi_{z}^u}
        \end{prooftree}
        \quad\quad\quad
        \begin{prooftree}
            \hypo{\Plab^{l}_t \| \theta_{\Qlab}^u}
            \infer[left label=$\RuWeak^{r}$]1{\Plab^{l}_t \| \theta_{\Qlab}^u, \ldiap[\alpha_{z, y}]q_y^u}
            \infer[left label=$\RuU^{r}$]1{\Plab^{l}_t \| \theta_{\Qlab}^u, \ldiap[\alpha_{z, y}]q_y^f}
            \infer[left label=$\RuTestD^{r}$]1{\Plab^{l}_t \| \ldiap[\theta_{\Qlab}?]\ldiap[\alpha_{z, y}]q_y^f}
            \infer[left label=$\RuConD^{r}$]1{\Plab^{l}_t \| \ldiap[\theta_{\Qlab}?;\alpha_{z, y}]q_y^f}
        \end{prooftree}
        \]

        Notice that the proof of $\proves \Plab^{l}_t \| \ldiap[\theta_{\Qlab}?;\alpha_{z, y}]q_y^f$ is right-focused, as it does not have any assumptions.

    \item[\it Subcase 2] 
        If $t \in C^{r}_{\Qlab}$, then by the outer inductive hypothesis, we get that
        $\proves \Plab^{l}_t \| \psi_z^u$ and
        there exists a right-focused proof of
        $\calA \proves \Plab^{l}_t \| \ldiap[\alpha_{z, y}]q_y^f$.
        From this, we can construct the required proofs as follows:

        \begin{center}
        \begin{tabular}{c@{\hspace{1.5cm}}c}
        \begin{prooftree}
            \hypo{\Plab^{l}_t \| \psi_z^u}
            \infer[left label=$\RuWeak^{r}$]1{\Plab^{l}_t \| \theta_{\Qlab}^u, \psi_z^u}
            \infer[left label=$\RuTestD^{r}$]1{\Plab^{l}_t \| \ldiap[\theta_{\Qlab}?]\psi_{z}^u}
        \end{prooftree}
        &
        \begin{prooftree}
            \hypo{\calA}
            \ellipsis{}{\Plab^{l}_t \| \ldiap[\alpha_{z, y}]q_y^f}
            \infer[left label=$\RuWeak^{r}$]1{\Plab^{l}_t \| \theta_{\Qlab}^u, \ldiap[\alpha_{z, y}]q_y^f}
            \infer[left label=$\RuTestD^{r}$]1{\Plab^{l}_t \| \ldiap[\theta_{\Qlab}?]\ldiap[\alpha_{z, y}]q_y^f}
            \infer[left label=$\RuConD^{r}$]1{\Plab^{l}_t \| \ldiap[\theta_{\Qlab}?;\alpha_{z, y}]q_y^f}
        \end{prooftree}
        \end{tabular}
        \end{center}

        Notice that, since the proof of $\calA \proves \Plab^{l}_t \| \ldiap[\alpha_{z, y}]q_y^f$ is right-focused, the proof of
        $\calA \proves \Plab^{l}_t \| \ldiap[\theta_{\Qlab}?;\alpha_{z, y}]q_y^f$ is right-focused as well.

    \item[\it Subcase 3] 
        If $t \in C^{l}_{\Qlab}$, let $u_1, \dots, u_n$ list the children of $t$ in
        $C^+$.
        Since a left rule was applied, we have that $\Qlab_{u_i} = \Qlab$ for all
        $u_i \in C^{+}_{\Qlab} = R_x$. Hence, by the inner inductive hypothesis, it holds for $1 \leq i \leq n$ that
        \[
        \proves \Plab^{l}_{u_i} \| \ldiap[\theta_{\Qlab}?]\psi_{z}^u, \quad \text{and} \quad
        \calA \proves \Plab^{l}_{u_i} \| 
        \ldiap[\theta_{\Qlab}?;\alpha_{z, y}]q_y^f.
        \]
        By an application of the same left rule that was applied at $t$, the claim follows.
    \end{description}

    This finishes the proof of the Claim and, hence, that of the case.

\item[Case $k(x) = 3$, $\Qlab_{x}$ not modal.]
    In this case, $x$ has $n > 0$ children $z_{1}, \ldots, z_{n}$ in $\sfQ$.
    We need to show that $\proves \Plab^{l}_t \| \psi_{z_1} \lor \dots \lor \psi_{z_n}^u$, and that there exists a right-focused proof of
    \[
    \calA_y \proves \Plab^{l}_t \| \ldiap[\bigcup\{\alpha_{z, y} \| x \qedge z, y \in \K{z}\}]q_y^f.
    \]
    Here, recall that $\bigcup\{\alpha_{z, y} \mid x \qedge z, y \in \K{z}\}$ represents the choice program formed by combining all the programs $\alpha_{z, y}$ such that $x \qedge z$ and $y \in \K{z}$.

    By \Cref{l.uniformChildrenSimulation}, the children of $t$ can be listed as $t_1, \dots, t_n$ such that $t_i \in R_{z_i}$ for all $1 \leq i \leq n$. Since $\Plab^{l}_{t} = \Plab^{l}_{t_i}$, we will write $\Plab^{l} \isdef \Plab^{l}_t$.

    First, we will show that $\proves \Plab^{l} \| \psi_{z_1} \lor \dots \lor \psi_{z_n}^u$.
    By the inductive hypothesis, for each $i$, we have that $\proves \Plab^{l} \| \psi_{z_i}^u$. Using the $\RuOr^{r}$ rule repeatedly, we can construct the following proof:
    \[
    \begin{prooftree}
        \hypo{\Plab^{l} \| \psi_{z_1}^u}
        \hypo{\Plab^{l} \| \psi_{z_2}^u}
        \infer[left label=$\RuOr^{r}$]2{\Plab^{l} \| \psi_{z_1} \lor \psi_{z_2}^u}
        \ellipsis{}{\Plab^{l} \| \psi_{z_1} \lor \dots \lor \psi_{z_{n-1}}^u}
        \hypo{\Plab^{l} \| \psi_{z_n}^u}
        \infer[left label=$\RuOr^{r}$]2{\Plab^{l} \| \psi_{z_1} \lor \dots \lor \psi_{z_n}^u}
    \end{prooftree}
    \]

    Now we construct a right-focused proof of
    $\proves \Plab^{l} \| \ldiap[\bigcup\{\alpha_{z, y} \| x \qedge z, y \in \K{z}\}]q_y^f$

    Notice that for all children $z$ of $x$, whenever $y \in \K{z}$, there exists a right-focused proof $\pi_{z, y}$ of $\calA_{z}\proves \Plab^{l} \| \ldiap[\alpha_{z, y}]q_{y}^f$ by the inductive hypothesis.
    By a repeated application of $\RuChoiceD^{l}$ to every $\pi_{z, y}$ such that $y \in \K{z}$, we can obtain a proof $\rho$ of $\Plab^{l} \| \ldiap[\bigcup_{z \in \K{x}} \alpha_{z, y}]q_y^f$ with assumptions $\bigcup \{\calA_{z} \| x \qedge z, y \in \K{z}\} $. 

    By \Cref{l.KandLProps}, we know that $\bigcup \{\K{z} \| x \qedge z\} = \K{x}$.
    From this it is straightforward to see that $\bigcup \{\calA_{z} \| x \qedge z, y \in \K{z}\} = \calA_y$. Therefore we can conclude that $\rho$ is a proof of
    $\Plab^{l} \| \ldiap[\bigcup_{z \in \K{x}} \alpha_{z, y}]q_y^f$ with assumptions $\calA_y$.
    This completes the proof of this case.

\item[Case $k(x) = 3$, $\Qlab_{x}$ is modal.]
    Then $x$ has a unique child $z$ and we have $\alpha_{x, y} = a;\alpha_{z, y}$ 
    and $\psi_{x} = \ldiap[a]\psi_{z}$, for some atomic program $a$.
    By \Cref{l.KandLProps} we have that $\K{x} = \K{z}$, and thus $\calA_{x} = \calA_{z}$. Write $\calA \isdef \calA_{x}$.

    Let $t \in C^{r}_{\Qlab_x}$.
    By \Cref{l.uniformChildrenSimulation}, the unique child $t'$ of $t$ satisfies 
    $t' \in R_z$.
    By the inductive hypothesis we have $\calA \proves \Plab^{l}_{t'} \| 
    \ldiap[\alpha_{z, y}]q_y^f$ and $\proves \Plab^{l}_{t'} \| \psi_{z}^u$. 
    Then we can construct the required proofs as follows:
    \[
    \begin{prooftree}
        \hypo{\Plab^{l}_{t'} \| \psi_{z}^u}
        \infer[left label=$\RuU$]1{\Plab^{l}_{t'} \| \psi_{z}^f}  
        \infer[left label=$\RuDiaP^{r}$]1{\Plab^{l}_{t} \| \ldiap[a]\psi_{z}^f}
        \infer[left label=$\RuF$]1{\Plab^{l}_{t} \| \ldiap[a]\psi_{z}^u}    
    \end{prooftree}
    \quad\quad\quad
    \begin{prooftree}
        \hypo{\calA}
        \ellipsis{}{\Plab^{l}_{t'} \| \ldiap[\alpha_{z, y}]q_y^f}
        \infer[left label=$\RuDiaP^{r}$]1{\Plab^{l}_{t} \| \ldiap[a]\ldiap[\alpha_{z, y}]q_y^f}
        \infer[left label=$\RuConD^{r}$]1{\Plab^{l}_{t} \| \ldiap[a;\alpha_{z, y}]q_y^f}
    \end{prooftree}
    \]

    Observe that the proof of $\Plab^{l}_{t} \| \ldiap[a;\alpha_{z, y}]q_y^f$ is right-focused. This follows directly from the inductive hypothesis, as the proof of $\Plab^{l}_{t'} \| \ldiap[\alpha_{z, y}]q_y^f$ is right-focused, and the applied rules preserve that property.
\end{description}

\emph{This finishes the proof of the Claim.}

Now consider the root $r_{\sfQ}$ of $\sfQ$.
Since we have $\K{r_{\sfQ}} = \emptyset$ the Claim yields that $\proves \Plab^{l}_t \| \psi_{r_{\sfQ}}^u$ for all $t \in R_{r_{\sfQ}}$.
In particular, the root $r$ of the cluster $C$ belongs to $R_{r_{\sfQ}}$ and, since $\Plab^{l}_r = \Gamma$, we find that $\proves \Gamma \| \psi_{r_{\sfQ}}^u$.
Finally, unravelling the definitions we find that $\theta_{r} = \iota_{r_{\sfQ}}$, and, again since $\K{r_{\sfQ}} = \emptyset$, that $\iota_{r_{\sfQ}} = \psi_{r_{\sfQ}} \lor \bot$.
But then from $\proves \Gamma \| \psi_{r_{\sfQ}}^u$ we easily obtain $\proves \Gamma \| \theta_r$, as required.

\end{proof}

\subsection{Proof of third condition: $\mybar{\theta_r} \| \Xi$ is unsatisfiable}

The third interpolation condition states that the split sequent $\mybar{\theta_r} \| \Xi$ is unsatisfiable.
We will show this by providing an actual derivation as well, but here we use the unrestricted cut rule.
We let $\cproves$ denote derivability in the version of \SCPDLf where we allow the unrestricted cut rules $\RuCut^l$ and $\RuCut^r$.

Before providing the proof that $\cproves \mybar{\theta_r} \| \Xi$, we first state the following definition and lemma, with the purpose of simplifying the proofs.

\begin{ourdefinition}
\label{d:Qsh}
Let $x$ be a node in $\sfQ$, and let $\rho: \calA \cproves \Sigma^l \| \Sigma^r$ be a proof with assumptions.
We say that $\rho$ is $(\sfQ,x)$-shaped if $\calA = \{ (\mybar{q_y}^u \| \Qlab_y) \mid y \in \K{x} \}$, $\Sigma^r = \Qlab_{x}$ and for every open leaf $\ell$ of $\rho$ labeled with an assumption $\mybar{q_y}^{u} \| \Qlab_y$ there is a repeat $z$ in $\sfQ$ with $c(z) = y$, and such that the list of formulas in focus on the $\sfQ$-path from $x$ to $z$ is, up to repetitions, equal to the list of formulas in focus on the  path in $\rho$ from the root to $\ell$.
\end{ourdefinition}

\begin{lemma}\label{l:itp2aux1}
Let $\varphi$ and $\psi$ be equivalent \CPDL formulas.
Then we can transform any $(\sfQ,x)$-shaped proof $\rho: \calA \cproves \varphi^{u} \| \Delta$ into a $(\sfQ,x)$-shaped proof $\rho': \calA \cproves \psi^{u} \| \Delta$.
\end{lemma}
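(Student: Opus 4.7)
The plan is to transform $\rho$ into $\rho'$ by splicing $\rho$ together with a closed auxiliary proof via an unrestricted left cut on $\varphi$. First I would apply a single left weakening to $\rho$, obtaining $\rho_1 \colon \calA \cproves \psi^u, \varphi^u \| \Delta$. Next, because $\varphi \equiv \psi$, the formula $\psi \land \mybar{\varphi}$ is unsatisfiable, and hence so is the split sequent $\mybar{\varphi}^u, \psi^u \| \Delta$; by Theorem~\ref{thm.completenessSplit} there is an assumption-free proof $\rho_2 \colon \proves \mybar{\varphi}^u, \psi^u \| \Delta$. A single application of $\RuCut^l$ on $\varphi$, with $\rho_1$ as left premise and $\rho_2$ as right premise, then yields the desired derivation $\rho'$ of $\psi^u \| \Delta$. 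Since $\rho_2$ is closed, the open-leaf assumption set of $\rho'$ coincides with that of $\rho_1$, which is $\calA$.

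It then remains to verify that $\rho'$ is $(\sfQ, x)$-shaped. The first two clauses of Definition~\ref{d:Qsh} are immediate from the construction: the assumption set is $\calA$, and the right component of the root of $\rho'$ is $\Delta = \Qlab_x$. For the focus clause, fix any open leaf $\ell$ of $\rho'$; it lies in the subproof $\rho_1$, and hence inside $\rho$. The root-to-$\ell$ path of $\rho'$ acquires exactly two new nodes over the corresponding path in $\rho$, namely the cut step and the weakening step. Both operate solely on the left component, so the focused formula $\xi$, which lives in the right component $\Qlab_x$ by Lemma~\ref{l.KandLProps}, is untouched at these nodes. Consequently the sequence of formulas in focus along the new path differs from that along the old path only by prepending one or two copies of $\xi$, and since $\xi$ is already the first entry of the old sequence these extra copies collapse away under the ``up to repetitions'' clause. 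The focus clause therefore transfers from $\rho$ to $\rho'$, and the lemma follows.

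I do not anticipate any real obstacle, since the whole argument is an exercise in splicing proofs together with cut. The one invariant to keep in mind is that $\RuCut^l$ and $\RuWeak^l$ leave the right component alone, which is precisely where the focus-tracking required by the shape condition lives; this is what makes the substitution of $\varphi$ by $\psi$ entirely transparent to the $(\sfQ, x)$-shape.
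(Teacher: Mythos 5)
Your proposal is correct and follows essentially the same route as the paper: use completeness to obtain a proof of $\mybar{\varphi}^u, \psi^u \| \Delta$ and splice it onto $\rho$ with a left cut, noting that the added inferences touch only the left component so the $(\sfQ,x)$-shape (whose focus bookkeeping lives in the right component $\Qlab_x$) is preserved up to repetitions. The only difference is that you make the left weakening explicit so the cut instance matches the rule schema exactly, which the paper leaves implicit.
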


\begin{proof}
By completeness, there is a proof $\sigma$ of the split sequent $\mybar{\phi}^{u}, \psi^{u} \| \Delta$.
Using this, we construct the desired proof as follows:
\begin{center}
$\rho'$:\quad\quad
\begin{prooftree}
	\hypo{\calA
	}
	\ellipsis{$\rho$}{\phi^{u} \| \Delta}
	\hypo{}
	\ellipsis{$\sigma$}{\mybar{\phi}^{u}, \psi^{u} \| \Delta}
	\infer[left label=$\RuCut^{l}$]2{\psi^{u} \| \Delta}
\end{prooftree}
\end{center}
It is easy to verify that $\rho'$ is $(\sfQ,x)$-shaped if $\rho$ is so. This concludes the proof.
\end{proof}

\begin{lemma}\label{l.InterpolantDelta}
$\cproves \mybar{\theta_r} \| \Xi$.
\end{lemma}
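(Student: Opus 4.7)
The plan is to prove, by a leaf-to-root induction on the quasi-proof $\sfQ$, the strengthened claim that for every $x \in Q$ there is a $(\sfQ, x)$-shaped derivation
$\rho_{x}: \{ \mybar{q_y}^{u} \| \Qlab_{y} \mid y \in \K{x} \} \cproves \mybar{\iota_{x}}^{u} \| \Qlab_{x}$.
Specialising to $x = r_{\sfQ}$, where $\K{r_{\sfQ}} = \emptyset$ and $\Qlab_{r_{\sfQ}} = \Xi$, yields the lemma, since the assumption set is then empty and $\iota_{r_{\sfQ}} = \theta_r$.

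The case analysis mirrors \Cref{d.interpolant} and dualises the constructions in \Cref{l.app.GammaInterpolant}. At a repeat leaf $x$ the formula $\mybar{\iota_x} = \mybar{q_{c(x)}}$ is literally the left component of the assumption $\mybar{q_{c(x)}}^{u} \| \Qlab_{x}$. At an exit leaf $x$, \Cref{l.app.thetaProps}(1) supplies the required proof of $\mybar{\theta_{\Qlab_x}}^{u} \| \Qlab_{x}$. For internal non-companion nodes, the inductively given $(\sfQ, z)$-shaped proofs for the children $z$ of $x$ are glued together by introducing the outer connective of $\mybar{\iota_x}$ on the left via rules such as $\RuAnd^{l}$, $\RuConB^{l}$, $\RuChoiceB^{l}$ or $\RuTestB^{l}$, and, in the modal type-3 case, by applying $\RuDiaP^{r}$ on the right. \Cref{l.KandLProps}(5)--(6) guarantees that $\bigcup_{x \qedge z} \calA_z = \calA_x$, so the combined assumption set is precisely $\calA_x$ and no additional discharging is required.

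The main obstacle is the companion case $k(x) = 1$, $x \in K_{\sfQ}$. Here $x$ has a unique child $z$ with $\K{z} = \K{x} \cup \{x\}$, and abbreviating $\chi \isdef \mybar{\psi_z} \land \bigwedge_{y \in \K{x}} \lboxp[\alpha_{z,y}]\mybar{q_y}$ a direct calculation gives
\[
\mybar{\iota_x}
\;\equiv\; \lboxp[\alpha_{z,x}^{*}]\chi
\;\equiv\; \chi \land \lboxp[\alpha_{z,x}]\mybar{\iota_x},
\]
which coincides with $\mybar{\iota_z} = \chi \land \lboxp[\alpha_{z,x}]\mybar{q_x}$ after replacing $\mybar{q_x}$ by $\mybar{\iota_x}$. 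We therefore take the $(\sfQ, z)$-shaped proof $\rho_{z}$ given by the induction hypothesis and substitute $\mybar{\iota_x}$ for $\mybar{q_x}$ throughout. Assumption leaves previously labelled $\mybar{q_x}^{u} \| \Qlab_{x}$ become $\mybar{\iota_x}^{u} \| \Qlab_{x}$ and are closed as repeat leaves pointing back to a freshly installed root; the remaining assumptions are unaffected. A final application of \Cref{l:itp2aux1} rewrites the conclusion from the unfolded form $(\chi \land \lboxp[\alpha_{z,x}]\mybar{\iota_x})^{u} \| \Qlab_{x}$ to the desired $\mybar{\iota_x}^{u} \| \Qlab_{x}$.

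It remains to check that the newly created cycles are successful and that the resulting proof is $(\sfQ, x)$-shaped. Since the right component is focused throughout the construction, every sequent on a cycle carries a focused formula. For the principal-focus condition, the $(\sfQ, z)$-shape of $\rho_{z}$ ensures that the focused formulas along any path from its root to a now-closed leaf match, up to repetitions, those along a $\sfQ$-path from $z$ to some $z^{*} \in L_{\sfQ}$ with $c(z^{*}) = x$; prefixing $x$ turns this into a repeat path of $\sfQ$ starting and ending with companion $x$. \Cref{l.KandLProps}(7) makes the focused formulas on this $\sfQ$-repeat-path pairwise distinct, so the focused formula cannot remain constant along the cycle, which is only possible when it becomes principal at some step. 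An analogous correspondence confirms that the surviving assumption leaves remain indexed by $\K{x}$, preserving the $(\sfQ, x)$-shape property.
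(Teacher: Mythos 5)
Your proposal is correct and follows essentially the same route as the paper's own proof: the same strengthened claim (a $(\sfQ,x)$-shaped proof of $\mybar{\iota_x}^{u} \| \Qlab_x$ from assumptions $\{\mybar{q_y}^{u} \| \Qlab_y \mid y \in \K{x}\}$), the same leaf-to-root induction with the same treatment of repeat and exit leaves, and the same resolution of the crucial companion case by substituting $\iota_x$ for $q_x$ in the child's proof, cutting against a completeness-supplied equivalence (your appeal to \Cref{l:itp2aux1} plays exactly the role of the paper's explicit cut with $\rho$), and discharging the resulting $\mybar{\iota_x}^{u} \| \Qlab_x$ leaves as repeats whose success follows from the shape invariant together with \Cref{l.KandLProps}(7). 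Two harmless imprecisions: in the non-modal type-3 case the $n$ child proofs, whose right components are the $\Pi_i = \Qlab_{z_i}$, must be combined by an application of the cluster's unique right rule $R_{\Qlab_x}$ (followed by a \Cref{l:itp2aux1} rewrite), not by left rules alone, and \Cref{l.KandLProps}(7) only asserts that the focus formula is not constant along a repeat path — which is all your success argument actually uses — rather than pairwise distinctness.
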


\begin{proof}
By a leaf-to-root induction on $\sfQ$ we will prove the following claim, where we write $\calB_x \isdef \{ (\mybar{q_y}^u \| \Qlab_y) \mid y \in \K{x} \}$.

\begin{claim}
For every $x \in Q$ there is a $(\sfQ, x)$-shaped proof
$\pi_{x}: \calB_x \cproves \mybar{\iota_x}^{u} \| \Qlab_x$.
\end{claim}

\begin{proof}[of Claim]

    \begin{description}
    \item[Case $x$ is a repeat.]~%
    Here we have $\K{x} = \emptyset$ and thus $\calB_x = \{ (\mybar{q_{c(x)}}^{u} \|
    \Qlab_{x}) \}$.
    By definition, $\iota_x = \psi_x \lor \ldiap[\alpha_{x, c(x)}]q_{c(x)}$, where 
    $\alpha_{x, c(x)} = \top?$ and $\psi_x = \bot$. 
    It is then straightforward to find a (cut-free) $(\sfQ,x)$-shaped proof 
    $\pi_{x}: \calB_x \proves \mybar{\iota_x}^u \| \Qlab_x$.
    
    \item[Case $x$ is an exit.]~
    By definition we have $\iota_x = \theta_{\Qlab_x}$ and by 
    \Cref{l.app.thetaProps} that $\proves \mybar{\theta_{\Qlab_x}}^u \| \Qlab_x$.
    
    \item[Case $x$ is a companion.]~%
    
    Here $x$ has a unique child $z$, for which we have $\Qlab_x = \Qlab_z$; we will write $\Qlab \isdef \Qlab_x$.
    Furthermore recall that by the definition of pre-interpolants we have $\iota_{x} \equiv \iota_{z}[\iota_{x}/q_{x}]$.
    By completeness this means that there is some proof ${\rho:} \proves \mybar{\iota_{x}}^{u}, \iota_{z}[\iota_{x}/q_{x}]^{u} \| \Psi$.
    
    By the inductive hypothesis we have a $(\sfQ, z)$-shaped proof $\pi_{z}: \calB_x \cup \{ \mybar{q_{x}}^{u} \| \Qlab \} \cproves \mybar{\iota_z}^{u} \| \Qlab$.
    Substituting $q_x$ with $\iota_x$ everywhere in $\pi_{z}$ we obtain a proof $\pi_{z}[\iota_{x}]$ which we may cut with $\rho$ to obtain the following proof $\pi_x$ with assumptions:
    \begin{center}
    $\pi_x$ \quad\quad \begin{prooftree}
    	\hypo{\calB_x \cup \{ \mybar{\iota_{x}}^{u} \| \Qlab \}
    	}
        \ellipsis{$\pi_{z}[\iota_{x}]$}{\mybar{\iota_z}[\iota_{x}/q_{x}]^{u} \| \Qlab
        }
    	\hypo{ }
    	\ellipsis{$\rho$}{\iota_z[\iota_{x}/q_{x}]^{u},\mybar{\iota_{x}}^{u} \| \Qlab
        }        
    	\infer[left label=$\RuCut^{l}$]2{\mybar{\iota_{x}}^{u} \| \Qlab
    	}
    \end{prooftree}
    \end{center}
    Note that all paths from the root of $\pi_x$ to assumptions of the form $\mybar{\iota_{x}}^{u} \| \Qlab$ are successful, because $\pi_{z}$ is $(\sfQ, x)$-shaped and Lemma~\ref{l.KandLProps}\eqref{eqit:6}.
    Therefore all assumptions $\mybar{\iota_{x}}^{u} \| \Qlab$ in $\pi_x$ are discharged. This implies that the proof $\pi_{x}: \calB_x \cproves \mybar{\iota_x}^{u} \| \Qlab$ is $(\sfQ, x)$-shaped.
    
    \item[Case $k(x) = 1$, $x$ is neither a leaf nor a companion.]~
    In this case $x$ has a unique child $z$, for which we have $\alpha_{x, y} = \alpha_{z, y}$, $\psi_{x} = \psi_{z}$ and thus $\iota_{x} = \iota_{z}$. 
    Furthermore we have $\Qlab_{z} = \Qlab_{x}$ and $\calB_{z} = \calB_{x}$.
    Since $R_x = R_z$, the inductive hypothesis directly applies to $z$, providing us with a $(\sfQ,z)$-shaped proof $\pi_{z}: \calB_z \cproves \mybar{\iota_z}^{u} \| \Qlab_z$. 
    We may now simply take $\pi_{x} \isdef \pi_{z}$.
    
    \item[Case $k(x) = 2$.]~%
    In this case $x$ has a unique child $z$, for which we have $\Qlab_x = \Qlab_z$ and $\calB_x = \calB_z$.
    Write $\calB \isdef \calB_x$ and $\Qlab \isdef \Qlab_x$.
    By the inductive hypothesis we have a $(\sfQ,z)$-shaped proof $\pi_{z}: \calB \cproves \mybar{\iota_z}^u \| \Qlab$, and by \Cref{l.app.thetaProps} we have $\proves \mybar{\theta_{\Qlab}}^u \| \Qlab$.
    
    Applying the $\RuTestB^{l}$ rule we obtain a $(\sfQ,z)$-shaped proof witnessing that $\calB \cproves [\theta_{\Qlab}?]\mybar{\iota_z}^{u} \| \Delta$. 
    It is straightforward to verify that $[\theta_{\Qlab}?]\mybar{\iota_z} \equiv \mybar{\iota_x}$.
    But then by Lemma~\ref{l:itp2aux1} we obtain the desired proof $\pi_{x}: \calB \cproves \mybar{\iota_x}^{u} \| \Qlab$.
    
\item[Case $k(x) = 3$, $\Qlab_x$ is not modal.]~
In this case $x$ has $n > 0$ children $z_1, \dots, z_n$ in $\sfQ$.
Recall that by the uniformity of $\pi$, the same right rule \Ru is applied at each $t \in C_{\Psi_x}^{r} = R_x$.

By the inductive hypothesis, for each $z_i$, we have a $(\sfQ, z_i)$-shaped proof of $\calB_{z_i} \cproves \mybar{\iota_{z_i}}^{u} \| \Qlab_{z_i}$. 
    By repeated applications of the rules $\RuWeak^{l}$ and $\RuAnd^{l}$, we obtain, for each $z_i$, a $(\sfQ, z_i)$-shaped proof of $\pi_{z_i} : \calB_{z_i} \cproves \bigwedge_{1 \leq i \leq n}\mybar{\iota_{z_i}}^{u} \| \Qlab_{z_i}$.
    By an application of the rule \Ru we obtain a $(\sfQ, x)$-shaped proof \[\pi_x : \bigcup_{1 \leq i \leq n} \calB_{z_i} \cproves \bigwedge_{1 \leq i \leq n}\mybar{\iota_{z_i}}^{u} \| \Qlab_{x}\]
    By \Cref{l.KandLProps}, it follows that $\bigcup_{1 \leq i \leq n} \calB_{z_i} = \calB_x$ and thus, that $\pi_x : \calB_{x} \cproves \bigwedge_{1 \leq i \leq n}\mybar{\iota_{z_i}}^{u} \| \Qlab_{x}$.
    
    It is easy to see that $\bigwedge_{1 \leq i \leq n}\mybar{\iota_{z_i}} \equiv \mybar{\iota_x}$, and therefore Lemma~\ref{l:itp2aux1}, concludes the case.
    
    \item[Case $k(x) = 3$, $\Qlab_x$ is modal.]~
    In this case $x$ has a unique child $z$, for which we have $\iota_x \equiv \ldiap\iota_z$ and $\calB_x = \calB_z$. 
    Write $\calB = \calB_x$.
    By the inductive hypothesis there exists a $(\sfQ, z)$-shaped proof of $\calB \cproves \mybar{\iota_z}^{u} \| \Qlab_z$.
    By an application of the rule $\RuDia[a]^{r}$ to the formula in focus in $\Qlab_z$ we obtain a $(\sfQ, x)$-shaped proof $\calB \cproves [a]\mybar{\iota_z}^{u} \| \Qlab_x$.
    But since we have $[a]\mybar{\iota_z} \equiv \mybar{\iota_x}$, we may use Lemma~\ref{l:itp2aux1} to transform this proof into a $(\sfQ, x)$-shaped proof of $\calB \cproves \mybar{\iota_x}^{u} \| \Qlab_x$, as required.
\item[\it This finishes the proof of the Claim.]
\end{description}
\end{proof}

To finish the proof of Lemma~\ref{l.InterpolantDelta}, for the root $r_{\sfQ}$ of $\sfQ$ the Claim yields that $\calB_{r_{\sfQ}} \cproves 
\mybar{\iota_{r_{\sfQ}}}^{u} \| \Qlab_{r_{\sfQ}}$.
But since $\K{r_{\sfQ}} = \emptyset$, we find $\calB_{r_{\sfQ}} = \emptyset$, and as $\Qlab_{r_{\sfQ}} = \Xi$ and $\theta_r = \iota_{r_{\sfQ}}$, we may conclude that $\cproves \mybar{\theta_r} \| \Xi$, as required.
\end{proof}

\section{Conclusions}

We presented a sound and complete cyclic proof system for \CPDL and used it to show that the logic enjoys the Craig Interpolation Property. As a corollary we established that \CPDL also has the Beth Definability Property.

With some minor modifications we can make our approach work to prove interpolation for \PDL itself as well.
First, we simplify the rule \RuDiaP to the standard rule for (one-way) modal logic:

\begin{center}
\begin{prooftree}
\hypo{\phi^{\anColor{f}}, \Sigma}
\infer1{ \ldiap[a] \phi^{\anColor{f}},\lboxp[a] \Sigma,\Gamma}
    \end{prooftree}
\end{center}

We then verify that, with minor adaptations (that are in fact simplifications), the completeness proof still holds. 
Finally, we observe that the current definition of the interpolant will not involve the use of the converse modality, and check that the correctness proofs for the interpolant can be adapted to the system with the new modal rule.
We hope to supply the details of this approach in an expanded version of this paper.

Finally, it would be interesting to see whether one can find a proof system for \CPDL, such that the proof of correctness of the interpolant can be proven {inside} the system without the need of adding unrestricted cut. Achieving this seems to require adding admissible rules to $\CPDLf$.
Another open question is whether our method can be extended to other variants of \PDL such as \PDL with intersection \cite{Lutz2005} or deterministic \PDL \cite{BenAri1982}.

\bibliographystyle{splncs04}
\bibliography{bib.CPDL}

\begin{thebibliography}{10}
\providecommand{\url}[1]{\texttt{#1}}
\providecommand{\urlprefix}{URL }
\providecommand{\doi}[1]{https://doi.org/#1}

\bibitem{Afshari2019}
Afshari, B., Leigh, G.E.: Lyndon {Interpolation} for {Modal} $\mu$-{Calculus}.
  In: Language, {Logic}, and {Computation}: 13th {International} {Tbilisi}
  {Symposium}, {TbiLLC} 2019, {Revised} {Selected} {Papers} (2019).
  \doi{10.1007/978-3-030-98479-3_10}

\bibitem{baad:desc07}
Baader, F., Lutz, C.: Description logics. In: Blackburn, P., van Benthem, J.,
  Wolter, F. (eds.) Handbook of Modal Logic, Studies in logic and practical
  reasoning, vol.~3, pp. 757--820. North-Holland (2007).
  \doi{10.1016/S1570-2464(07)80015-2}

\bibitem{BenAri1982}
Ben-Ari, M., Halpern, J.Y., Pnueli, A.: Deterministic propositional dynamic
  logic: {Finite} models, complexity, and completeness. Journal of Computer and
  System Sciences  \textbf{25}(3),  402--417 (1982).
  \doi{10.1016/0022-0000(82)90018-6}

\bibitem{borz:tabl88}
Borzechowski, M.: Tableau-Kalk\"{u}l f\"{u}r PDL und Interpolation. Master's
  thesis, Department of Mathematics, Freie Universit\"{a}t Berlin (1988)

\bibitem{borz:prop25}
Borzechowski, M., Gattinger, M., Hansen, H.H., Ramanayake, R., Dalmas, V.T.,
  Venema, Y.: Propositional dynamic logic has {C}raig interpolation: a
  tableau-based proof (2025), \url{https://arxiv.org/abs/2503.13276}

\bibitem{cate:beth13}
ten Cate, B., Franconi, E., Seylan, I.: Beth definability in expressive
  description logics. Journal of Artificial Intelligence Research  \textbf{48},
   347--414 (2013). \doi{10.1613/JAIR.4057}

\bibitem{Emerson99}
Emerson, E., Jutla, C.: The complexity of tree automata and logics of programs.
  {SIAM} Journal of Computing  \textbf{29}(1),  132--158 (1999)

\bibitem{FL:PDL1979}
Fischer, M.J., Ladner, R.E.: Propositional dynamic logic of regular programs.
  Journal of Computer and System Sciences  \textbf{18}(2),  194--211 (1979).
  \doi{10.1016/0022-0000(79)90046-1}

\bibitem{gabb:inte05}
Gabbay, D.M., Maksimova, L.: Interpolation and Definability: modal and
  intuitionistic logics. Oxford University Press (2005)

\bibitem{Graedel2002}
Gr{\"a}del, E., Thomas, W., Wilke, T. (eds.): Automata, Logic, and Infinite
  Games, LNCS, vol.~2500. Springer (2002)

\bibitem{jung:sepa21}
Jung, J.C., Lutz, C., Pulcini, H., Wolter, F.: Separating data examples by
  description logic concepts with restricted signatures. In: Proceedings of the
  18th International Conference on Principles of Knowledge Representation and
  Reasoning, {KR} (2021). \doi{10.24963/KR.2021/37}

\bibitem{kloi:inte25}
Kloibhofer, J., Venema, Y.: Interpolation for the two-way modal $\mu$-calculus.
  In: 40th Annual ACM/IEEE Symposium on Logic in Computer Science (LICS 2025)
  (2025), to appear

\bibitem{Lutz2005}
Lutz, C.: {PDL} with {Intersection} and {Converse} {Is} {Decidable}. In: Ong,
  L. (ed.) Computer {Science} {Logic}. pp. 413--427. Springer, Berlin,
  Heidelberg (2005). \doi{10.1007/11538363_29}

\bibitem{mart:focu21}
Marti, J., Venema, Y.: Focus-style proof systems and interpolation for the
  alternation-free {\(\mu\)}-calculus. CoRR  \textbf{abs/2103.01671} (2021),
  \url{https://arxiv.org/abs/2103.01671}

\bibitem{mcmi:appl05}
McMillan, K.L.: Applications of {Craig} interpolants in model checking. In:
  Tools and Algorithms for the Construction and Analysis of Systems, 11th
  International Conference, {TACAS} 2005, Proceedings (2005).
  \doi{10.1007/978-3-540-31980-1_1}

\bibitem{Mostowski1991}
Mostowski, A.: Games with forbidden positions (1991), technical Report~78,
  Instytut Matematyki, Uniwersytet Gda\'{n}ski, Poland

\bibitem{pari:comp78}
Parikh, R.: The completeness of {Propositional Dynamic Logic}. In: Winkowski,
  J. (ed.) Mathematical Foundations of Computer Science 1978, Proceedings, 7th
  Symposium, Zakopane, Poland, September 4-8, 1978. Lecture Notes in Computer
  Science, vol.~64, pp. 403--415. Springer (1978).
  \doi{10.1007/3-540-08921-7_88}

\bibitem{prat:near80}
Pratt, V.: A near-optimal method for reasoning about action. Journal of
  Computer and System Sciences  \textbf{20},  231--254 (1980).
  \doi{10.1016/0022-0000(80)90061-6}

\bibitem{rena:inte08}
{Renardel de Lavalette}, G.R.: Interpolation in computing science: the
  semantics of modularization. Synthese  \textbf{164}(3),  437--450 (2008).
  \doi{10.1007/S11229-008-9358-Y}

\bibitem{sege:comp77}
Segerberg, K.: A completeness theorem in the modal logic of programs. Notices
  of the American Mathematical Society  \textbf{24}, ~522 (1977),
  \url{http://eudml.org/doc/209235}

\bibitem{Shamkanov2014}
Shamkanov, D.S.: Circular proofs for the {G}ödel-{L}öb provability logic.
  Mathematical Notes  \textbf{96}(3-4),  575--585 (sep 2014).
  \doi{10.1134/s0001434614090326}

\bibitem{troq:prop23}
Troquard, N., Balbiani, P.: Propositional dynamic logic. In: Zalta, E.N.,
  Nodelman, U. (eds.) The {Stanford} Encyclopedia of Philosophy. Metaphysics
  Research Lab, Stanford University, {F}all 2023 edn. (2023),
  \url{https://plato.stanford.edu/archives/fall2023/entries/logic-dynamic/}

\end{thebibliography}

\appendix

\appendix
\section{Parity games}\label{app.parityGames}
In this appendix we briefly define infinite two-player games, for more details we refer to \cite{Graedel2002}.
We fix two players that we shall refer to as $\eloi$ (Eloise, female) and $\abel$
(Abelard, male) and use $\Pi$ as a variable ranging over the set $\{ \eloi, \abel\}$.

A \emph{two-player game} is a quadruple $\bbG = (V,E,\Own,\WC)$ where $(V,E)$ is a graph, $\Own$ is a map $V \to \{ \eloi, \abel \}$, and $\WC$ is a set of infinite paths in $(V,E)$.
An \emph{initialised game} is a pair consisting of a game $\bbG$ and an element
$v$ of $V$, usually denoted as $\bbG@v$.

We will refer to $(V,E)$ as the \emph{board} of the game. 
Elements of $V$ will be called \emph{positions}, and $\Own(v)$ is the 
\emph{owner} of $v$.
Given a position $v$ for player $\Pi$, the set $E[v]$ 
denotes the set of \emph{moves} that are \emph{admissible
	for} $\Pi$ at $v$. We denote $V_{\Pi} \isdef \Own^{-1}(\Pi)$.
The set $\WC$ is called the \emph{winning condition} of the game.

A \emph{match} $\calM$ of the game $\bbG = (V,E,\Own,\WC)$
is a path through the graph $(V,E)$.
Such a match $\calM$ is \emph{full} if it is maximal as a path, that is, either
finite with\footnote{For a finite sequence $s = v_0...v_n$ we define $\first(s) \isdef v_0$ and $\last(s) \isdef v_n$.} $E[\last(\calM)] = \nada$, or infinite.
If a position has no $E$-successors, the owner of that positions \emph{gets stuck} and 
loses the match.
An infinite match is won by $\eloi$ if it belongs to the set $\WC$ and is won by $\abel$ otherwise.


Let $\PM_{\Pi}$ denote the collection of partial matches $\calM$ ending in a position $\last(\calM) \in V_{\Pi}$.
A \emph{strategy} for a player $\Pi$ is a partial function $f: \PM_{\Pi} \to V$ such that $f(\calM)
\in E[\last(\calM)]$ if $E[\last(\calM)] \neq \nada$.
A match $\calM = (v_{i})_{i<\kappa}$ is \emph{guided} by a $\Pi$-strategy $f$, in short $f$-guided, if $f(v_{0}v_{1}\cdots v_{n-1}) = v_{n}$ for all $n<\kappa$ 
such that $v_{0}\cdots v_{n-1}\in \PM_{\Pi}$.
%
A $\Pi$-strategy $f$ is \emph{winning for $\Pi$ from $v$} if $\Pi$ wins all $f$-guided full
matches initialised at $v$.
The game $\bbG$ is \emph{determined} if every position is winning for either 
$\eloi$ or $\abel$.

A strategy is \emph{positional} if it only depends on the last position of a 
partial match, namely, if $f(\calM) = f(\calM')$  whenever $\last(\calM) = \last(\calM')$;
such a strategy can and will be presented as a map $f: V_{\Pi} \to V$.

A \emph{parity game} is a board game $\bbG = (V,E,\Own,\WC_\Omega)$ in which the
winning condition $\WC_\Omega$ is given by a priority map $\Omega: V \to \Nat$ as follows: $\calM \in \WC_\Omega$ iff $\max\{\Omega(v) \| v \text{ occurs infinitely often in } \calM \}$ is even.
Such a parity game is usually denoted as $\bbG = (V,E,\Own,\Om)$.
The following theorem is independently due to Emerson \& Jutla~\cite{Emerson99}
and Mostowski~\cite{Mostowski1991}.

\begin{theorem}[Positional Determinacy]
	\label{thm.gamesPosDeterm}
	Let $\bbG = (G,E,\Own,\Om)$ be a parity game.
	Then $\bbG$ is determined, and both players have positional winning strategies.
\end{theorem}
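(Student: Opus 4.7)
The plan is to prove the theorem by induction on the number of distinct priorities occurring in $\bbG$, using attractor constructions to build positional winning strategies. The two claims (determinacy and positional realisability of winning strategies) are actually proven together: I would establish that $V$ partitions into $W_{\eloi} \uplus W_{\abel}$ together with positional strategies $f_{\eloi}, f_{\abel}$ for the respective players that are winning on their winning regions.

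First I would set up the tools. For a set $X \subseteq V$ and a player $\Pi$, define the $\Pi$\emph{-attractor} $\mathrm{Attr}_{\Pi}(X)$ by transfinite iteration: $X_{0} \isdef X$, $X_{\alpha+1} \isdef X_{\alpha} \cup \{ v \in V_{\Pi} \mid E[v] \cap X_{\alpha} \neq \nada\} \cup \{ v \in V_{\overline{\Pi}} \mid \nada \neq E[v] \subseteq X_{\alpha}\}$, and take unions at limits. A straightforward bookkeeping argument produces a positional \emph{attractor strategy} that forces $X$ to be reached from $\mathrm{Attr}_{\Pi}(X)$. Crucially, the complement $V \setminus \mathrm{Attr}_{\Pi}(X)$ is a \emph{trap} for $\overline{\Pi}$: every $V_{\overline{\Pi}}$-position outside has at least one successor outside, and every $V_{\Pi}$-position outside has all successors outside. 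Hence we may restrict $\bbG$ to that complement and still obtain a well-defined parity subgame.

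The base case (only one priority) is trivial: the corresponding player wins from every position by playing any admissible move. For the inductive step, let $d$ be the maximum priority and assume w.l.o.g.\ that $d$ is even, so that visiting $d$ infinitely often favours $\eloi$. Let $D$ be the set of positions with priority $d$. I would iteratively construct $W_{\abel}$ as follows. Set $U_{0} \isdef \nada$. Given $U_{\alpha}$, form the subgame $\bbG'$ on $V \setminus (U_{\alpha} \cup \mathrm{Attr}_{\eloi}(D \setminus U_{\alpha}))$; this subgame uses strictly fewer priorities (no position has priority $d$) and is a trap for $\eloi$, so by the inductive hypothesis it is positionally determined. Let $Z_{\alpha}$ be $\abel$'s winning region in $\bbG'$; set $U_{\alpha+1} \isdef U_{\alpha} \cup \mathrm{Attr}_{\abel}(Z_{\alpha})$ and take unions at limits. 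The construction stabilises at some $U_{\infty}$ which I claim equals $W_{\abel}$, with a positional $\abel$-strategy obtained by combining, on each stratum $\mathrm{Attr}_{\abel}(Z_{\alpha}) \setminus U_{\alpha}$, the attractor strategy (outside $Z_{\alpha}$) with the inductively given winning strategy on $Z_{\alpha}$. On the complement $W_{\eloi} \isdef V \setminus U_{\infty}$, the inductive hypothesis applied to the final subgame yields a positional $\eloi$-strategy on a $\eloi$-trap; combining it with the $\eloi$-attractor strategy toward $D$ gives a positional strategy that, on any $\eloi$-guided full match staying in $W_{\eloi}$, either visits $D$ (hence priority $d$) infinitely often, or eventually remains in the subgame and wins there.

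The main obstacle is twofold. First, the strategy-combination step requires checking that the pieces do not conflict: each position must be assigned a unique move, and any infinite $\eloi$-guided match must either visit $D$ infinitely often, or eventually be trapped in the subgame where the inductively supplied positional strategy takes over. Verifying this needs careful bookkeeping, in particular showing that $\abel$ cannot escape $U_{\infty}$ and $\eloi$ cannot escape $W_{\eloi}$ under the prescribed positional strategies. Second, the transfinite iteration must terminate; this is clear since $V$ need not be finite in our setup but the $U_{\alpha}$ are monotone in $\powset(V)$, so stabilisation at some ordinal is automatic, and the resulting strategies are genuinely functions $V_{\Pi} \to V$. Once these two points are in place, the partition $V = W_{\eloi} \uplus W_{\abel}$ with positional winning strategies gives simultaneously the determinacy of $\bbG$ and the positional realisability claim.
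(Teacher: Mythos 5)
The paper does not prove this theorem at all: Theorem~\ref{thm.gamesPosDeterm} is imported as a known result, cited to Emerson \& Jutla and to Mostowski, and is used as a black box (for the evaluation game and the proof-search game, which only use priorities $0$--$3$). So there is no ``paper proof'' to match your argument against; what you have written is a reconstruction of the standard Zielonka/McNaughton-style proof by induction on the number of priorities with attractor decompositions, and in outline it is the right argument. A few points would need tightening before it counts as a complete proof. First, the attractors in the inductive step must be computed \emph{inside the current subgame}: $\mathrm{Attr}_{\eloi}(D\setminus U_{\alpha})$ should be the $\eloi$-attractor within $\bbG$ restricted to $V\setminus U_{\alpha}$, and $\mathrm{Attr}_{\abel}(Z_{\alpha})$ likewise; otherwise the claims that the residual set is a subgame (a trap for the appropriate player) and that $V\setminus U_{\infty}$ is a trap for $\abel$ do not go through as stated. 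Your trap terminology is also inverted relative to the properties you correctly list (the complement of a $\Pi$-attractor is the set that $\Pi$ cannot leave), which matters when you later assert the residual subgame is ``a trap for $\eloi$''. Second, the paper's games may have dead ends (a player with no admissible move loses), so the base case and the attractor construction need the routine adjustment that dead-end positions are absorbed into the opponent's winning region. Third, your induction is on the number of distinct priorities, so it only applies when $\Om$ has finite range; the paper's definition allows an arbitrary map $\Om: V \to \Nat$, for which the statement in this generality would need either a finite-range hypothesis or a separate argument --- harmless here, since the games actually used have four priorities, but worth flagging. With those repairs (and the usual bookkeeping showing the combined positional strategies are consistent and winning on each stratum), your plan yields the theorem, and it is in fact the same style of proof as in the cited sources rather than a genuinely different route.
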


\section{Proofs related to Section~\ref{s:cpdl}}\label{app.CPDL}

\subsection{Traces}

The \emph{length} $|e|$ of an expression $e$ is defined by a mutual induction on
formulas and programs.
Atomic formulas and programs have length one, and we set 
$|\phi\odot\psi| \isdef 1 + |\phi| + |\psi|$ if $\odot \in \{ \land,\lor \}$;
$|\alpha\odot\beta| \isdef 1 + |\alpha| + |\beta|$ if $\odot \in \{ \cup, {;} \}$;
$|\phi?| \isdef 1 + |\phi|$; $|\beta^{*}| \isdef 1 + |\beta|$.
The key clause in the definition is that we put $| \ldiap\phi| \isdef |\alpha| 
+ |\phi|$.

For a list $\vec{\delta}$ of programs, the formula $\edia{\vec{\delta}}{\psi}$ is inductively defined as follows: $\edia{\epsilon}{\psi} \isdef \psi$, and $\edia{\gamma\vec{\delta}}{\psi} \isdef \ldiap[\gamma] \edia{\vec{\delta}}{\psi}$.

\setcounter{theoremRem}{\thetheorem}
\setcounter{theorem}{\getrefnumber{p:inftr}}
\addtocounter{theorem}{-1}
\begin{proposition}
	Let $t = (\phi_{n})_{n<\omega}$ be an infinite trace.
	Then infinitely many $\phi_{n}$ are fixpoint formulas, and either cofinitely many $\phi_{n}$ are diamond formulas, or cofinitely many $\phi_{n}$ are box formulas.
\end{proposition}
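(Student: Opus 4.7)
The plan is to treat the two claims of the proposition separately. For the first claim --- that infinitely many $\phi_n$ are fixpoint formulas --- I would run a direct length argument using the function $|{\cdot}|$ defined in the appendix. A routine case analysis over the trace-step rules shows that $|\psi| < |\phi|$ whenever $\phi \tracestep \psi$ is \emph{not} an unfolding step $\ldiap[\alpha^*]\chi \tracestep \ldiap[\alpha]\ldiap[\alpha^*]\chi$ or $\lboxp[\alpha^*]\chi \tracestep \lboxp[\alpha]\lboxp[\alpha^*]\chi$, whereas the unfolding steps strictly increase the length by $|\alpha| \geq 1$. Thus if only finitely many of the steps in $(\phi_n)$ were unfoldings, the sequence $(|\phi_n|)$ would be eventually strictly decreasing in $\Nat$, a contradiction. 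So there are infinitely many unfolding steps, and each such step is applied at a fixpoint formula.

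For the second claim I would invoke the standard fact that the Fischer-Ladner closure $\FL(\phi_0)$ is a finite set, so that $(\phi_n)_{n<\omega}$ is an infinite path in the finite trace graph on $\FL(\phi_0)$. Let $S$ denote the set of formulas occurring infinitely often in the trace; then $S$ is non-empty, cofinitely many $\phi_n$ lie in $S$, and any two elements of $S$ are mutually trace-reachable, so $S$ is strongly connected in the trace graph. A short argument rules out that $S$ contains a literal (literals are $\tracestep$-sinks) or a Boolean formula $\chi_0 \land \chi_1$ or $\chi_0 \lor \chi_1$: the only $\tracestep$-successors of such a formula are its strictly smaller conjuncts, from which the Boolean itself cannot be $\tracestep$-reached again, by well-foundedness of the subformula order. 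Hence every element of $S$ is a diamond formula or a box formula.

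The remaining and key technical step is to show that $S$ has a uniform modal polarity, i.e.\ does not contain both a diamond and a box formula. Applying the length argument from the first part internally to $S$ yields a fixpoint formula $\psi^* \in S$; assume without loss of generality that $\psi^* = \ldiap[\alpha^*]\chi$. The plan is to argue that every polarity-switching $\tracestep$-step --- atomic-modal stripping, test, or star exit --- is a strict syntactic descent to a subformula, while the length-increasing steps (star unfoldings) preserve polarity. Consequently any cycle in $S$ that performed a polarity switch would have to perform a compensating reverse switch, which would force a formula to lie strictly within its own subformula tree, contradicting well-foundedness. The main obstacle will be making this last well-foundedness argument fully rigorous, most likely by tracking the ``active'' fixpoint formula along each segment of the cycle. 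Once uniform polarity of $S$ is established, the proposition follows since the tail of $(\phi_n)$ is contained in $S$.
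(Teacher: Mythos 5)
Your argument for the first claim is correct: with the paper's length function every non-unfolding trace step strictly decreases length while unfoldings strictly increase it, so an infinite trace must contain infinitely many unfoldings and hence infinitely many fixpoint formulas. The problem is the second claim, and it sits exactly where you flag it. Your plan rests on the principle that every polarity-switching (or Boolean) step lands on a proper subformula and that therefore a cycle containing such a step ``would force a formula to lie strictly within its own subformula tree.'' But the trace relation is not contained in the converse subformula order: the unfolding step $\ldiap[\alpha^*]\chi \tracestep \ldiap[\alpha]\ldiap[\alpha^*]\chi$ produces a formula that properly contains its source as a subformula, so subformula-descent along some steps of a cycle cannot be propagated around the whole cycle. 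The same defect already undermines your exclusion of Boolean formulas from $S$: to show $\chi_0 \not\trace \chi_0 \land \chi_1$ you would need $\chi_0\land\chi_1 \notin \FL(\{\chi_0\})$, and ``well-foundedness of the subformula order'' does not deliver this, since for instance $\ldiap[a]\ldiap[a^*]p \in \FL(\{\ldiap[a^*]p\})$ even though $\ldiap[a^*]p$ is a proper subformula of $\ldiap[a]\ldiap[a^*]p$.

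The paper closes this gap with a length-minimality argument rather than a subformula argument. Take $\phi \in \Inf(t)$ of minimal length; it must be a fixpoint formula, say $\phi = \ldiap[\alpha^*]\psi$, and from some index $k$ on every $\phi_n$ lies in $\Inf(t)$ and hence has length at least $|\phi|$. One then proves by induction on $n \ge k$ that $\phi_n = \edia{\vec{\delta}}{\phi}$ for some list $\vec{\delta}$ of programs each shorter than $\alpha$: the candidate ``escape'' steps (exiting the star to $\psi$, or branching into a test formula $\tau$) are precisely the ones that would produce a formula shorter than $\phi$, contradicting minimality, so they cannot occur in the tail. This single induction simultaneously shows that no Boolean formula occurs in the tail and that all tail formulas share the polarity of $\phi$. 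Your strongly-connected-component framing could be repaired by running the same minimal-length induction inside $S$, but as written the key well-foundedness step does not go through.
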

\setcounter{theorem}{\thetheoremRem}
\begin{proof}
Let $\Inf(t)$ denote the set of formulas that occur infinitely often on $t$, and 
let $\phi \in \Inf(t)$ be of minimal length.
It is obvious that $\phi$ must be a fixpoint formula, since in all other cases 
the direct derivatives of $\phi$ are shorter than $\phi$ itself.

We only consider the case where $\phi$ is a diamond fixpoint formula, say, $\phi = \ldiap[\alpha^*]\psi$.
(The proof in the case where $\phi = \lboxp[\alpha^*]\psi$ is completely analogous.)
Let $k$ be such that $\Inf(t) = \{ \phi_n \mid n \geq k \}$, and such thtat $\phi_k = \phi$.
The proposition then follows from the claim below.
\medskip

\textit{Claim}
For all $n \geq k$ the formula $\phi_n$ is of the form $\phi_n = 
\edia{\vec{\delta}}{\phi}$, for some list $\vec{\delta} = \delta_1\cdots \delta_m$ of 
programs, where each $\delta_i$ is shorter than $\alpha$.
\medskip

The claim can be proved by a straightforward induction on $n$.
In the base case, where $n = k$, we have $\phi_n = \phi = \edia{\epsilon}{\phi}$.

For the induction step we assume as our induction hypothesis that $\phi_{n} = 
\edia{\vec{\delta}}{\phi}$ for some program list $\vec{\delta}$, and we make a 
case distinction.
In case $\vec{\delta} = \epsilon$ we have $\phi_{n} = \phi = 
\ldiap[\alpha^*]\psi$, so that $\phi_{n+1} \in \{ \psi, 
\ldiap\ldiap[\alpha^{*}]\psi \}$.
However by the assumption on $k$ the formula $\phi_{n+1}$ cannot be shorter than
$\phi$, so that we find $\phi_{n+1} = \ldiap\ldiap[\alpha^*]\psi =
\edia{\alpha}{\phi}$.

In case $\vec{\delta} \neq \epsilon$ we may write $\vec{\delta} = \beta\vec{\gamma}$ and we make a further case distinction as to the nature of $\beta$.
\begin{description}
\item[\it Case $\beta = a$ for some $a \in Act$.]
Here we find $\phi_{n+1} = \edia{\vec{\gamma}}{\phi}$.
\item[\it Case $\beta = \tau?$.]
We find that $\phi_{n+1} \in  \{ \tau, \edia{\vec{\gamma}}{\phi} \}$, but since (by assumption on $k$) $\phi_{n+1}$ cannot be shorter than $\phi$ only the second option is possible.
\item[\it Case $\beta = \beta_0;\beta_1$.]
We obtain $\phi_{n+1} = \edia{\beta_0\beta_1\vec{\gamma}}{\phi}$.
\item[\it Case $\beta = \beta_0\cup\beta_1$.]
We obtain $\phi_{n+1} = \edia{\beta_0\vec{\gamma}}{\phi}$
or $\phi_{n+1} = \edia{\beta_1\vec{\gamma}}{\phi}$.
\item[\it Case $\beta = \beta_0^*$.]
We find that either 
$\phi_{n+1} = \edia{\vec{\gamma}}{\phi} $ or 
$\phi_{n+1} = \edia{\beta_0\beta_0^*\vec{\gamma}}{\phi} $.
\end{description}
In all cases it is straightforward to verify that $\phi_{n+1}$ has the required shape.

This finishes the proof of the Claim and, hence, that of the Proposition.
\end{proof}

\subsection{Adequacy of the game semantics}

In this section we present the standard, compositional semantics of \CPDL, 
and we show its equivalence to the game semantics of Definition~\ref{d:gsem}.
In the compositional semantics, given a Kripke model $\bbS$, formulas and 
programs are inductively interpreted as, respectively, subsets of and 
binary relations over the carrier of $\bbS$.
For the modal clause of this definition we need the fact that any binary relation
$R$ on a set $S$ induces two operations on $\powset(S)$:
\[\begin{array}{lll}
   \exop{R}(U) & \isdef &
   \{ s \in S \mid R[s] \cap U \neq \nada \}
\\ \unop{R}(U) & \isdef & 
  \{ s \in S \mid R[s] \subseteq U \}.
\end{array}\]

\begin{ourdefinition}
Given a Kripke model $\bbS = (S,R,V)$, by a mutual recursion on formulas and programs we define the meaning $\mng{\phi}^{\bbS} \subseteq S$ of a formula $\phi$ in $\bbS$:
\[
\begin{array}{lllclll}
\mngS{\top} & \isdef & S & \; &
 \mngS{\bot} & \isdef & \nada 
\\ \mngS{p} & \isdef & V(p) 
  && \mngS{\mybar{p}} & \isdef & S \setminus V(p) 
\\ \mngS{\phi\land\psi} & \isdef & \mngS{\phi}\cap \mngS{\psi}
  && \mngS{\phi\lor\psi} & \isdef & \mngS{\phi}\cup \mngS{\psi}
\\ \mngS{\ldiap\phi} & \isdef & \exop{R_{\alpha}}(\mngS{\phi})
  && \mngS{\lboxp\phi} & \isdef & \unop{R_{\alpha}}(\mngS{\phi})
\end{array}
\]
and we extend the maps $R= \{R_a \| a \in \Act\}$ to provide an accessibility relation $R_{\alpha} \subseteq S \times S$ to an arbitrary program $\alpha$:
\[
\begin{array}{llll}
R_{\alpha;\beta} & \isdef & R_{\alpha} ; R_{\beta}
  & (= \{ (s,u) \mid (s,t) \in R_{\alpha} \text { and } (t,u) \in 
   R_{\beta} \text { for some } t \})
\\ R_{\alpha\cup\beta} & \isdef & R_{\alpha} \cup R_{\beta}
\\ R_{\alpha^{*}} & \isdef & R_{\alpha}^{*} 
  & (= \bigcup_{n} R_{\alpha}^{n})
\\ R_{\phi?} & \isdef & \multicolumn{2}{l}{\{ (s,s) \mid s \in \mngS{\phi} \}}
\end{array}
\]
\end{ourdefinition}

In the evaluation game $\EG(\bbS)$ we let $\Win_{\eloi}(\bbS)$ denote the
winning positions for $\eloi$, and we write $\gmngS{\phi} \isdef \{ s \in S 
\mid (\phi,s) \in \Win_{\eloi}(\bbS)\}$.

\begin{theorem}[Adequacy]
For every formula $\phi$ we have
\begin{equation}
\label{eq:adeqf}
\mngS{\phi} = \gmngS{\phi}, \text{ for every model } \bbS.
\end{equation}
\end{theorem}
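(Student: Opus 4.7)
The plan is to prove both inclusions of $\mngS{\phi} = \gmngS{\phi}$ together, by a well-founded induction with formula structure as the primary parameter and, within modal cases, program structure as the secondary parameter. The literal cases ($\top, \bot, p, \mybar{p}$) follow directly from the winning conventions of $\EG(\bbS)$ at terminal positions, and the Boolean cases $\phi\land\psi$ and $\phi\lor\psi$ follow by inspecting the unique choice available to the owner of $(\phi\land\psi,s)$ or $(\phi\lor\psi,s)$ and invoking the inductive hypothesis on the subformulas. Throughout I would rely on the fact, recorded in the paper, that $\EG(\bbS)$ is a parity game and hence positionally determined (Theorem~\ref{thm.gamesPosDeterm}).

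For the modal cases $\ldiap\phi$ and $\lboxp\phi$, I would run a sub-induction on $\alpha$. The atomic case $\alpha=a$ is immediate from the admissible moves at $(\ldiap[a]\phi,s)$ and $(\lboxp[a]\phi,s)$ combined with the outer IH on $\phi$. The cases $\alpha\in\{\beta;\gamma,\,\beta\cup\gamma\}$ and their box duals are handled by observing that the game performs one administrative unfolding step whose admissible moves exactly mirror the compositional defining equations of $R_{\beta;\gamma}$ and $R_{\beta\cup\gamma}$; the sub-IH for $\beta$ and $\gamma$ then closes the case. For $\alpha = \tau?$ I would invoke the outer IH for $\tau$, which is a proper subformula of $\ldiap[\tau?]\phi$ and hence strictly smaller in the primary measure.

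The main obstacle will be the fixpoint case $\alpha = \beta^{*}$, where the game unfolds $(\ldiap[\beta^{*}]\phi,s)$ to a position containing $\ldiap[\beta^{*}]\phi$ itself, so neither induction parameter decreases. I would deal with this via the ordinal approximants $X_{0}\isdef\nada$, $X_{\kappa+1}\isdef\mngS{\phi}\cup\exop{R_{\beta}}(X_{\kappa})$, and $X_{\lambda}\isdef\bigcup_{\kappa<\lambda}X_{\kappa}$ at limits, so that $\mngS{\ldiap[\beta^{*}]\phi} = \bigcup_{\kappa}X_{\kappa}$. For $\mngS{\ldiap[\beta^{*}]\phi}\subseteq\gmngS{\ldiap[\beta^{*}]\phi}$, given $s\in X_{\kappa}$ I would design $\eloi$'s strategy by transfinite induction on $\kappa$: at the least $\kappa$ containing $s$, either $s\in\mngS{\phi}$ and $\eloi$ commits to the left disjunct, invoking the outer IH, or $\eloi$ makes one $\beta$-step into a strictly smaller approximant, using the sub-IH for $\beta$. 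This bounds the number of unfoldings of $\ldiap[\beta^{*}]\phi$ along any $\eloi$-guided play, so the winning condition is trivially satisfied. For the converse $\gmngS{\ldiap[\beta^{*}]\phi}\subseteq\mngS{\ldiap[\beta^{*}]\phi}$, I would take a positional winning strategy $f$ for $\eloi$ from $(\ldiap[\beta^{*}]\phi,s)$ and argue that any $f$-guided play must commit to $\phi$ after finitely many unfoldings; otherwise the resulting infinite match exhibits cofinitely many diamond fixpoint formulas on its projection, which by the winning condition of $\EG(\bbS)$ is lost by $\eloi$, contradicting the choice of $f$. The bounded length of the committing play then furnishes a natural number $n$ with $s\in\exop{R_{\beta}^{n}}(\mngS{\phi})$, as required. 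The dual case $\lboxp[\beta^{*}]\phi$ is handled symmetrically, this time using the greatest-fixed-point characterization and the fact that the parity winning condition \emph{favours} $\eloi$ on infinite plays with cofinitely many box formulas.
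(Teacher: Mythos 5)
Your overall architecture (mutual induction on formulas and programs, with the program statement quantified over all continuation formulas) is the same as the paper's, and your treatment of the inclusion $\mngS{\ldiap[\beta^{*}]\phi}\subseteq\gmngS{\ldiap[\beta^{*}]\phi}$ via approximants is essentially the paper's inner induction on $n$ (modulo the remark that ``the winning condition is trivially satisfied'' is too quick: an infinite play may stay inside a single $\beta$-traversal, and such plays are handled by the strategies supplied by the sub-induction hypothesis, not by the bound on outer unfoldings). The genuine gap is in the converse inclusion $\gmngS{\ldiap[\beta^{*}]\phi}\subseteq\exop{R_{\beta^{*}}}(\mngS{\phi})$, at the sentence claiming that ``the bounded length of the committing play then furnishes a natural number $n$ with $s\in\exop{R_{\beta}^{n}}(\mngS{\phi})$.'' Nothing you have set up delivers this. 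The sub-induction hypothesis you carry is the set-level equality $\gmngS{\ldiap[\beta]\chi}=\exop{R_{\beta}}(\gmngS{\chi})$; iterating it only shows that $\gmngS{\ldiap[\beta^{*}]\phi}$ is a post-fixpoint of $X\mapsto\mngS{\phi}\cup\exop{R_{\beta}}(X)$, hence contained in the \emph{greatest} fixpoint of that map, which says nothing about the least fixpoint $\exop{R_{\beta^{*}}}(\mngS{\phi})$. To exploit the finiteness of the committing play you must prove a play-level statement: every $f$-guided traversal of $\beta$ within the match (in which $\abel$ keeps declining the tests) projects onto a genuine $R_{\beta}$-step of the model. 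This requires its own induction on $\beta$ interleaved with the match: at a position $(\ldiap[\tau?]\chi,u)$ one must argue that $u\in\mngS{\tau}$ because $\abel$ could have moved to $(\tau,u)$, where $f$ is still winning, and then invoke the formula induction hypothesis for $\tau$; nested stars inside $\beta$ need the finite-commitment argument again. None of this appears in your sketch, and it is exactly where the difficulty of the theorem sits. A smaller inaccuracy in the same step: it is not true that \emph{any} $f$-guided play commits to $\phi$ after finitely many unfoldings, since plays in which $\abel$ deviates into a test formula may terminate inside that test; only the non-deviating play (which is unique, as all remaining choices in the main trace belong to $\eloi$) is forced to commit.

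For comparison, the paper sidesteps the extraction problem by proving this inclusion contrapositively: setting $A\isdef\exop{R_{\beta^{*}}^{\bbS}}(\gmngS{\psi})$, it expands the model with a fresh proposition $p$ interpreted as $A$, fixes a positional winning strategy $g$ for $\abel$ in $\EG(\bbS_{A})$, and builds from it a strategy $h$ witnessing that every $s\notin A$ is winning for $\abel$ at $(\ldiap[\beta^{*}]\psi,s)$; the technical Claim there, tracking the actual match against a $g$-guided match on positions carrying the proposition $p$, is precisely the bookkeeping that your word ``furnishes'' glosses over. Your more direct route can in principle be repaired, either by proving the play-reading lemma just described or by attaching a rank/signature argument to $\eloi$'s winning strategy, but as written the decisive step of the hard direction is missing.
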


\begin{proof}
By a mutual induction on formulas and programs we will show that every formula
$\phi$ satisfies \eqref{eq:adeqf}, while for every program $\alpha$ we have
\begin{equation}
\label{eq:adeqp}
\gmngS{\ldiap\psi} = \exop{R_\alpha^\bbS}(\gmngS{\psi}) 
\text{ and }
\gmngS{\lboxp\psi} = \unop{R_\alpha^\bbS}(\gmngS{\psi}),
   \text{ for every } \bbS \text{ and }\psi.
\end{equation}

The proof of \eqref{eq:adeqf} is routine, so we confine ourselves to a few
examples.
The case where $\phi$ is atomic is immediate by the definitions.
In the induction step where $\phi$ is a disjunction, say, $\phi = \phi_{0} \lor
\phi_{1}$, we reason as follows: $s \in \mngS{\phi} = 
\mngS{\phi_{0}}\cup \mngS{\phi_{1}}$ iff 
$s \in \mngS{\phi_{0}}$ or $s \in \mngS{\phi_{1}}$ iff (IH)
$s \in \gmngS{\phi_{0}}$ or $s \in \gmngS{\phi_{1}}$ iff 
$s \in \gmngS{\phi_{0} \lor \phi_{1}}$, where the last equivalence is 
based on an obvious game-theoretical observation.
For the case where $\phi = \ldiap\phi'$ we reason as follows.
By definition we have 
$\mngS{\ldiap\phi'} = \exop{R_{\alpha}^{\bbS}}(\mngS{\phi'})$,
and by applications of the induction hypothesis (for $\phi'$ and $\alpha$, 
respectively), we find that 
$\exop{R_{\alpha}^{\bbS}}(\mngS{\phi'}) = \exop{R_{\alpha}^{\bbS}}(\gmngS{\phi'}) 
= \gmngS{\ldiap\phi'})$.
Clearly then we have $\mngS{\ldiap\phi'} = \gmngS{\ldiap\phi'})$.
\medskip

For the proof of \eqref{eq:adeqp} we only cover the statement on diamond 
formulas, and we leave the cases where $\alpha$ is atomic or of the form 
$\beta\cup \gamma$ as exercises for the reader.

In the case where $\alpha$ is a test, say, $\alpha = \tau?$, it is easy to see
that $\gmngS{\ldiap[\tau?]\psi} = \gmngS{\tau} \cap \gmngS{\psi}$.
For the right hand side we have $s \in \exop{R_{\tau?}^\bbS}(\gmngS{\psi})$ 
iff there is a $t \in R_{\tau?}[s] \cap \gmngS{\psi}$ 
iff $s \in \mngS{\tau} \cap \gmngS{\psi}$
iff (by induction hypothesis on $\tau$) $s \in \gmngS{\tau} \cap \gmngS{\psi}$,
as required.

In the case where $\alpha$ is of the form $\alpha = \beta;\gamma$ we apply the
induction hypothesis to $\beta$ and $\gamma$, respectively, and find that
$\gmngS{\ldiap[\beta]\ldiap[\gamma]\psi} =
\exop{R_\beta^\bbS}(\gmngS{\ldiap[\gamma]\psi}) =
\exop{R_\beta^\bbS}\big(\exop{R_\gamma^\bbS}(\gmngS{\psi})\big)$.
But then we are done, since we obviously have that $\exop{R_{\beta;\gamma}^\bbS}$
is the composition of $\exop{R_\beta^\bbS}$ and 
$\exop{R_\gamma^\bbS}$.
\medskip

The key case in the proof is where $\alpha$ is an iteration, say, $\alpha = 
\beta^{*}$.
For the inclusion $\supseteq$ we observe that $\exop{R_{\beta^{*}}^\bbS}(A) 
= \bigcup_{n<\omega} \exop{R_{\beta}^{\bbS}}^{n}(A)$, for all 
$A \subseteq S$, so that it suffices to show that 
$\exop{R_{\beta}^{\bbS}}^{n}(\gmngS{\psi}) \subseteq 
\gmngS{\ldiap[\beta^{*}]\psi}$, for all $n$.
This inclusion we can establish by a straightforward inner induction on $n$.
In the base step we have 
$\exop{R_{\beta}^{\bbS}}^{0}(\gmngS{\psi}) = \gmngS{\psi}$,
and it is obvious that $\gmngS{\psi} \subseteq \gmngS{\ldiap[\beta^{*}]\psi})$.
In the inductive step we find
$\exop{R_{\beta}^{\bbS}}^{n+1}(\gmngS{\psi}) =
\exop{R_{\beta}^{\bbS}}\exop{R_{\beta}^{\bbS}}^{n}(\gmngS{\psi})$.
Respective applications of the inner induction hypothesis (on $n$) and the outer induction hypothesis (on $\beta$) show that
$\exop{R_{\beta}^{\bbS}}\exop{R_{\beta}^{\bbS}}^{n}(\gmngS{\psi}) \subseteq \exop{R_{\beta}^{\bbS}}(\gmngS{\ldiap[\beta^{*}]\psi})) \subseteq \gmngS{\ldiap[\beta]\ldiap[\beta^{*}]\psi}$.
Finally, it is obvious that 
$\gmngS{\ldiap[\beta]\ldiap[\beta^{*}]\psi} \subseteq
\gmngS{\ldiap[\beta^{*}]\psi})$, so that we are done.
\medskip

For the opposite inclusion $\subseteq$ of \eqref{eq:adeqp} in the case where 
$\alpha = \beta^{*}$ we have to do more work.
To reduce notational clutter we let $A$ denote the right hand side of 
the equation, that is, $A \isdef \exop{R_{\beta^{*}}^{\bbS}}(\gmngS{\psi})$;
it is an easy consequence of the definition of $R_{\beta^{*}}^{\bbS}$ that
\begin{equation}
\label{eq:adeq1}
A = \gmngS{\psi} \cup 
\exop{R_{\beta}^{\bbS}}A.
\end{equation}
We will show that $\gmngS{\ldiap[\beta^{*}\psi} \subseteq A$ by providing $\abel$ with 
a winning strategy in the game $\EG(\bbS)@(\ldiap[\beta^{*}]\psi,s)$, for an 
arbitrary state $s \not\in A$.

In order to define this strategy we use an auxiliary structure.
Take a fresh proposition letter $p$ and consider the model $\bbS_{A} \isdef \bbS[p \mapsto A]$; that is, we modify the valuation of $\bbS$ so that in $\bbS_{A}$ the proposition letter $p$ is interpreted as the set $A$.
Fix some winning positional strategy $g$ for $\abel$; that is, $g$ is winning for every position in $\Win_{\abel}(\EG(\bbS_{A}))$.
Furthermore, observe that we have 
\begin{equation}
\label{eq:adeq2}
\exop{R_{\beta}^{\bbS}}(A)
= \exop{R_{\beta}^{\bbS}}(\mng{p}^{\bbS_{A}})
= \exop{R_{\beta}^{\bbS_{A}}}(\gmng{p}^{\bbS_{A}})
= \gmng{\ldiap[\beta]p}^{\bbS_{A}},
\end{equation}
where we use the fact that $p$ does not occur in $\beta$ in the second equality, and the induction hypothesis on $\beta$ in the last one.
Then for any state $t$ in $S$ it follows from $t\not\in A$, \eqref{eq:adeq1} and \eqref{eq:adeq2} that $t \not\in  \gmng{\ldiap[\beta]p}^{\bbS_{A}}$, which by determinacy of $\EG(\bbS_{A})$ means that $g$ is winning for $\abel$ in $\EG(\bbS_{A})@(\ldiap[\beta]p,t)$:
\begin{equation}
\label{eq:adeq3}
t\not\in A \text{ implies $g$ is winning for $\abel$ in }
\EG(\bbS_{A})@(\ldiap[\beta]p,t). 
\end{equation}
Furthermore observe that since $\bbS_{A}$ is an expansion of $\bbS$, we may see $g$ as a strategy for $\EG(\bbS)$ as well.

We can now define $\abel$'s strategy $h$ in $\EG(\bbS)$ as follows:
\begin{itemize}
\item at a position of the form $\ldiap[\tau?]\chi$ play as follows:
\\ - pick $(\tau,u)$ if $(\tau,u) \in \Win_{\abel}(\EG(\bbS))$ and 
  continue with the strategy $g$;
\\ - otherwise, pick $\chi$.
\item at any other position $(\chi,u)$ play $g$ if $(\chi,u) \in 
\Win_{\abel}(\EG(\bbS))$; otherwise play randomly.
\end{itemize}

In order to show that $h$ is winning for $\abel$ in 
$\EG(\bbS)@(\ldiap[\beta^{*}]\psi,s)$, we need the following Claim.
\medskip

\noindent
\textit{Claim} 
Let $\pi = (\phi_{i},s_{i})_{i\leq n}$ be some partial $h$-guided match of 
$\EG(\bbS)@(\ldiap[\beta^{*}]\psi,s)$ where $s_{0} = s \not \in A$.
If at any position $(\ldiap[\beta^*]\psi,t)$ in $\pi$ Eloise picks $(\ldiap[\beta]\ldiap[\beta^*]\psi,t)$ and $\phi_{n} = \ldiap[\beta^{*}]\psi$, then $s_{n} \not\in A$.\\

\noindent
\textit{Proof of Claim} 
First of all note that at the start $(\phi_{0},s_{0}) = (\ldiap[\beta^{*}]\psi,s)$ of the match by assumption $\eloi$ picks $(\phi_{1},s_{1}) = (\ldiap[\beta]\ldiap[\beta^{*}]\psi,s)$, and that the position $(\ldiap[\beta]p,s) \in \Win_{\eloi}(\EG(\bbS_{A}))$.
Now let $m$ with $1 \leq m\leq n$ be minimal such that $\phi_{m} = \ldiap[\beta^{*}]\psi$. We first prove that $s_m \notin A$, inductively this implies that $s_n \notin A$ as intended.
In order to do so we claim that the match $\pi' = (\phi_{i},s_{i})_{1 \leq i \leq m}$ is 
of the form $\rho'[\ldiap[\beta^{*}]\psi/p]$ for some $g$-guided $\EG(\bbS_{A})$
match $\rho'$.


To see this, we show that for every $i$ with $1 \leq i \leq m$ there are program lists $\lambda_{i}$ such that (\dag) $\phi_{i} = \edia{\lambda_{i}}{\ldiap[\beta^{*}]\psi}$ for all $i$, while (\ddag) the sequence $(\edia{\lambda_{i}}{p},s_{i})_{1\leq i \leq m}$ is a $g$-guided partial $\EG(\bbS_{A})$-match.
This statement is obvious for $i = 1$, as  $\phi_{1} = \ldiap[\beta]\ldiap[\beta^*]\psi$ by assumption, meaning that $\lambda_{1} \isdef \beta$.
In the induction step we assume that we have defined the program lists 
$\lambda_{1}, \ldots, \lambda_{i}$ satisfying (\dag) and (\ddag) for some $1 \leq i < m$.
Since $m$ is minimal with $\phi_m = \ldiap[\beta^*]\psi$ the program list $\lambda_{i}$ is nonempty and so it must be of the form $\gamma \mu$ for some program $\gamma$ and program list $\mu$.

The only case of interest is where $\gamma$ is a test, say, $\gamma = \tau?$.
The position $(\phi_{i},s_{i})$ in $\pi'$ is $(\phi_{i},s_{i}) 
= ((\edia{\tau?\mu}{\ldiap[\beta^{*}]\psi},s_i)
= (\ldiap[\tau?]\edia{\mu}{\ldiap[\beta^{*}]\psi},s_i)$.
We claim that $\phi_{i+1} = \edia{\mu}{\ldiap[\beta^{*}]\psi}$.
To see this, note that we cannot have $\phi_{i+1} = \tau$, since all subsequent formulas in $\pi$ would have to belong to the closure of $\tau$, and this clearly does not hold for the formula $\phi_{m} = \ldiap[\beta^{*}]\psi$.
But since $\abel$ played according to his strategy $h$, this means that $(\tau,s_{i})$ is a winning position for $\eloi$, in both $\EG(\bbS)$ and $\EG(\bbS_{A})$.
Hence, by our assumption that $g$ is a winning strategy for $\abel$, in $\EG(\bbS_{A})$, at position $(\edia{\tau?\mu}{p},s_i)$, it will tell $\abel$ to move to position $(\edia{\mu}{p}$.
In other words, the new position in $\EG(\bbS)$ is $(\phi_{i+1},s_{i+1}) = (\edia{\mu}{\ldiap[\beta^{*}]\psi},s_{i+1})$ while the new position in $\EG(\bbS_{A})$ is $(\edia{\mu}{p},s_{i+1})$.
Obviously then, if we define $\lambda_{i+1} \isdef \mu$ the conditions (\dag) and (\ddag) hold for $i+1$, as required.

For the case $i = m$ the statements (\dag) and (\ddag) imply that $s_m \not\in A$.
This implies $s_n \notin A$ and we may consider the claim to be proved.
\medskip

Now consider an arbitrary $h$-guided full match $\pi = (\phi_i,s_i)_{i\leq \kappa}$ of \newline
$\EG(\bbS)@(\ldiap[\beta^{*}]\psi,s)$, where $s \not\in A$.
To see why $\pi$ must be won by $\abel$, we distinguish cases.

Let $n$ be maximal such that the assumptions of the Claim are satisfied, meaning that $\phi_{n} = \ldiap[\beta^{*}]\psi$ and Eloise picks $(\ldiap[\beta]\ldiap[\beta^*]\psi,t)$ at any position $(\ldiap[\beta^*]\psi,t)$ in $(\phi_i,s_i)_{i\leq n}$.

If $n$ is undefined, then there are infinitely many positions of the form $(\ldiap[\beta^{*}]\psi,t)$ in $\pi$, then by definition of the winning conditions, 
it constitutes a win for $\abel$.

Otherwise write $t \isdef s_{n}$.
It follows by the Claim that $t \not\in A$, and, hence, 
by \eqref{eq:adeq1} that $t \not\in \gmngS{\psi}$.
By the determinacy of the evaluation game this means that $(\psi,t) \in 
\Win_{\abel}(\EG(\bbS))$.
Hence if $\eloi$ picks position $(\phi_{n+1},s_{n+1}) = (\psi,t)$ then the 
remainder of $\pi$ will simply be guided by $\abel$'s winning strategy $g$, 
resulting in a win for $\abel$.
Now assume that $\eloi$ picks position $(\phi_{n+1},s_{n+1}) = 
(\ldiap[\beta]\ldiap[\beta^{*}]\psi,t)$.
In case $\abel$ picks a test formula at some position after stage $n+1$,
the remaining tail of $\pi$ is guided by his winning strategy $g$ and so he 
wins $\pi$.
But if $\abel$ never picks a test formula it means that for all $i > n$ the formula $\phi_{i}$ 
is the form $(\edia{\lambda}{\ldiap[\beta^{*}]\psi},t)$ for some nonempty list of programs $\lambda$. Since $\pi$ is a full match this can only be 
the case if $\pi$ is infinite, but then $\pi$, having a tail of diamond formulas,
is won by $\abel$.

In other words, we have proved that $h$ is winning for $\abel$ in 
$\EG(\bbS)@(\ldiap[\beta^{*}]\psi,s)$ indeed, and this suffices to prove the 
inclusion $\subseteq$ of \eqref{eq:adeqp} in the case where $\alpha = \beta^{*}$.
\end{proof}

\section{Proofs related to Section \ref{sec.splitCompletness}}\label{app.splitCompleteness}

\subsection{Soundness}

We prove the soundness of the split system \SCPDLf by first showing the soundness of \CPDLf and then translating \SCPDLf proofs of $\Gamma \| \Delta$ to \CPDLf proofs of $\Gamma, \Delta$.

The following lemma deals with the local soundness of our rules and can be 
proven straightforwardly:
\begin{lemma}\label{lem.soundLocal}
	Let 
	\[ 
	\begin{prooftree}
		\hypo{\Delta_1}
		\hypo{\cdots}
		\hypo{\Delta_n}
		\infer[left label =\Ru]3{\Gamma}
	\end{prooftree}
	\] 
	be a rule instance of Figure \ref{fig.rulesCPDL}. 
	If $\Gamma$ is satisfiable, then $\Delta_i$ is satisfiable for some $1 \leq i \leq n$. 
	
	More concretely, let $\bbS,s$ be a pointed model and $f$ a positional strategy for $\eloi$ in $\calE(\bbS)$ such that $\bbS,s \sat_f \Gamma$. 
	Then,
	\begin{enumerate}
		\item if $\Ru \neq \RuDia[a]$ then $\bbS,s \sat_f \Delta_i$ for some $1 \leq i \leq n$,
		\item if $\Ru = \RuOr$ with principal formula $\phi_0 \lor \phi_1$, then $\bbS,s \sat_f \phi_j, \Gamma$, where $f(\phi_0 \lor \phi_1,s) = (\phi_j,s)$,
		\item if $\Ru = \RuChoiceD$ with principal formula $\ldiap[\alpha_0 \cup \alpha_1]\phi$, then $\bbS,s \sat_f \ldiap[\alpha_j]\phi, \Gamma$, where $f(\ldiap[\alpha_0 \cup \alpha_1]\phi,s) = (\ldiap[\alpha_j]\phi,s)$,
		\item if $\Ru = \RuStarD$ with principal formula $\ldiap[\alpha^*]\phi$, then $\bbS,s \sat_f \ldiap[\alpha]\ldiap[\alpha^*]\phi, \Gamma$ if \newline $f(\ldiap[\alpha^*]\phi,s) = (\ldiap[\alpha]\ldiap[\alpha^*]\phi,s)$ and $\bbS,s \sat_f \phi, \Gamma$ else,
		\item if $\Ru = \RuTestB$ with principal formula $\lboxp[\psi?]\phi$, then $\bbS,s \sat_f \mybar{\psi}, \Gamma$ if $f(\lboxp[\psi?]\phi,s) = (\mybar{\psi},s)$ and $\bbS,s \sat_f \phi, \Gamma$ else,
		\item if $\Ru = \RuDia$ with principal formula $\ldiap \phi$ and $\Gamma = \ldiap \phi, \lboxp \Sigma, \Pi$, then $\bbS,t \sat_f \phi,\Sigma, 
		\ldiap[\conv{a}]\Pi$, where $f(\ldiap \phi,s) = (\phi,t)$. 
	\end{enumerate}
\end{lemma}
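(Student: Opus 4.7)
The plan is to proceed by case analysis on the rule $\Ru$, using the guiding principle that each proof rule mirrors a one-step move in the evaluation game $\calE(\bbS)$: rules whose principal formula is $\eloi$-owned correspond to strategic choices by her (so $f$ picks one satisfied premise), while rules whose principal formula is $\abel$-owned correspond to universal quantification over his moves (so every premise is satisfied by the same $\bbS, s, f$); deterministic rules simply propagate satisfaction to their unique premise. Under this correspondence, both the existential clause (1) and the refined strategy-specific clauses (2)--(6) become near-immediate translations of the semantics.

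First I would dispense with the axioms \AxLit and \AxBot: at any matching position $(\phi, s)$, $\eloi$ loses immediately in $\calE(\bbS)$, so the hypothesis $\bbS, s \sat_f \Gamma$ is vacuous. Next, the focus rules ($\RuF$, $\RuU$) and weakening ($\RuWeak$) do not affect the underlying set of formulas, and analytic cut ($\RuACut$) is justified by excluded middle. For deterministic inductive rules ($\RuConD$, $\RuConB$, $\RuStarB$), the unique premise is $\sat_f$-equivalent to the conclusion. For rules with an $\eloi$-owned principal formula ($\RuOr$, $\RuChoiceD$, $\RuStarD$, $\RuTestB$), I would read off from $f$ the unique admissible move it prescribes at the principal position, thereby identifying a premise satisfied by $(\bbS, s, f)$; this directly yields clauses (2)--(5). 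For rules with an $\abel$-owned principal formula ($\RuAnd$, $\RuChoiceB$, $\RuTestD$), the fact that $f$ is winning against every one of $\abel$'s admissible moves implies that each premise is satisfied by $(\bbS, s, f)$.

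The main obstacle will be the modal rule $\RuDiaP$ (clause (6)), whose conclusion $\ldiap[a]\phi^{f}, \lboxp[a]\Sigma, \Pi$ and premise $\phi^{f}, \Sigma, \ldiap[\conv{a}]\Pi$ are evaluated at different states. Here the successor $t$ is given by $f(\ldiap[a]\phi, s) = (\phi, t)$ with $(s, t) \in R_a$: satisfaction of $\phi$ at $t$ is then immediate from $f$ being winning at $(\phi, t)$, and each $\sigma \in \Sigma$ is satisfied at $t$ because $\lboxp[a]\sigma$ forces all $R_a$-successors of $s$ to satisfy $\sigma$, so $f$ remains winning at $(\sigma, t)$ against any of $\abel$'s continuations. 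The delicate point is the newly introduced formulas $\ldiap[\conv{a}]\chi$ for $\chi \in \Pi$. Here I would invoke the involution condition $(s,t) \in R_a \iff (t,s) \in R_{\conv{a}}$ from \Cref{d.KripkeModels}, so that $s$ becomes a witness for each $\ldiap[\conv{a}]\chi$ at $t$; since $f$ is already winning at each $(\chi, s)$ by the original hypothesis, positionality of $f$ lets us extend it by the fresh admissible moves $(\ldiap[\conv{a}]\chi, t) \mapsto (\chi, s)$ simultaneously for all $\chi \in \Pi$ without conflict, yielding the required witnessing strategy at the premise.
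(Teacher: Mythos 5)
Most of your case analysis is fine: the axiom, weakening, focus, deterministic, $\abel$-owned and $\eloi$-owned cases go through exactly as you describe, and the parts of clause (6) concerning $\phi$ and $\Sigma$ are correct (the paper itself leaves this lemma unproved, so there is no official proof to compare against). The genuine gap is in the one step you rightly flag as delicate, the formulas $\ldiap[\conv{a}]\chi$ for $\chi \in \Pi$. Your proposal is to patch $f$ by rerouting the positions $(\ldiap[\conv{a}]\chi,t)$ to $(\chi,s)$ and to conclude that the result is winning because $f$ wins at $(\chi,s)$. This kind of local surgery on winning strategies is exactly what fails for $\omega$-regular winning conditions: the $f$-guided play from $(\chi,s)$ may itself return to a rerouted position, and then the patched strategy produces an infinite match whose left projection is eventually all diamond formulas, which $\abel$ wins. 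Concretely, take $\chi = \ldiap[(b;\conv{a})^{*}]q$, states $s,t,u$ with $R_b = \{(s,t)\}$, $R_{\conv{a}} = \{(t,s),(t,u)\}$, and $q$ true only at $u$; the positional $f$ that unfolds at $(\chi,s)$, moves from $(\ldiap[\conv{a}]\chi,t)$ to $(\chi,u)$ and exits to $(q,u)$ is winning at $(\chi,s)$, but your rerouted strategy cycles $(\chi,s),\dots,(\ldiap[\conv{a}]\chi,t),(\chi,s),\dots$ forever through diamond formulas and is winning at none of these positions. Moreover, even where the patch happens to work it only yields \emph{some} winning strategy, not the fixed $f$ that the lemma asserts and that the soundness proof (Theorem~\ref{thm.soundnessNormal}) carries along unchanged.

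The intended repair is not to modify $f$ but to normalise it once and for all: by positional determinacy (Theorem~\ref{thm.gamesPosDeterm}) one may assume that $f$ is a positional strategy winning at \emph{every} position of $\Win_{\eloi}(\calE(\bbS))$. Then clause (6) needs no patching: since $(t,s) \in R_{\conv{a}}$ and $(\chi,s) \in \Win_{\eloi}(\calE(\bbS))$, the position $(\ldiap[\conv{a}]\chi,t)$ also lies in $\Win_{\eloi}(\calE(\bbS))$, so this same uniform $f$ is winning there, giving $\bbS,t \sat_f \ldiap[\conv{a}]\Pi$. Note that the same convention is what rescues your one-line treatment of \RuACut: "excluded middle" (determinacy plus $\mngS{\mybar{\phi}} = S \setminus \mngS{\phi}$) gives that one of $(\phi,s)$, $(\mybar{\phi},s)$ is won by $\eloi$, but an arbitrary $f$ satisfying the conclusion need not win at either of them, so without the uniform-$f$ assumption clauses (1) for \RuACut and (6) are literally false (e.g.\ let $f$ send $(\ldiap[\conv{a}]\chi,t)$ to a dead-end successor). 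So this normalisation of $f$ is not a cosmetic remark but the missing ingredient of the proof.
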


\begin{theorem}\label{thm.soundnessNormal}
	If $\proves \Gamma$, then $\Gamma$ is unsatisfiable.
\end{theorem}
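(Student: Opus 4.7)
The plan is to argue by contraposition: given a \CPDLf proof $\pi$ of $\Gamma$ and a pointed model $(\bbS, s)$ with a positional winning strategy $f$ for $\eloi$ such that $\bbS, s \sat_f \Gamma$, I will construct an $f$-guided match of the evaluation game $\calE(\bbS)$ that is won by $\abel$, contradicting that $f$ is winning. The two main ingredients are the local soundness Lemma~\ref{lem.soundLocal}, which lets me descend into a satisfied premise at every rule application while tracking the choices that $f$ prescribes for Eloise at her positions, and the focus discipline combined with Proposition~\ref{p:inftr}, which will force the resulting match to have cofinitely many diamond formulas.

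First I would construct an infinite branch through the proof tree with back edges $\cyclicPT$, carrying along a state $t_n$ of $\bbS$ and a tracked formula $\phi_n$ in the sequent at node $v_n$. Starting at the root with $(\phi_0, t_0) = (\phi, s)$ for some $\phi \in \Gamma$, at each internal node I apply Lemma~\ref{lem.soundLocal} (extended in the obvious way to the auxiliary rules $\RuU$, $\RuF$, $\RuWeak$ and the cut rule $\RuACut$, which all trivially preserve satisfiability) to step into a premise whose labelling is still satisfied at the current state, and I update the tracked formula along the trace relation $\tracestep$, keeping it identified with the focused formula whenever one is present. At each application of $\RuDiaP$ I advance the current state to the $R_a$-successor prescribed by $f$. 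Axioms are unreachable thanks to the satisfiability invariant, and any discharged assumption is handled by jumping back to its companion, so this procedure produces an infinite path $(v_n)_{n<\omega}$ in $\cyclicPT$ together with an infinite, $f$-guided match $(\phi_n, t_n)_{n<\omega}$ of $\calE(\bbS)$ in which $\bbS, t_n \sat_f \phi_n$ at every step.

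Finally I would use the focus mechanism to show that this match is won by $\abel$. Since every repeat leaf is discharged, each of the infinitely many repeat paths traversed along $(v_n)$ is successful, so cofinitely many $v_n$ carry a formula in focus and at infinitely many $n$ the focused formula is principal. By the rule-application constraints of \CPDLf, every principal event on a focused formula must be an application of $\RuDiaP$, hence the focused formula at such an $n$ is a diamond. Because the tracked formula follows the focus, the trace $(\phi_n)_{n<\omega}$ contains infinitely many diamond formulas, so by Proposition~\ref{p:inftr} it has cofinitely many diamond formulas, i.e.~$\abel$'s winning condition in $\calE(\bbS)$ is met --- contradicting that $f$ is winning for $\eloi$. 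The main obstacle I expect is the bookkeeping required to align the proof descent with the game: in particular, verifying that tracking the focused formula really yields a valid $\tracestep$-trace through every rule (including cuts, weakening and the annotation rules), and that at every Eloise-owned rule application the chosen premise genuinely agrees with $f$'s move at the corresponding game position. Once this synchronisation is carefully established, the focus-plus-fixpoint argument delivers $\abel$'s win almost mechanically.
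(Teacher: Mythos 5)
Your overall engine is the same as the paper's (contraposition, Lemma~\ref{lem.soundLocal} to descend into a satisfied premise while following Eloise's positional strategy, then reading off an $f$-guided evaluation match from the focused formula and winning it for Abelard via Proposition~\ref{p:inftr}), but you run it by unfolding a concrete cyclic proof, and that is exactly where the gap sits. You justify the key global property of your infinite branch by saying that ``each of the infinitely many repeat paths traversed along $(v_n)$ is successful''. The unfolded branch does not traverse repeat paths: between two consecutive back edges it runs from the companion $c(l)$ of the leaf $l$ just visited up to the next repeat leaf $l'$, and when $c(l')$ lies strictly below $c(l)$ this segment is only a proper final part of the repeat path of $l'$ --- it may miss the node where the focused formula is principal, and nodes of a segment lying strictly below $c(l')$ need not carry any focus, since the discharge condition constrains nothing below a companion. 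The property you need (cofinitely many focused nodes, infinitely many principal-focus events) is in fact true, but it needs a genuine combinatorial argument about $\cyclicPT$: for instance, runs of such truncated segments have strictly descending companions and hence are finite, and any node visited infinitely often lies on the repeat path of the first subsequently visited leaf whose back edge drops below it, which forces it to be focused. This is precisely the delicate cyclic-to-infinitary step; the paper's proof of Theorem~\ref{thm.soundnessNormal} sidesteps it by never descending through a given proof at all --- it builds, from the model, a winning strategy for \emph{Builder} in the proof-search game $\calG(\Gamma)@\Gamma$, and lets the identification of proofs with Prover strategies (cf.\ Lemma~\ref{lem.CPDLfIffCPDLInfty}) carry the remaining weight.

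A second, more local slip: ``every principal event on a focused formula must be an application of $\RuDiaP$'' is false. The side condition only forbids a focused principal formula for \emph{non-diamond} rules, so \RuConD, \RuChoiceD, \RuTestD and in particular \RuStarD may all be applied to the formula in focus. Your intended conclusion (the focused formula at such events is a diamond) survives, and your route via ``infinitely many diamonds, hence cofinitely many diamonds by Proposition~\ref{p:inftr}'' can be made to work --- the paper instead observes that infinitely many of these events must be \RuStarD unfoldings, giving infinitely many diamond fixpoint formulas, i.e.\ Abelard's winning condition directly. But to apply Proposition~\ref{p:inftr} you must first restrict to the tail where the focus is never released (once cofinitely many sequents are focused, only finitely many \RuU steps occur) and pass to the subsequence where the tracked formula actually changes, so that it really is a $\tracestep$-trace and the left projection of an $f$-guided match; as written, your sequence with repetitions and with focus changes via \RuU/\RuF is neither.
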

\begin{proof}
	By contraposition we show that if $\Gamma$ is satisfiable, then Builder has a 
	winning strategy in $\calG := \calG(\Gamma)@\Gamma$. 
	So assume that there is a pointed model $\bbS,s$ and a positional strategy $f$ for $\eloi$ 
	in the game $\calE := \calE(\bbS)$ such that $\bbS,s \sat_f \Gamma$.
	We will  construct a winning strategy $\overline{f}$ for Builder in $\calG$ and
	a function $s_f: \PM(\calG) \to \bbS$, mapping partial $\calG$-matches to states 
	of $\bbS$, such that $\bbS,s_f(\calM) \sat_f \last(\calM)$ for every 
	$\overline{f}$-guided partial match $\calM \in \PM_P(\calG)$.
	
	The functions $\overline{f}$ and $s_f$ can be defined inductively. For the base case $|\calM| = 1$ the partial match $\calM$ consists of the single position $\Gamma$. We define $s_f(\calM) \isdef s$ and do not have to define $\overline{f}$ as this is a position owned by Prover.
	
	For the induction step we follow the specifications of the rule instance. If the rule is $\RuDia$, define $s_f$ as given by $f$ and let $\overline{f}$ choose the only premiss. For any other rule $s_f$ remains the same and we invoke Lemma \ref{lem.soundLocal} in order to choose a premiss for Builder.
	
	We need to show that $\overline{f}$ is a winning strategy for Builder in $\calG$.
	Because of Lemma \ref{lem.soundLocal} we know that all finite matches are won by
	Builder. 
	Thus, assume by contradiction that Builder loses an infinite 
	$\overline{f}$-guided $\calG$-match $\calM$. We write $\calM = \Delta_1i_1\Delta_2i_2...$ and let $\calM_n = \Delta_1i_1...i_{n-1}\Delta_n$ be the partial match up to position $\Delta_n$.
	Then cofinitely positions in $\calM$ of the form $\Delta_n$ have a formula in focus, where infinitely often Prover picks a rule instance of $\RuStarD$ where the principal formula is in focus. We will use $\calM$ to obtain an infinite $f$-guided $\calE$-match that is won by $\abel$. 
	Let $N \in \omega$ be such that $\Delta_n$ has a formula in focus for all $n \geq N$ and let $\psi_n^f \in \Delta_n$ be this formula in focus. 
	
	We claim that there is an $f$-guided $\calE@(\psi_N,s_f(\calM_N))$-match $\calP = P_1P_2...$ that is won by $\abel$ and such that $P_j = (\psi_n,s_f(\calM_n))$ for some $n \in \omega$. We define $P_1 = (\psi_N,s_f(\calM_N))$. Given $P_j = (\psi_n,s_f(\calM_n))$ let $k \geq n$ be minimal such that Prover picks a rule instance with principal formula $\psi_n^f$. This always exists as infinitely often Prover picks a rule instance of $\RuStarD$ where the principal formula is in focus. Then define $P_{j+1} = (\psi_{k+1}, s_f(\calM_{k+1}))$.
	The match $\calP$ is well-defined and $f$-guided by the definition of $s_f$ and $\overline{f}$. Moreover, $\psi_n$ is a diamond fixpoint formula infinitely often and thus $\calP$ is won by $\abel$. This implies $\bbS,s_f(\calM_N) \not\sat_f \psi_N$ contradicting $\bbS,s_f(\calM_N) \sat_f \Delta_N$.
\end{proof}

\begin{lemma}\label{lem.splitProofToProof}
	If $\proves \Gamma \| \Delta$, then $\proves \Gamma, \Delta$ for every split sequent $\Gamma \| \Delta$.
\end{lemma}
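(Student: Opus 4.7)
The plan is to define a translation $(\cdot)^{\sharp}$ that sends each split sequent $\Gamma' \| \Delta'$ appearing in a given \SCPDLf proof $\pi$ of $\Gamma \| \Delta$ to the merged annotated sequent $\Gamma' \cup \Delta'$. Since by definition at most one component of a split sequent is focused, the merge $\Gamma' \cup \Delta'$ is a well-formed annotated sequent of \CPDLf. I would then verify, by a straightforward induction on the structure of $\pi$, that each local rule application in \SCPDLf can be simulated by a small fragment of \CPDLf acting on the merged sequents, and that the cyclic structure (companions, repeat leaves, successful paths) is preserved under the merge.

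For axioms and for every non-modal left or right split rule, the simulation is immediate: by construction such a rule is obtained from a \CPDLf rule by leaving one component unchanged, so merging the conclusion and premises produces an instance of the same \CPDLf rule, with principal formula still present and side formulas intact. Axioms $\AxLit$ and $\AxBot$ translate directly, and the focus-management rules $\RuF$, $\RuU$, $\RuWeak$, $\RuACut$ carry over without change (for $\RuACut$, note that $\phi \in \FLN(\Gamma')$ implies $\phi \in \FLN(\Gamma' \cup \Delta')$ by monotonicity of $\FLN$).

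The main obstacle, as expected, is the modal rule $\RuDia^{d}$. In the split system its premise uses only $\ldiap[\conv{a}]\Lambda$ computed with the side condition $\ldiap[\conv{a}]\chi \in \FLN(\Gamma')$, and similarly for the other component, whereas the \CPDLf rule requires the (possibly strictly larger) set $\ldiap[\conv{a}](\Lambda \cup \Theta)$ relative to $\FLN(\Gamma' \cup \Delta')$. To bridge the gap I would take the already-translated proof of the merged split premise and prepend to it finitely many applications of $\RuWeak$, adding exactly those annotated formulas $\ldiap[\conv{a}]\chi^{u}$ that belong to the \CPDLf-side set but not to the union of the two split-side sets; then a single application of the \CPDLf rule $\RuDiaP$ yields the merged conclusion. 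Weakening is sound and available in \CPDLf, so this extension preserves validity of the derivation, and since the focus annotation is untouched it preserves focus-tracking as well.

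Finally I would check that the cyclic machinery transfers. Repeat leaves of $\pi$ have the property that their split label agrees with that of their companion, hence after merging the \CPDLf labels also agree, so the companion relation is inherited. For the success condition on repeat paths, observe that at most one component of any split sequent on the path is focused, so merging preserves both the existence and the identity of the unique focused formula on each sequent, and a principal occurrence of the focused formula in the split system is by construction still principal in the merged system. The weakening steps inserted for modal rules introduce only unfocused formulas and therefore never destroy the focus on any path. Hence every successful repeat path in $\pi$ translates to a successful repeat path in the constructed \CPDLf derivation, so every discharged leaf remains discharged, giving $\proves \Gamma, \Delta$ as required.
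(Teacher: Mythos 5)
Your proposal is correct and takes essentially the same route as the paper, whose proof simply translates each split rule of the given \SCPDLf proof into the corresponding \CPDLf rule on the merged sequents $\Gamma' \cup \Delta'$. You are in fact more careful than the paper's one-sentence argument: the mismatch at the modal rule is real (the \CPDLf premise is computed relative to $\FLN$ of the whole merged conclusion and may properly contain the union of the two split premise parts), and your patch of inserting \RuWeak steps, which add only unfocused formulas and hence leave the focus bookkeeping and success conditions intact, is exactly the right repair.
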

\begin{proof}
	Let $\pi$ be a $\SCPDLf$ proof of $\Gamma \| \Delta$. By translating every split rule in $\pi$ to its corresponding proof rule in Figure \ref{fig.rulesCPDL} we obtain a $\CPDLf$ proof of $\Gamma, \Delta$.
\end{proof}

\setcounter{theoremRem}{\thetheorem}
\setcounter{theorem}{\getrefnumber{thm.soundnessSplit}}
\addtocounter{theorem}{-1}
\begin{theorem}[Soundness]
	If $\proves \Gamma \| \Delta$, then $\Gamma, \Delta$ is unsatisfiable.
\end{theorem}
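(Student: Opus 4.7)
The plan is to observe that this theorem is an immediate consequence of the two results that directly precede it, namely \Cref{thm.soundnessNormal} (soundness of the non-split system $\CPDLf$) and \Cref{lem.splitProofToProof} (translation of split proofs into non-split proofs). So the proof is essentially a one-line composition, and there is no real obstacle to overcome.

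Concretely, I would argue as follows. Suppose $\SCPDLf \proves \Gamma \| \Delta$, witnessed by some proof $\pi$. Apply \Cref{lem.splitProofToProof} to obtain a $\CPDLf$ proof of the merged sequent $\Gamma, \Delta$. Then \Cref{thm.soundnessNormal} yields that $\Gamma, \Delta$ is unsatisfiable, which is exactly the conclusion of the theorem.

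Should one prefer to give a direct, game-theoretic argument mirroring the proof of \Cref{thm.soundnessNormal}, the strategy would be as follows: assume by contraposition that $\Gamma, \Delta$ is satisfiable via a pointed model $\bbS, s$ with an $\eloi$-winning strategy $f$ in $\calE(\bbS)$, and construct a winning strategy $\overline{f}$ for Builder in $\calG(\Gamma \| \Delta)@(\Gamma \| \Delta)$ together with a map $s_f$ from partial matches to states of $\bbS$ such that at every Prover-position $\Sigma$ reached, $\bbS, s_f(\calM) \sat_f \Sigma^l \cup \Sigma^r$. Every split rule acts on the merged sequent exactly as its non-split counterpart acts on the corresponding conclusion; in particular the $\RuDiaP^{l}$ and $\RuDiaP^{r}$ rules, which are the only rules interacting with both components, behave like a single application of $\RuDiaP$ on $\Gamma, \Delta$. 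Hence \Cref{lem.soundLocal} applies to the merged rule instance, and the inductive construction of $\overline{f}$ and $s_f$ goes through verbatim as in the proof of \Cref{thm.soundnessNormal}, with the same argument (via an infinite $f$-guided $\calE$-match extracted from a Builder-losing infinite $\calG$-match) showing that $\overline{f}$ is winning for Builder.

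Given that both auxiliary lemmas are already available above, I would simply present the short composition proof; the direct game-theoretic version is only mentioned here as a sanity check that nothing essentially new is needed for the split setting.
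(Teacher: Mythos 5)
Your proposal matches the paper's own proof exactly: the paper derives the soundness of $\SCPDLf$ by translating the split proof of $\Gamma \| \Delta$ into a $\CPDLf$ proof of $\Gamma, \Delta$ via Lemma~\ref{lem.splitProofToProof} and then invoking Theorem~\ref{thm.soundnessNormal}. Your short composition argument is precisely this, so nothing further is needed.
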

\begin{proof}
	Let $\pi$ be an \SCPDLf proof of $\Gamma \| \Delta$. Then Lemma \ref{lem.splitProofToProof} yields a \CPDLf proof of $\Gamma, \Delta$. Now Theorem \ref{thm.soundnessNormal} concludes the proof.
\end{proof}
\setcounter{theorem}{\thetheoremRem}

\subsection{Completeness}

We show the completeness of $\SCPDLf$ by contraposition: given a winning strategy for Builder in $\calG(\Sigma)@\Sigma$ we  find a pointed model $\bbS,s$ satisfying $\Sigma$. 

Let $f$ be a \emph{positional} winning strategy for Builder in $\calG(\Sigma)@\Sigma$.
We construct a pointed model $\bbS^f,s$ and a strategy $\underline{f}$ for $\eloi$ in $\calE(\bbS^f)$ such that $\bbS^f,s \sat_{\underline{f}} \Sigma$. We start by defining the pointed model $\bbS^f,s$.

Let $\calT$ be the subtree of the game-tree of $\calG(\Sigma)@\Sigma$, where Builder plays the strategy $f$ and Prover picks rule instances according to the following priorities: 
\begin{enumerate}
\item instances of \AxLit and \AxBot;
\item cumulative and productive instances of \RuAnd, \RuOr, \RuConD, \RuConB, \RuChoiceD, \RuChoiceB, \RuTestD, \RuTestB, \RuStarD, \RuStarB or \RuACut where the principal formula is in an unfocused component;
\item unblocking instances of \RuU;
\item cumulative and productive instances of \RuAnd, \RuOr, \RuConD, \RuConB, \RuChoiceD, \RuChoiceB, \RuTestD, \RuTestB, \RuStarD, \RuStarB or \RuACut where the principal formula is unfocused but in the focused component;
\item productive instances of \RuConD, \RuChoiceD, \RuTestD or \RuStarD, where the principal formula is in focus;
\item instances of \RuDia, cumulative instances of \RuF or conceding instances of \RuU.
\end{enumerate}

Additionally, at any two positions $\Sigma_0$ and $\Sigma_1$, where the focused components of $\Sigma_0$ and $\Sigma_1$ coincide and no rule instance of type 1 -- 3 is applicable, Prover picks the same rule instance of type 4, if possible.

Any winning strategy for Prover, where she picks rule instances according to those requirements, results in a uniform \SCPDLInfty proof. 

\begin{ourdefinition}
	We call a maximal path in $\calT$ of rule instances of type 1 -- 5 a \emph{local path}. 
	Let $\rho, \tau$ be local paths in $\calT$. We define $\rho \overset{a}{\to} \tau$ if $\tau$ is above $\rho$ in $\calT$, only separated by an instance of \RuDia[a] and possible instances of \RuF and conceding instances of \RuU.
	We let $\Plab^d(\rho) \isdef \bigcup\{\Sigma^d \| \Sigma \text{ occurs in } \rho\}$ for $d = l,r$ and $\Plab(\rho) \isdef \Plab^l(\rho) \cup \Plab^r(\rho)$.
\end{ourdefinition}

Let $\rho$ be a local path in $\calT$. Note that $\Plab^d(\rho)$ is not necessarily an annotated sequent as it may contain multiple formulas in focus. Because of the restriction on Prover's strategy $\Plab^d(\rho)$ is a \emph{saturated set}, meaning that no rule instance of type 1 -- 4 is applicable to $\Plab^d(\rho)^u$. Note that this definition conforms with the usual notion of a saturated set.
In particular, for every formula $\phi \in \FLN(\Plab^d(\rho)^-)$ it holds that either $\phi \in \Plab^d(\rho)^-$ or $\mybar{\phi} \in \Plab^d(\rho)^-$ and not both.

\begin{lemma}
	All local paths $\rho$ in $\calT$ are finite and $\Plab^d(\rho)^- = \Plab^d(\last(\rho))^-$.
\end{lemma}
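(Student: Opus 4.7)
The plan is twofold: first establish that $\rho$ is finite, and then derive the set equality.

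For finiteness, the idea is to bound the applications of each rule type (1--5) along $\rho$. Every sequent occurring in $\rho$ is a subset of the finite set $\{\phi^{o} : \phi \in \FLN(\Sigma),\, o \in \{u,f\}\}$, so annotated sequents have bounded size. Cumulative productive rules (types~2 and~4) strictly enlarge the annotated sequent, so there can be only $O(|\FLN(\Sigma)|)$ such applications. Unblocking \RuU (type~3) can apply at most once along $\rho$, because it turns the unique focused formula into an unfocused one, and focus cannot be re-introduced within a local path (\RuF is classified as a type~6 rule). The critical case is type~5 (productive focused-diamond rules). Suppose for contradiction that $\rho$ contains infinitely many type~5 applications; then $\rho$ is infinite, and since type~5 preserves focus, and no type~3 \RuU can remove focus from a diamond formula, focus persists cofinitely and each type~5 step has its principal formula in focus. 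This fulfils Prover's success condition on $\rho$, contradicting the assumption that $f$ is a winning strategy for Builder in $\calG(\Sigma)@\Sigma$.

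For the equality $\Plab^d(\rho)^- = \Plab^d(\last(\rho))^-$, the inclusion $\supseteq$ is immediate. For $\subseteq$ the plan is a straightforward induction on the length of $\rho$, showing that the underlying $d$-component grows monotonically. Cumulative productive rules add formulas without removing any; \RuU merely changes an annotation. The subtle case is type~5, which ostensibly replaces a focused principal $\psi^f$ by its derivative $(\psi')^f$. Here the key observation is that any type~5 step is preceded by full saturation (because priorities 1--4 are exhausted), and combined with how focus entered $\rho$ (either from the initial sequent or from a preceding modal step via cumulative \RuF of type~6), one can argue that the underlying formula $\psi$ remains present in the sequent after the step, typically as an unfocused copy contributed by an earlier cumulative rule or by the preceding modal step.

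The main obstacle is precisely this last subtle case: verifying that productive focused-diamond rule applications do not cause the underlying formula set of a component to shrink. The argument must combine the saturation invariant stated immediately before the lemma (that $\Plab^d(\rho)$ is saturated in the usual sense) with a careful inductive bookkeeping of how focused formulas enter and leave the sequent along a local path. Once this point is handled, the remainder of the monotonicity argument is routine.
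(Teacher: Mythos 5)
Your finiteness argument is essentially the paper's: types $1$, $2$ and $4$ are bounded because cumulative--productive steps strictly enlarge a sequent drawn from the finite set $\FLN(\Sigma)$; an unblocking \RuU removes the focus for the remainder of the local path, so at most one occurs and none on an infinite tail; and an infinite tail of type-$5$ steps with a persistently focused diamond would make the path successful, contradicting that Builder's strategy is winning. That half is fine.

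The second half has a genuine gap, and the mechanism you gesture at is the wrong one. The crux, which you correctly isolate, is that a type-$5$ rule consumes its focused principal formula $\phi^{f}$ and replaces it by a derivative, so one must show that the unannotated $\phi$ survives into the premise. You propose to extract this from ``how focus entered $\rho$'' --- an unfocused copy left behind by an earlier cumulative rule, by \RuF, or by the preceding modal step. None of these work: \RuF, read bottom-up, \emph{replaces} $\phi^{u}$ by $\phi^{f}$ and leaves no unfocused copy; the modal rule's premise contains only the focused derivative; and if the focus was inherited from the initial sequent or through a modal step, there need never have been an unfocused occurrence of $\phi$ at all. The paper's argument is purely local and hinges on the analytic cut rule: since $\phi$ lies in $\FLN$ of its component, the instance of \RuACut with cut formula $\phi$ is available there, and by the priority ordering no type-$2$/$4$ rule is applicable at a type-$5$ node, so this cut instance cannot be cumulative and productive; hence $\phi^{u}$ or $\mybar{\phi}^{u}$ already occurs in the component. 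The latter is excluded because \AxLit admits arbitrary annotations and would apply to $\phi^{f}, \mybar{\phi}^{u}$. Therefore $\phi^{u}$ is already present before the type-$5$ step and persists into the premise, which is exactly what makes $\Plab^{d}(\cdot)^{-}$ monotone along $\rho$. Without invoking \RuACut and \AxLit in this way, your ``careful inductive bookkeeping'' does not close.
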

\begin{proof}
	As there are only finitely many formulas in $\Phi$, and rule instances of type 2 and 4 are cumulative and productive, all paths consisting of rule instances of type 1 -- 4 are finite.  In rules of type 5 the principal formula is in focus. Therefore, if a local path $\tau$ is infinite it contains infinitely many rule instances of type 5 and no application of \RuU. This implies that $\tau$ is successful, which is a contradiction as we assumed that Builder plays a winning strategy.
	
	All rule instances $\begin{prooftree}
		\hypo{\Sigma_1}
		\hypo{\cdots}
		\hypo{\Sigma_n}
		\infer[left label = \Ru]3[]{\Sigma}
	\end{prooftree}$ of type 2 -- 5 are cumulative regarding the unannotated sequent, meaning that $\Sigma^-$ is a componentwise subset of $\Sigma_i^-$ for all $i = 1,..,n$.
	For all cumulative rule instances this is clear. The only non-cumulative ones are of type 5: rule instances where the principal formula is in focus. Yet we may assume that, if a rule instance with principal formula $\phi^f$ is chosen, then $\phi^u$ is in the sequent $\Sigma^d$ as well. Otherwise, either \AxLit or a cumulative and productive instance of cut with cut formula $\phi^u$ would be applicable depending on whether $\mybar{\phi}^u \in \Sigma^d$.	Hence, even though $\phi^f$ may not be in a premiss of the rule instance, $\phi^u$ is.
	Therefore, we inductively obtain $\Plab^d(\rho)^- = \Plab^d(\last(\rho))^-$.
\end{proof}

We can now define the model $\bbS^f = (S^f,R^f,V^f)$. We let $S^f$ be the set of local paths in $\calT$. We define $R^f = \{R_a^f\}_{a \in \Act}$ as follows:
\[\rho R_a \tau \quad :\Leftrightarrow \quad \rho \overset{a}{\to} \tau \text{ or } \tau \overset{\conv{a}}{\to} \rho\]
The valuation is defined as $V^f(p) \isdef \{\rho \in S^f \| p \in \Plab(\rho)^-\}$.

Next we define a strategy $\underline{f}$ for $\eloi$ in $\calE \isdef \calE(\bbS^f)$. This is done as follows:
\begin{enumerate}
	\item At $(\phi \lor \psi, \rho)$ pick $\phi$ if $\phi \in \Plab(\rho)^-$ and $\psi$ else.
	\item At $(\ldiap[a]\phi,\rho)$ choose some $\tau$ such that $\rho \overset{a}{\to} \tau$ by virtue of a rule instance of $\RuDia[a]$ with principal formula $\ldiap[a]\phi^f$ and as few applications of \RuF as possible.
	\item At $(\ldiap[\alpha \cup \beta]\phi,\rho)$ we make a case distinction:
	\begin{enumerate}
		\item If $\ldiap[\alpha] \phi$ is not in $\Plab(\rho)^-$, pick $\ldiap[\beta]\phi$,
		\item If $\ldiap[\beta] \phi$ is not in $\Plab(\rho)^-$, pick $\ldiap[\alpha]\phi$,
		\item Otherwise both $\ldiap[\alpha]\phi$ and $\ldiap[\beta]\phi$ are in $\Plab(\rho)^-$. If $\ldiap[\alpha \cup \beta] \phi \in \Plab^l(\rho)^-$ and at the rule instance \[\begin{prooftree}
			\hypo{\ldiap[\alpha]\phi^f, \Plab^l(\rho)^u \| \Plab^r(\rho)^u}
			\hypo{\ldiap[\beta]\phi^f,\Plab^l(\rho)^u \| \Plab^r(\rho)^u}
			\infer[left label=\RuChoiceD]2{\ldiap[\alpha \cup \beta] \phi^f,\Plab^l(\rho)^u \| \Plab^r(\rho)}
		\end{prooftree}\] 
		Builder chooses the left premiss, then pick $\ldiap[\alpha]\phi$ and if he chooses the right premiss pick $\ldiap[\beta]\phi$. 
		
		Else if at the rule instance 
		\[\begin{prooftree}
			\hypo{\Plab^l(\rho)^u \| \ldiap[\alpha]\phi^f, \Plab^r(\rho)^u}
			\hypo{\Plab^l(\rho)^u \| \ldiap[\beta]\phi^f,\Plab^r(\rho)^u}
			\infer[left label=\RuChoiceD]2{\Plab^l(\rho)^u \| \ldiap[\alpha \cup \beta] \phi^f, \Plab^r(\rho)^u}
		\end{prooftree}\]
		Builder chooses the left premiss, then pick $\ldiap[\alpha]\phi$ and if he chooses the right premiss pick $\ldiap[\beta]\phi$. 
	\end{enumerate}
	\item Analogously for $(\ldiap[\alpha^*]\phi,\rho)$.
	\item At $(\lboxp[\psi?]\phi,\rho)$ pick $\phi$ if it is in $\Plab(\rho)^-$ and $\mybar{\psi}$ else.
\end{enumerate}

\begin{remark}
    To give an intuitive explanation of the somewhat exotic definition of $\underline{f}$ we have to think about our proof strategy.
    The aim of this definition is to ensure that in an $\underline{f}$-guided $\calE(\bbS^f)$-match $\calM = (\psi_n, \rho_n)_{n<\kappa}$ it holds $\psi_n \in \Plab(\rho_n)^-$. This already guarantees that all such finite matches are won by $\eloi$.

    For infinite matches we also have to take the annotations into account. 
    In order to show that all infinite matches are won by $\eloi$ we argue by contraposition. Given an $\underline{f}$-guided infinite $\calE$-match $\calM= (\psi_n, \rho_n)_{n\in \omega}$ that is won by $\abel$ we have to find an infinite successful path in $\calT$. Because $\calM$ is won by $\abel$ there is $N \in \omega$ such that $\psi_n$ is a diamond formula for all $n \geq N$. We aim to find a path in $\calT$ where $\psi_n$ is in focus in $\Plab(\rho_n)$ for all $n \geq N$. Yet we have to be very careful, $\psi_n$ might be in focus in the left or the right component of $\Plab(\rho_n)$ and Builder's strategy $f$ might differ in the cases where $\psi_n^f$ is principal in the left or the right component. This explains our complicated definition of $\underline{f}$ in case 3(c), in which we give priority to the left component: If $\psi_M \in \Plab^l(\rho_M)$ for some $M \geq N$ (guaranteeing that $\psi_n \in \Plab^l(\rho_n)$ for all $n \geq M$) we may then find a path in $\calT$ where $\psi_n$ is in focus in the left component for all $n \geq N$. If on the other hand $\psi_n \notin \Plab^l(\rho_n)$ for all $n \geq N$ (guaranteeing that $\psi_n \in \Plab^r(\rho_n)$ for all $n \geq N$) we find a path in $\calT$ where $\psi_n$ is in focus in the right component for all $n \geq N$.
Note that this definition is only possible because we assume that the strategy $f$ for Builder is positional. 

\end{remark}

In order to show that the strategy $\underline{f}$ is well-defined we have to guarantee that at any position $(\ldiap[a]\phi,\rho)$ in a match it holds that $\ldiap[a]\phi \in \Plab(\rho)^-$. Then, at $\last(\rho)$ in $\calT$ Prover might put $\ldiap[a]\phi$ in focus and apply an $\RuDia[a]$ rule. For matches starting at the root of $\calT$ the next lemma guarantees that for any positions $(\ldiap[a]\phi,\rho)$ indeed it holds $\ldiap[a]\phi \in \Plab(\rho)^-$.

\begin{lemma}[Truth Lemma]\label{lem.truthLemma}
	Let $\psi_0 \in \Sigma^d$ and let $\rho_0$ be a local path containing $\Sigma$. Let $\calM = (\psi_n,\rho_n)_{n < \kappa}$ be an $\underline{f}$-guided $\calE$-match starting at $(\psi_0,\rho_0)$. Then for every $n < \kappa$ it holds that $\psi_n \in \Plab^d(\rho_n)^-$.
\end{lemma}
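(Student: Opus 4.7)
We proceed by induction on $n < \kappa$. The base case $n=0$ is immediate: $\Sigma$ is a position occurring on the local path $\rho_0$, so $\Sigma^d \subseteq \Plab^d(\rho_0)^-$, and the assumption $\psi_0 \in \Sigma^d$ yields the claim. For the inductive step we assume $\psi_n \in \Plab^d(\rho_n)^-$ and split cases on the outermost structure of $\psi_n$. The engine driving every case is that $\Plab^d(\rho_n)^-$ is a \emph{saturated} set: Prover's priority-based strategy in $\calT$ (priorities 2 and 4) exhausts every cumulative and productive rule applicable to the component before any modal step is taken, and the analytic cut $\RuACut$ additionally ensures that for each $\chi \in \FLN(\Plab^d(\rho_n)^-)$ either $\chi$ or $\mybar{\chi}$ already lies in this component.

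For Boolean compounds and compound program constructors the $\calE$-move stays within the same local path, so $\rho_{n+1}=\rho_n$. Abelard-owned cases (conjunction, $\lboxp[\alpha \cup \beta]$, $\lboxp[\alpha^{*}]$, $\ldiap[\psi?]$) correspond to cumulative rules, so saturation deposits every admissible successor into $\Plab^d(\rho_n)^-$. The Eloise-owned cases (disjunction, $\ldiap[\alpha \cup \beta]$, $\ldiap[\alpha^{*}]$, $\lboxp[\psi?]$) correspond to non-cumulative rules in which Builder has already picked one premise along $\rho_n$; here the somewhat exotic definition of $\underline{f}$ is tailored to follow that choice by inspecting $\Plab(\rho_n)^-$. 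Compositional cases ($\ldiap[\alpha;\beta]$, $\lboxp[\alpha;\beta]$) are direct consequences of cumulative saturation. For $\psi_n = \ldiap[a]\phi$, the strategy $\underline{f}$ selects a $\tau$ with $\rho_n \tom{a} \tau$ arising from a $\RuDia[a]$-instance on the $d$-side whose principal formula is $\ldiap[a]\phi^f$ (using $\RuU$ to focus, and closing any intervening local path by $\RuF$ if necessary); inspecting the modal rule places $\phi$ in the premise's component $d$, giving $\phi \in \Plab^d(\tau)^-$.

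The most delicate case is $\psi_n = \lboxp[a]\phi$, where Abelard picks any $\tau$ with $\rho_n R_a^f \tau$. In the forward subcase $\rho_n \tom{a} \tau$, the $\RuDia[a]$-instance at the end of $\rho_n$ (say on side $d'$) captures $\lboxp[a]\phi$ inside the $\lboxp[a]\Sigma$- or $\lboxp[a]\Pi$-block of the appropriate component, and the shape of the rule forces $\phi$ to appear in the premise's component $d$. The main obstacle is the converse-induced subcase $\tau \tom{\bar a} \rho_n$: here the $\RuDia[\bar a]$-instance at the end of $\tau$ has its premise equal to the start of $\rho_n$, and $\lboxp[a]\phi$ must have entered $\rho_n$'s start through the $\lboxp[\bar a]$-block (the $\ldiap[a]$-shift being type-incompatible with a box), so that $\lboxp[\bar a]\lboxp[a]\phi$ lies in the end sequent of $\tau$ in component $d$. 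Consequently $\phi \in \FLN(\Plab^d(\tau)^-)$, and the analytic-cut saturation of $\Plab^d(\tau)^-$ yields $\phi \in \Plab^d(\tau)^-$ itself, the dual possibility $\mybar{\phi} \in \Plab^d(\tau)^-$ being excluded by propagating Builder's winning strategy along the proof. The interplay between the converse modality in the modal rule and the $\FLN$-closure provided by analytic cut is the pivotal technical point, and is precisely what motivates the inclusion of the analytic cut rule in $\CPDLf$.
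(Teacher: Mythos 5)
Your overall skeleton (induction on $n$, saturation of $\Plab^d(\rho_n)$ for the local cases, the modal rule for $\ldiap[a]\chi$, and a case split on the direction of the $R_a$-edge for $\lboxp[a]\chi$) matches the paper, and the forward subcase of the box is handled correctly. The genuine gap is in the converse subcase $\rho_{n+1} \tom{\conv{a}} \rho_n$. Your key step asserts that $\lboxp[a]\phi$ ``must have entered $\rho_n$'s start through the $\lboxp[\conv{a}]$-block'', i.e.\ that $\lboxp[\conv{a}]\lboxp[a]\phi$ occurs in the end sequent of $\rho_{n+1}$. This is unwarranted: $\Plab^d(\rho_n)^-$ collects formulas from the \emph{whole} local path, and $\lboxp[a]\phi$ may first appear strictly inside $\rho_n$, generated by local rules from some other formula in the start sequent (e.g.\ from $\lboxp[\tau?]\lboxp[a]\phi$, from a conjunction, or introduced by an analytic cut), in which case it sits in no block of the modal premiss at all; you also overlook that it could be the formula under the focused $\ldiap[\conv{a}]$. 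The paper gets the needed closure fact by a different route, which is exactly why it uses \emph{strong} induction: since $\bbS^f$ is a forest and the last step of the path $\rho_0\cdots\rho_{n+1}$ goes downwards, $\rho_{n+1}$ equals some earlier $\rho_i$ with $i<n$; the induction hypothesis at $i$ gives $\psi_i \in \Plab^d(\rho_i)^-$, and since $\lboxp[a]\phi \in \FL(\psi_i)$ (it occurs later on the same trace of the match), one gets $\lboxp[a]\phi \in \FLN(\Plab^d(\rho_{n+1})^-)$. Your ordinary induction cannot supply this, and your substitute argument for it fails.

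The second half of the converse case is also not an argument as stated: ``the dual possibility $\mybar{\phi} \in \Plab^d(\tau)^-$ being excluded by propagating Builder's winning strategy along the proof'' names no mechanism. The actual mechanism is concrete and lives at $\rho_n$, not at $\tau$: if $\mybar{\phi} \in \Plab^d(\rho_{n+1})^- = \Plab^d(\last(\rho_{n+1}))^-$, then $\mybar{\phi}^u$ occurs in the conclusion of the $\RuDia[\conv{a}]$-instance between $\rho_{n+1}$ and $\rho_n$, and since $\ldiap[a]\mybar{\phi} \in \FLN$ of that sequent (it is the negation of $\lboxp[a]\phi$, which is in the closure by the previous step), the side condition of the modal rule forces $\ldiap[a]\mybar{\phi}$ into the premiss, hence $\ldiap[a]\mybar{\phi} \in \Plab^d(\rho_n)^-$; together with $\lboxp[a]\phi \in \Plab^d(\rho_n)^-$ this makes $\AxLit$ applicable, contradicting saturation of $\Plab^d(\rho_n)$. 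Without these two repairs (the forest/strong-induction argument for $\FLN$-membership, and the converse-injection clash at $\rho_n$) the pivotal case of the lemma is not established.
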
 
\begin{proof}
	We prove the claim by strong induction on $n$. The base case holds by assumption. For the induction step let $\psi_n \in \Plab^d(\rho_n)^-$, we need to show that $\psi_{n+1} \in \Plab^d(\rho_{n+1})^-$. We proceed by a case distinction on the shape of $\psi_n$. If $\psi_n$ is not of the form $\lboxp[a]\chi$ or $\ldiap[a]\chi$ for some action $a$, then $\rho_{n+1} = \rho_n$ and the claim easily follows from the fact that $\Plab^d(\rho_n)$ is a saturated set and the definition of $\underline{f}$.
	
	Assume $\psi_n = \lboxp[a]\chi$, then $\psi_{n+1} = \chi$. In this case either $\rho_n \overset{a}{\to} \rho_{n+1}$ or $\rho_{n+1} \overset{\conv{a}}{\to} \rho_{n}$. First assume that $\rho_n \overset{a}{\to} \rho_{n+1}$. Then $\lboxp[a]\chi^u$ is in the conclusion of the rule instance of \RuDia[a] between $\rho_n$ and $\rho_{n+1}$, hence $\chi^u$ is in its premiss and therefore $\chi \in \Plab^d(\rho_{n+1})^-$.
	
	Now consider the case $\rho_{n+1} \overset{\conv{a}}{\to} \rho_{n}$. We first show that $\lboxp[a] \chi \in \FLN(\Plab^d(\rho_{n+1})^-)$. As $\bbS^f$ is a forest and $\rho_0...\rho_{n+1}$ forms a path in $\bbS^f$ starting at one of the roots, where the last step of the path is downwards, there has to be an $i \in \{ 0,...,n-1 \}$ with $\rho_i = \rho_{n+1}$. In the match $\calM$ there are positions $(\psi_i,\rho_i)$ and $(\lboxp[a]\chi, \rho_n)$, hence $\lboxp[a]\chi \in \FL(\psi_i)$. By induction hypothesis $\psi_i \in \Plab^d(\rho_i)^- = \Plab^d(\rho_{n+1})^-$ and thus $\lboxp[a]\chi \in \FL(\Plab^d(\rho_{n+1})^-)$.
	
	Towards a contradiction assume that $\chi \notin \Plab^d(\rho_{n+1})^-$. 
	Then, because $\Plab^d(\rho_{n+1})$ is a saturated set and $\chi \in \FLN(\Plab^d(\rho_{n+1})^-)$ it holds that $\mybar{\chi} \in \Plab^d(\rho_{n+1})^-$. Let $I$ be the rule instance of \RuDia[\conv{a}] between $\rho_{n+1}$ and $\rho_n$. The formula $\mybar{\chi}^u$ is in the conclusion of $I$, therefore $\ldiap[\conv{\conv{a}}]\mybar{\chi}^u = \ldiap[a] \mybar{\chi}^u$ is in its premiss as $\ldiap[a] \mybar{\chi} \in \FLN(\Plab^d(\rho_{n+1})^-)$. This implies $\ldiap[a] \mybar{\chi} \in \Plab^d(\rho_{n})^-$, contradicting the fact that $\Plab^d(\rho_{n})$ is a saturated set and $\lboxp[a] \chi \in \Plab^d(\rho_{n})^-$.
	
	Lastly assume that $\psi_n = \ldiap[a]\chi$ and $\psi_n \in \Plab^d(\rho_n)^-$. Then $\psi_n^o \in \last(\rho_n)^d$ and by the definition of the strategy $\underline{f}$ it holds that $\rho_n \tom{a} \rho_{n+1}$. If $\psi_n^f \in \Plab^d(\rho_n)$, this means in $\calT$ a rule instance of $\RuDia^d$ with principal formula $\rho_n^f$ is applied and therefore $\psi_{n+1}^f \in \Plab^d(\rho_{n+1})$. Otherwise Prover may first put $\psi_n$ in focus and then apply the rule instance $\RuDia^d$, yielding the same sequent.
\end{proof}

\begin{lemma}\label{lem.truthLemmaDiamonds}
Let $\Psi$ be a split sequent. Let $\psi_0^f \in \Psi^d$ and let $\rho_0$ be a local path containing $\Psi$. Let $\calM = (\psi_n,\rho_n)_{n \in \omega}$ be an $\underline{f}$-guided $\calE$-match starting at $(\psi_0,\rho_0)$ such that for all $n \in \omega$ it holds that $\psi_n$ is a diamond formula, and if $\psi_n = \ldiap[\phi?]\chi$ then $\psi_{n+1} = \chi$.
	
If either $d = l$ or $d = r$ and $\psi_n \notin \Plab^l(\rho_n)^-$ for all $n \in \omega$, then $\psi_n^f \in \Plab^d(\rho_n)$ for all $n \in \omega$. If additionally $\psi_n$ is of the form $\psi_n = \ldiap[a] \chi$, then $\psi_n^f \in \last(\rho_n)^d$.	
\end{lemma}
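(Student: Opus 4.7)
The plan is to prove the lemma by induction on $n$, strengthening the inductive hypothesis so that it asserts both parts of the statement simultaneously: that $\psi_n^f \in \Plab^d(\rho_n)$, and that if $\psi_n = \ldiap[a]\chi$ for an atomic action $a$, then $\psi_n^f \in \last(\rho_n)^d$. The key observation underlying the whole argument is that within a local path, the only rules that act on a formula in focus are the productive type-5 instances of $\RuConD$, $\RuChoiceD$, $\RuStarD$ and $\RuTestD$. None of these can take an atomic-action diamond $\ldiap[a]\chi$ as principal formula, so once such a formula appears in focus somewhere in a local path, it persists in focus through all sequents above in that path, including $\last(\rho)$.

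The base case is immediate: from $\psi_0^f \in \Psi^d$ and the fact that $\Psi$ occurs on $\rho_0$ we read off $\psi_0^f \in \Plab^d(\rho_0)$, and if $\psi_0 = \ldiap[a]\chi$ the persistence observation gives $\psi_0^f \in \last(\rho_0)^d$. For the inductive step I would case-split on the outermost program of $\psi_n$. If $\psi_n = \ldiap[a]\chi$ with $a$ atomic, then by the inductive hypothesis $\psi_n^f \in \last(\rho_n)^d$, so by the definition of $\underline{f}$ the transition $\rho_n \tom{a} \rho_{n+1}$ is realised by applying $\RuDia[a]^d$ to $\psi_n^f$ (no $\RuF$ needed, since at most one formula is in focus), placing $\chi^f = \psi_{n+1}^f$ in the first sequent of $\rho_{n+1}$ on side $d$; if $\chi$ is again an atomic-action diamond, persistence lifts this to $\last(\rho_{n+1})^d$. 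If $\psi_n$ is a composite-program or star diamond, then $\rho_{n+1} = \rho_n$ and Prover is forced by priority 5 to apply the corresponding type-5 rule to $\psi_n^f$ somewhere in $\rho_n$; its premiss on side $d$ contains $\psi_{n+1}^f$, and persistence again gives the atomic-diamond refinement when applicable. The test-diamond case uses that $\RuTestD$ applied to a focused $\ldiap[\phi?]\chi^f$ produces a premiss containing $\chi^f = \psi_{n+1}^f$ on the same side.

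The delicate point will be the choice/star case, in which $\underline{f}$ and Builder's positional strategy $f$ must be synchronised. This is where the asymmetry of clauses 3 and 4 in the definition of $\underline{f}$, together with the side-specific hypothesis, is essential: when $d = l$, the hypothesis that $\psi_n \in \Plab^l(\rho_n)^-$ (inherited from the inductive step) means the strategy looks at the $\RuChoiceD^l$/$\RuStarD^l$ instance applied to $\psi_n^f$ in $\rho_n$ and copies Builder's premiss choice; when $d = r$, the standing assumption $\psi_n \notin \Plab^l(\rho_n)^-$ rules out the left-priority clauses and forces the strategy to read off the right-side instance instead. I expect the only real obstacle to be bookkeeping these coherence conditions carefully enough to show that the premiss picked by Builder, the sequent reached in $\rho_{n+1}$, and the disjunct selected by $\underline{f}$ all agree, so that the focused annotation of $\psi_{n+1}$ really does land on side $d$ rather than being lost or transferred to the opposite component.
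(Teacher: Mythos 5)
Your proposal is correct and follows essentially the same route as the paper: induction on $n$ with a case distinction on the outermost program of $\psi_n$, using the persistence of a focused atomic diamond along a local path (which the paper leaves implicit in the step ``$\psi_n^f \in \last(\rho_n)^d$''), the definition of $\underline{f}$ together with the side condition $\psi_n \notin \Plab^l(\rho_n)^-$ for $d = r$ to synchronise $\eloi$'s choices with Builder's in the $\RuChoiceD$/$\RuStarD$ cases, and the lemma's standing assumption for the test case. Carrying the $\last(\rho_n)^d$ claim as part of the inductive hypothesis rather than deriving it inside the atomic case is an inessential presentational difference.
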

\begin{proof}
We prove the claim by induction on $n$. The base case holds by assumption. For the induction step let $\psi_n^f \in \Plab^d(\rho_n)$. We need to show that $\psi_{n+1}^f \in \Plab^d(\rho_{n+1})$. We proceed by a case distinction on the shape of $\psi_n$. If $\psi_n$ is not of the form $\ldiap[a]\chi$ for some action $a$, then $\rho_{n+1} = \rho_n$.

	The cases $\psi_n = \ldiap[\alpha \cup \beta] \chi$ and $\psi_n = \ldiap[\alpha^*]\chi$ follow by the definition of the strategy $\underline{f}$. In the case $d=r$ we need the extra condition that $\psi_n \notin \Plab^l(\rho_n)^-$ for all $n \in \omega$ to guarantee that the choice of $\eloi$ coincides with the choice of Builder in the proof search game.
	The case where $\psi_n = \ldiap[\alpha;\beta]\chi$ is clear and the case $\psi_n = \ldiap[\phi?]\chi$ follows by assumption. 
	
	Lastly assume that $\psi_n = \ldiap[a]\chi$ and $\psi_n^f \in \Plab^d(\rho_n)$. Then $\psi_n^f \in \last(\rho_n)^d$ and by the definition of the strategy $\underline{f}$ it holds that $\rho_n \tom{a} \rho_{n+1}$. Hence in $\calT$ Prover applies a rule instance of $\RuDia^d$ with principal formula $\psi_n^f$ and therefore $\psi_{n+1}^f \in \Plab^d(\rho_{n+1})$.
\end{proof}

\begin{lemma}\label{lem.completenessWinning}
	Let $\psi_0 \in \Sigma$ and let $\rho_0$ be a local path containing $\Sigma$. Then the strategy $\underline{f}$ is winning for $\eloi$ in $\calE@(\psi_0,\rho_0)$.
\end{lemma}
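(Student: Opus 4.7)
The plan is to handle finite and infinite $\underline{f}$-guided matches separately. Finite matches will be treated via the Truth Lemma together with the saturation of $\Plab(\rho)$. Infinite matches will be handled by contradiction: any match lost by Eloise will be turned into an infinite $f$-guided successful $\calG$-match, contradicting that $f$ is winning for Builder.

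First, if a full match is finite it ends at a position $(\psi,\rho)$ with $\psi$ atomic. By Lemma \ref{lem.truthLemma}, $\psi \in \Plab(\rho)^-$; and since Builder is winning and axioms receive top priority in Prover's strategy, no split axiom is applicable to $\Plab(\rho)$. Hence $\bot \notin \Plab(\rho)^-$ and $p, \mybar{p}$ are not both in $\Plab(\rho)^-$. The only remaining possibilities are $\psi = \top$; $\psi = p$ with $\rho \in V^f(p)$; and $\psi = \mybar{p}$ with $\rho \notin V^f(p)$, and in each of these Eloise wins by the terminal rules of $\calE$.

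For the infinite case, suppose for contradiction that $\calM = (\psi_n,\rho_n)_{n \in \omega}$ is an $\underline{f}$-guided infinite match lost by Eloise. By the parity winning condition, cofinitely many $\psi_n$ are diamond formulas, and by Proposition \ref{p:inftr} infinitely many of these are diamond fixpoint formulas. I will pass to a suitable tail of $\calM$ and a side $d \in \{l,r\}$ on which to apply Lemma \ref{lem.truthLemmaDiamonds}. If along the diamond tail there exists $M$ with $\psi_M \in \Plab^l(\rho_M)^-$, restart the argument from index $M$ with $d = l$; otherwise, the Truth Lemma gives $\psi_n \in \Plab^r(\rho_n)^- \setminus \Plab^l(\rho_n)^-$ throughout the tail, and we take $d = r$. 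By a further shift (in particular choosing the tail to begin at an index where $\psi$ has just been brought into focus on side $d$ by an $\RuU$ step), the hypotheses of Lemma \ref{lem.truthLemmaDiamonds} become applicable; the lemma then yields $\psi_n^f \in \Plab^d(\rho_n)$ throughout the tail, with $\psi_n^f \in \last(\rho_n)^d$ whenever $\psi_n = \ldiap[a]\chi$.

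From this tail I extract an infinite path in $\calT$ by concatenating the local paths $\rho_n$ with the modal rule instances linking consecutive $\rho_n$'s. By the focus information above, this path carries a formula in focus on side $d$ cofinitely often; moreover, whenever $\psi_n$ is a diamond fixpoint formula (infinitely often, by Proposition \ref{p:inftr}), the corresponding rule instance inside $\rho_n$ unfolds $\psi_n^f$ as its principal formula. Hence the extracted path is $f$-guided and successful in $\calG$, contradicting that $f$ is winning for Builder. The hard part will be the side-determination step combined with the analysis of Abelard's choices at test-diamond positions; the left-priority built into case 3(c) of the definition of $\underline{f}$ is precisely what makes the focus side eventually stabilise, enabling the $f$-guided successful path to be built consistently on one side of the split sequents.
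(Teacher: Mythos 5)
There is a genuine gap in your treatment of infinite matches: your construction of the successful path in $\calT$ ``by concatenating the local paths $\rho_n$ with the modal rule instances linking consecutive $\rho_n$'s'' only produces an infinite path when the match contains infinitely many steps at positions of the form $(\ldiap[a]\chi,\rho)$ with $a$ atomic. But an infinite match lost by \Eloise need not contain any such steps from some point on: think of an endless unfolding such as $\ldiap[(\top?)^*]\bot \to \ldiap[\top?]\ldiap[(\top?)^*]\bot \to \ldiap[(\top?)^*]\bot \to \cdots$, where the state never changes. In that case $\rho_n$ is eventually constant, every local path is finite, and there are no modal links left to concatenate, so your extracted path is finite and no contradiction with Builder's winning strategy is obtained. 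Moreover, your claim that ``the corresponding rule instance inside $\rho_n$ unfolds $\psi_n^f$ as its principal formula'' is unjustified in this situation: the formula actually in focus in $\calT$ along $\rho_n$ is whatever Prover chose (possibly from the other component), and need not be any $\psi_n$, so the infinite focused trace you need is simply not realised along the run of the match.

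The paper's proof splits exactly on this point. When infinitely many $\psi_n$ are of the form $\ldiap[a_n]\chi$, the successful path is the one you describe, and its success is witnessed by the modal rule instances themselves (whose principal formula $\psi_m^f$ is in focus in $\last(\rho_m)^d$ by Lemma~\ref{lem.truthLemmaDiamonds}; the focus is installed by the premiss of the preceding $\RuDia[a]$ application, not by a separate focusing step --- note also that it is \RuF, not \RuU, that puts a formula into focus). When from some $K$ on no $\psi_n$ is of atomic modal form, one must instead build a \emph{fresh} branch of $\calT$: at $\last(\rho_K)=\Pi\|\Xi$ Prover puts $\psi_K$ in focus and then applies type-5 rules to it forever, producing sequents $\psi_{K+n}^f,\Pi^u\|\Xi^u$. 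Making this work requires (i) that Builder's \emph{positional} strategy $f$ answers these rule instances with the same choices that $\underline{f}$ made in the evaluation game (this is where the left-priority clause 3(c) and the WLOG arrangement that \Abelard never takes the test at $\ldiap[\phi?]\chi$-positions beyond some index enter), and (ii) that at test unfoldings the test formula lies in the saturated set $\Pi$, so the single premiss is reachable without an axiom firing. You flag the test-diamond analysis and clause 3(c) as ``the hard part,'' but without the separate stuck-in-one-world construction the contradiction argument does not go through.
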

\begin{proof}
	Let $\calM$ be an $\underline{f}$-guided $\calE@(\psi_0,\rho_0)$-match. If $\calM$ is finite it is straightforward to check that it is winning for $\eloi$. Thus we consider the case where $\calM = (\psi_n,\rho_n)_{n \in \omega}$ is infinite and assume that it is winning for $\abel$. Then there is $N \in \omega$ such that $\psi_n$ is a diamond formula for all $n \geq N$.  Without loss of generality we may assume that $N$ is big enough such that for all formulas $\psi_n$ of the form $\ldiap[\phi?]\chi$ (with $n \geq N$) it holds that $\psi_{n+1} = \chi$. 
	
Let $M \geq N$ be such that $\psi_M \in \Plab^l(\rho_M)^-$, or if such an $M$ does not exist (meaning that $\psi_n \notin \Plab^l(\rho_n)$ for all $n \geq N$), let $M \isdef N$. In the first case let $d\isdef l$ and in the second let $d \isdef r$. In both cases it holds that $\psi_n \in \Plab(\rho_n)^-$ for all $n \in \omega$ by Lemma \ref{lem.truthLemma}.
	\medskip
	
	We first assume that for infinitely many $n$ the formula $\psi_n$ is of the form $\ldiap[a_n]\chi$ for some (atomic) program $a_n$. Let $K \geq M$ such that $\psi_K$ is of the form $\ldiap[a]\chi$, then $\psi_{K+1} = \chi$ and $\chi^f \in \Plab^d(\rho_{K+1})$.
	By Lemma \ref{lem.truthLemmaDiamonds} we have $\psi_n^f \in \Plab^d(\rho_n)$ for all $n > K$. Additionally, for every $n >  K$ there is some $m \geq n$ such that $\psi_m = \ldiap[a_m] \chi$ and $\psi_m^f \in \last(\rho_n)^d$. But if the last sequent on $\rho_n$ has a formula in focus, then all sequents in $\rho_n$ have a formula in focus for all $n > K$, as no \RuF rule is applied on local paths. Between the local paths $\rho_n$ and $\rho_{n+1}$ with $\rho_n \neq \rho_{n+1}$ no \RuF rule is applied either, as by the definition of $\underline{f}$ applications of \RuF are minimized. Thus there is an infinite path in $\calT$ where cofinitely many sequents have a formula in focus and infinitely many rule instances with principal formula in focus are applied. This contradicts the assumption that Builder plays a winning strategy.
	\medskip
	
Now assume that for some $K \geq M$ there is no $n \geq K$ such that the formula $\psi_n$ is of the form $\ldiap[a]\chi$. As a consequence $\rho_n = \rho_N$ for all $n \geq K$. 
	
Let $\last(\rho_K) = \Pi \| \Xi$, we show how to obtain an infinite successful path in $\calT$ starting at $\Pi \| \Xi$ contradicting the assumption that Builder plays a winning strategy. 
Assume that $d = l$, the case where $d = r$ is analogous. 
At $\Pi \| \Xi$ no rule instance of type 1 -- 5 is applicable, thus in $\calT$ Prover may put $\psi_K$ in focus to obtain a node $\psi_K^f, \Pi^u \| \Xi^u$. We will show that there is an infinite path $\Sigma_0 i_0 \Sigma_1 i_1 ...$ in $\calT$, where for all $n \in \omega$ it holds $\Sigma_n = \psi_{K+n}^f, \Pi^u \| \Xi^u$ and $i_n$ is a rule instance of type 5. 
	
Inductively assume that the position $\psi_{K+n}^f, \Pi^u \| \Xi^u$ is in $\calT$. Following her strategy in $\calT$ Prover may pick a rule instance of type 5 with principal formula $\psi_{K+n}^f$. We proceed by a case distinction on the shape of $\psi_{K+n}$. 
	
If $\psi_{K+n} = \ldiap[\alpha \cup \beta] \chi$, then $\ldiap[\alpha \cup \beta] \chi \in \Plab^l(\rho_N)^-$ and thus $\eloi$ picks the correct formula according to Builder's strategy $\underline{f}$.
Analogously for $\psi_{K+n} = \ldiap[\alpha^*]\chi$.
The case where $\psi_{K+n} = \ldiap[\alpha;\beta]\chi$ is clear. 
If $\psi_{K+n} = \ldiap[\phi?]\chi$, then by the definition of $N$, and the fact that $K \geq N$, it follows
$\psi_{K+n+1} = \chi$. The rule instance $i_n$ only has one premiss $\phi^u, \chi^f, \Pi^u \| \Xi^u$. As $\Pi$ is saturated either $\phi$ or $\mybar{\phi}$ is in $\Pi$. If $\phi \in \Pi$ we have shown the induction step, if on the other hand $\mybar{\phi} \in \Pi$ then an axiom would be applicable, contradicting the fact that Builder's strategy is winning. 
	
Thus there is an infinite path in $\calT$ where cofinitely many sequents have a formula in focus and infinitely often a rule instance with principal formula in focus is applied, contradicting that Builder's strategy $f$ is winning. 
\end{proof}

\setcounter{theoremRem}{\thetheorem}
\setcounter{theorem}{\getrefnumber{thm.completenessSplit}}
\addtocounter{theorem}{-1}
\begin{theorem}[Completeness]
	If $\Gamma,\Delta$ is unsatisfiable then there is a uniform \SCPDLf proof of $\Gamma \| \Delta$.
\end{theorem}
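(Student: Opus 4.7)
My plan is to prove completeness by contraposition via the proof search game $\calG(\Sigma)$ with $\Sigma = \Gamma \| \Delta$. By positional determinacy (which applies because $\calG(\Sigma)$ is a parity game as noted in the paper), if there is no uniform $\SCPDLf$ proof of $\Sigma$, then Builder has a \emph{positional} winning strategy $f$ from $\Sigma$. From such an $f$ I will construct a pointed Kripke model $(\bbS^f, s)$ and a strategy $\underline{f}$ for $\eloi$ in $\calE(\bbS^f)$ witnessing $\bbS^f, s \sat_{\underline{f}} \Gamma \cup \Delta$.

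The model will be built from a subtree $\calT$ of the game tree of $\calG(\Sigma)@\Sigma$ in which Builder plays $f$ and Prover, aiming for uniformity, prioritises rule instances as follows: first close axioms, then apply cumulative/productive rules to unfocused formulas in unfocused components, then unblocking \RuU rules, then cumulative/productive rules to unfocused formulas in focused components, then productive rules where the principal formula is in focus, and finally modal rules, cumulative \RuF or conceding \RuU. A key technical observation is that maximal paths in $\calT$ containing none of the last category of rules, which I will call \emph{local paths}, are necessarily finite: rules of the earlier categories are either cumulative productive (of which there can be only finitely many relative to $\FLN(\Sigma)$) or productive with a focused principal formula, and an infinite succession of the latter would give a successful path, contradicting that $f$ is winning for Builder. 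Taking local paths as states, I will put $\rho R^f_a \tau$ iff $\rho$ and $\tau$ are separated in $\calT$ by a single $\RuDia[a]$-rule (up to \RuF and conceding \RuU), or symmetrically by a $\RuDia[\conv{a}]$-rule; propositional variables are valued by their occurrence in the saturated set $\Plab(\rho)^-$ of formulas labelling $\rho$.

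The strategy $\underline{f}$ for $\eloi$ in $\calE(\bbS^f)$ follows the saturation of $\Plab(\rho)^-$ for the boolean cases and picks successor local paths via the appropriate modal edge for $\ldiap[a]$-positions. The subtle point is at disjunctive diamond positions such as $\ldiap[\alpha\cup\beta]\phi$ when \emph{both} disjuncts lie in $\Plab(\rho)^-$: here $\underline{f}$ must choose in accordance with Builder's \emph{positional} move at the corresponding $\RuChoiceD$ instance, giving priority to whichever side contains the formula in the left component of $\rho$'s label. This asymmetry is what ultimately lets me align an infinite $\calE$-match with an infinite path in $\calT$.

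The core of the proof is a Truth Lemma asserting that for any $\underline{f}$-guided $\calE$-match $(\psi_n, \rho_n)_{n<\kappa}$ starting at a position $(\psi_0, \rho_0)$ with $\psi_0 \in \Plab^d(\rho_0)^-$, we have $\psi_n \in \Plab^d(\rho_n)^-$ throughout. The boolean and $\ldiap[a]$ cases are direct; the delicate case is $\psi_n = \lboxp[a]\chi$ when $\rho_{n+1}$ is a predecessor of $\rho_n$ in the tree $\calT$ (via a $\conv{a}$-edge in reverse), where I will use the analyticity of $\RuDia$ together with $\lboxp[a]\chi \in \FLN(\Plab^d(\rho_{n+1})^-)$ and saturation to force $\chi \in \Plab^d(\rho_{n+1})^-$. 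Finally, to rule out that $\abel$ wins an infinite $\underline{f}$-guided match, I will argue that from such a match one can extract an infinite path in $\calT$ on which cofinitely many sequents have a formula in focus and infinitely many rule instances apply to it, contradicting that $f$ is winning. The main obstacle here, and the reason for the delicate definition of $\underline{f}$, is that the focused diamond of an infinite match could oscillate between the left and right components; I handle this by fixing one component $d \in \{l,r\}$ at some stage $M$ past which the focused formula remains on side $d$, and then invoking an auxiliary lemma (analogous to Lemma \ref{lem.truthLemmaDiamonds}) to track focus persistence on that side all the way along the extracted path in $\calT$.
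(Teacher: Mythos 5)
Your proposal follows essentially the same route as the paper's own proof: contraposition via the proof-search game, the priority-ordered subtree $\calT$ of the game tree, finiteness of local paths, the model whose states are local paths with converse-aware edges, the left-priority definition of $\underline{f}$ at choice and iteration positions, the truth lemma whose delicate case is the backward $\lboxp[a]$ step handled via $\FLN$-membership and saturation, and the final focus-tracking argument that extracts a successful infinite path in $\calT$ to contradict Builder's winning strategy. The only steps you leave implicit --- that a priority-respecting winning strategy for Prover yields a regular uniform $\SCPDLInfty$ proof and that Lemma~\ref{lem.CPDLfIffCPDLInfty} turns it into a finite uniform $\SCPDLf$ proof --- are glossed at the same level of detail in the paper itself.
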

\begin{proof}
	We assumed that $f$ is a winning strategy for Builder in $\calG(\Gamma\|\Delta)@(\Gamma\|\Delta)$. Let $\rho_0$ be a local path in $\calT$ containing $\Gamma \| \Delta$. Then Lemma \ref{lem.completenessWinning} shows that $\bbS^f, \rho_0 \sat_{\underline{f}} \Gamma, \Delta$, which implies that $\Gamma, \Delta$ is satisfiable. By contraposition this means that for every unsatisfiable sequent $\Gamma, \Delta$ Prover has a winning strategy in $\calG(\Gamma\|\Delta)@(\Gamma\|\Delta)$ and thus there is a $\SCPDLInfty$ proof $\pi$ of $\Gamma \| \Delta$, which may be assumed to be regular. By our restriction on the strategy of Prover $\pi$ is actually a uniform proof.
    
Using Lemma \ref{lem.CPDLfIffCPDLInfty} we obtain an \SCPDLf proof $\rho$ of $\Gamma\|\Delta$. By a quick inspection on the proof of Lemma \ref{lem.CPDLfIffCPDLInfty} it follows that $\rho$ is uniform because $\pi$ is uniform.
This concludes the proof.
\end{proof}
\setcounter{theorem}{\thetheoremRem}

\end{document}